\def \Rbrack {[\![}
\def \Lbrack {]\!]}
\newcommand\COMP{\hbox{C\kern -.58em {\raise .54ex \hbox{$\scriptscriptstyle |$}}
\kern-.55em {\raise .53ex \hbox{$\scriptscriptstyle |$}} }}
\newcommand\NN{\hbox{I\kern-.2em\hbox{N}}}
\newcommand\RR{{\it \hbox{I\kern-.2em\hbox{R}}}}
\newcommand\sRR{{\it \hbox{I\kern-.2em\hbox{R}}}}
\newcommand\ZZ{{{\rm Z}\kern-.28em{\rm Z}}}
\newcommand\be{\begin{equation}}
\newcommand\ee{\end{equation}}
\newtheorem{theorem}{Theorem}[section]
\newtheorem{assumption}[theorem]{Assumptions}
\newtheorem{proposition}[theorem]{Proposition}
\newtheorem{lemma}[theorem]{Lemma}
\newtheorem{corollary}[theorem]{Corollary}
\newtheorem{definition}[theorem]{Definition}
\newtheorem{remark}[theorem]{Remark}
\newtheorem{remarks}[theorem]{Remarks}
\newtheorem{examples}[theorem]{Examples}
\begin{document}

\title{Explicit Description of HARA Forward Utilities and Their Optimal Portfolios\thanks{This research is totally financed by the Natural Sciences
and Engineering Research Council of Canada (NSERC) through Grant G121210818. \newline The authors are very grateful to Freddy Delbaen and Jun Deng for their valuable advices and suggestions that helped improving the paper, and to Christoph Frei for informing them about Anthropelos (2013).}}

\author{Tahir Choulli\thanks{Corresponding author, Tel.: 1 780 492 9078, Fax: 1 780 492 6826, tchoulli@ualberta.ca} and Junfeng Ma\\
\\
Mathematical and Statistical Sciences Department \\
              University of Alberta, Edmonton\\
              Alberta, T6G 2G1, Canada}          
%

\maketitle

\begin{abstract}
This paper deals with forward performances of HARA type. Precisely, for a market model in which stock price processes are modeled by a locally bounded $d$-dimensional semimartingale, we elaborate a complete and explicit characterization for this type of forward utilities. Furthermore, the optimal portfolios for each of these forward utilities are explicitly described. Our approach is based on the minimal Hellinger martingale densities that are obtained from the important statistical concept of Hellinger process. These martingale densities were introduced recently, and appeared herein tailor-made for these forward utilities. After outlining our parametrization method for the HARA forward, we provide illustrations on discrete-time market models. Finally, we conclude our paper by pointing out a number of related open questions.
\end{abstract}


\section{Introduction}
Since the seminal papers of Merton (1971, 1973), the theory of utility maximization and optimal portfolio has been developed successfully in many directions and in different frameworks. These achievements can be found in the works of Karatzas and Wang (2000), Kramkov and Schachermayer (1999), Cvitanic et al. (1992, 2001), Karatzas and Zitkovic (2003), and the references therein to cite few. In these works, the authors considered a fixed investment horizon and practically neglected the impact of a variable horizon on the optimal selection portfolio and/or investor's behavior. The economic problem of how a horizon will impact an investment is old and can be traced back to Fisher (1931). In mathematical context, this problem is very difficult and only recently there were some advances. However, the problem where the agent should find an optimal portfolio from her investment in the stock market and the optimal time to liquidate all her assets (tradable or not tradable) was around since a while and has been addressed in many ways. For the literature about this problem, we refer the reader to Evans et al. (2008), Henderson and Hobson (2007), and the references therein. A particular interesting approach for this problem was proposed by Henderson and Hobson (2007) where the authors proposed to transfer the problem of investment and liquidation time to the problem of finding a utility that is not sensitive to the horizon. The authors called this utility functional, when it exists, a horizon-unbiased utility. This is one of the ideas that contributed to the birth of {\it forward utilities}. Indeed, the notion of forward utilities (or forward performances) appeared in the literature in various forms through mainly the works of Musiela--Zariphopoulou (2007, 2009a, 2009b, 2010), the work of Henderson--Hobson (2007), and the work of Choulli et al. (2007) (see Choulli and Stricker (2005, 2006) for other related topics). The forward utilities constitute a subclass of the large class of random field utilities that was brought to mathematical finance by Karatzas and Zitkovic (2003). These random field utilities appeared first in economics within the {\it random utility model theory} due to the psychometric literature that provided empirical evidence about the stochastic choice behavior. For details about these themes, we refer the reader to Suppes et al. (1989), McFadden and Richter (1970), Cohen (1980), and Clarck (1996) and the references therein to cite few. A random field utility represents the preference of an agent (or the agent's impatience as called in Fisher (1930)), which is updated at each instant using the available aggregate flow of public information about the market.  Among these random field utilities, forward utility has the feature of being a supermartingale for the wealth process generated by an admissible and self-financing strategy, while it is a martingale at the optimum. \\

\noindent After its birth, the notion of forward utilities has been extensively studied in different context and generalized in many ways. Indeed, forward utilities were used  in the context of risk measures in  Zariphopoulou and Zitkovic (2010). In Henderson (2007) and Anthropelos (2013), the notion of forward utilities is mainely and extensively used for indifference pricing and/or evaluation. There were many attempts to characterize these forward utilities starting with Berrier, Rogers and Tehrenchi (2009) under strong assumptions on the market models. Afterwards, Zitkovic (2009) elaborated a duality characterization for the semimartingale market model with explicit characterization when the market model is driven by Brownian motions and the utility is of exponential-type. At last, in Choulli et al. (2011), the forward utility of exponential type is completely and explicitly described in the semimartingale framework.\\

\noindent The main aim of this paper is to elaborate an explicit parametrization for forward utilities of HARA type. Precisely, we are interested in forward utilities, $U(t,x)$, having the form of
\begin{equation}\label{Theclass}
U(t,\omega,x)=D(t,\omega)x^{p(\omega,t)},\ \ \ \ \ \ \mbox{or}\ \ \ \  U(t,\omega,x)=\overline{D}(t,\omega)+{\widehat D}(t,\omega)\log(x).\end{equation}
We will describe explicitly the dynamics of the processes $D$, $p$, $\overline{D}$ and ${\widehat D}$ as well as the optimal portfolios for each class of forward utility. This will be achieved due to the concept of minimal Hellinger martingale densities introduced and developed in Choulli et al. (2007). The results of the actual paper ---Sections 3 and 4--- can not be put together with those of Choulli et al. (2011) in a unified framework. This fact is obvious from the different forms found for the risk-aversion processes in this current paper and in Choulli et al. (2011) respectively. The concept of Hellinger process/integral appeared in statistics and/or information theory, where it plays important r\^oles. It was extended and then slightly modified, for a better use in mathematical finance, by many scholars such as Jacod, Kabanov, Shiryaev, Stricker and Choulli. For more details, we refer the reader to Kabanov (1985), Kabanov et al. (1984, 1986), Chouli and Stricker (2005, 2006) and the references therein.\\

This paper is organized as follows. Section 2 will introduce the mathematical model as well as some preliminaries and notation. In Sections 3 and 4, we will detail our parametrization for the HARA forward utilities, while Section 5 illustrates the obtained characterizations on discrete-time market models.  The paper concludes the study by signaling some related open problems. Other technical and/or intermediatory results are gathered in the Appendix.

\section{Preliminaries and Notation}
This section contains two subsections. The first subsection presents the mathematical model as well as its preliminary analysis. The second subsection introduce the economical concepts of random field utility and forward utilities.
\subsection{The Mathematical Model}
The mathematical model starts with a given filtered probability
space denoted by $\left(\Omega,{\cal F}, {\mathbb F},P\right)$ where the filtration $\mathbb F:=({\cal F}_t)_{0\leq t\leq
T}$ is complete and right continuous,
and $T$ represents a fixed horizon for investments. In this setup,
we consider a $d$-dimensional {\bf locally bounded} semimartingale $S=(S_t)_{0\leq t\leq
T}$ which represents the discounted price processes of $d$ risky
assets.

Next, we recall the definition of the predictable characteristics of
the semimartingale $S$ (see Section II.2 of Jacod and Shiryaev (2003)).
 The random
measure $\mu$ associated to its jumps is defined by
 $$
 \mu(d t,\ d x)=\sum I_{\{\Delta S_s\not=0\}}\delta_{(s,\ \Delta S_s)}(d t,\ d x),
 $$
 with $\delta_a$ the Dirac measure at point $a$.
The continuous local martingale part of $S$ is denoted by $S^{c}$.
 This leads to the following decomposition, called ``{\it the
 canonical representation}'' (see Theorem 2.34, Section II.2 of Jacod and Shiryaev (2003)), namely,
\begin{equation}\label{modelS}
 S=S_0+S^{c}+x\star (\mu-\nu)+B,
 \end{equation}
 where the
random measure $\nu$ is the compensator of the random measure $\mu$. The entries of the matrix $C$ are $C^{ij}:=\langle S^{c,i},
S^{c,j}\rangle $, and the triple $(B,\ C,\ \nu)$ is called {\it predictable characteristics} of $S$.
  Furthermore, we can find a version of the characteristics triple satisfying
\begin{equation}\label{modelSbis} B=b\cdot A,\ \ C=c\cdot A\ \ \mbox{and}\ \
\nu(\omega,\ d t,\ d x)=d A_t(\omega)F_t(\omega,\ d x).
\end{equation} Here $A$ is an increasing and predictable process, $b$
and $c$ are predictable processes,
 $F_t(\omega,\ d x)$ is a predictable kernel, $b_t(\omega)$ is a vector in $\hbox{I\kern-.18em\hbox{R}}^d$ and
$c_t(\omega)$ is a symmetric $d\times d$-matrix , for all $(\omega,\
t)\in\Omega\times [0,\ T]$. In the sequel we will often drop
$\omega$ and $t$ and write, for instance, $F(d x)$ as a shorthand
for $F_t(\omega , d x)$.

\noindent The characteristics, $B,\ C,$ and $\nu$,  satisfy
$$
\begin{array}{l}
F_t(\omega,\ \{0\})=0,\hskip 0.6cm \displaystyle{\int} (\vert
x\vert^2\wedge 1)F_t(\omega,\
d x)\leq 1,\hskip 0.6cm \Delta B_t=\displaystyle{\int} x\nu(\{t\}, d x), \\
\\
 c=0\ \ \mbox{ on }\ \{\Delta A\neq 0\} .
 \end{array}
$$
We set
$$
\nu_t(d x):=\nu (\{t\}, d x),\ \  a_t:=\nu_t
(\hbox{I\kern-.18em\hbox{R}}^d)=\Delta A_t
F_t(\hbox{I\kern-.18em\hbox{R}}^d)\leq 1.
$$

\noindent We denote by $ \mathbb{P}_a$ (respectively $\mathbb{P}_e$) the set of all probability measures that are absolutely
continuous with respect to (respectively equivalent to) $P$.
 The set of local martingales
 under the probability $Q$ is denoted by ${\cal M}_{loc}(Q)$, while the set of equivalent martingale measures for $S$ is given by
  \begin{equation}\label{non-arbitrage}
  {\cal M}_{loc}^e(S):=\Bigl\{ Q\in \mathbb{P}_e:\ \ S\ \mbox{is a local martingale under}\ Q\Bigr\}.\end{equation}

\noindent As usual, ${\cal A}^+$ denotes the set of increasing,
right-continuous, adapted and integrable processes.\\

\noindent If ${\cal C}$ is a class of processes,
 we denote by ${\cal C}_0$ the set of processes $X$ with $X_0=0$ and
 by ${\cal C}_{loc}$
 the set  of processes $X$
 such that there exists a sequence of stopping times, $(T_n)_{n\geq 1}$,
 increasing stationarily to $T$ (i.e., $P(T_{n}= T)\rightarrow1$
as $n\rightarrow \infty$) and the stopped process $X^{T_n}$ belongs
to ${\cal C}$. We put $ {\cal C}_{0,loc}={\cal C}_0\cap{{\cal
C}}_{loc}$.\\

\noindent Very frequently,
throughout the paper, we will work with local martingale densities instead of
equivalent martingale measures. For this, we will use the following sets of densities
\begin{equation}\label{sigmadensities}
{\cal Z}^e_{loc}(S):=\Bigl\{Z={\cal E}(N)>0\ \Big|\ \ N\ \mbox{and}\ ZS\ \ \mbox{are local martingales}\Bigr\},\end{equation}
and/or
\begin{equation}\label{setofdernsities} {\cal
Z}^e_{q,loc}(S):=\Bigl\{Z={\cal E}(N)\in {\cal Z}^e_{loc}(S)\Big|\ \ \sum f_{q}(\Delta N)\in {\cal A}^+_{loc}\Bigr\},\end{equation}
where for any $r\in{\hbox{I\kern-.18em\hbox{R}}}$, the function $f_r$ is given by
\begin{equation}\label{def-fq}
f_{r}(x):=\left\{
            \begin{array}{llll}
              \displaystyle\frac{(1+x)^{r}-1-rx}{r(r-1)},\hskip 1cm  \hbox{if $r\not\in\{0,1\}$ and $x>-1$,} \\
              x-\log(1+x),\hskip 1.7cm \mbox{if $r=0$ and $x>-1$},\\
            (1+x)\log(1+x)-x,\hskip 0.55cm \mbox{if $r=1$ and $x\geq -1$,}\\
              +\infty,\hskip 3.3cm  \mbox{otherwise.}
            \end{array}
          \right.
\end{equation}

For a probability measure $R$ which is equivalent to $P$, the function $f_r$ together with the function $\Phi^R_r$ and the set ${\cal D}^+$ will play important roles in our analysis. Throughout the paper, $x^{tr} y$ will denote the inner product of $x$ and $y$ in $\mathbb R^d$. The function $\Phi^R_r$ ---also denoted by $\Phi^R_r=\Phi_r$ when $R=P$--- takes values in $(-\infty, +\infty]$ and is given by

\begin{equation}\label{Phip}
\Phi^R_r(\lambda):={{\lambda^{tr} b^R}\over{r-1}}+{1\over{2}}\lambda^{tr} c\lambda+\int f_r(\lambda^{tr} x)F^R(dx),\ \ \ \ \forall\ \lambda\in \mathbb R^d,\ \ \forall\ r\not=1,\end{equation}
while the set ${\cal D}$ is defined by
\begin{equation}\label{setD}
{\cal D}^+:=\left\{\theta\in \hbox{I\kern-.18em\hbox{R}}^d\ :\ \ 1+\theta^{tr} x>0,\ \ \ \ \ \ F(dx)-\mbox{almost all}\ x\in\hbox{I\kern-.18em\hbox{R}}^d\right\}.\end{equation}

\noindent The next technical assumption is crucial for the explicit forms of our main results. For some $p\in (-\infty,1)$ and  a positive local martingale $M={\cal E}(N)$, $N_0=0$, let ($\beta,f,g,N'$) be Jacod's parameters of $N$ (see Theorem \ref{representation} for details) and put the measure $F^M_t(dx):=(1+f_t(x))F_t(dx)$. Then, our main assumption that depends on $(p,M)$ is
\begin{assumption}\label{crucialassumption} For any predictable process $\lambda$ such that $\lambda\in \mathcal{D}^+$, $P\otimes A$-a.e., and every sequence of predictable processes, $(\lambda_n)_{n\geq 1}$, such that $\lambda_n\in int(\mathcal{D}^+)$,
and $\lambda_n\rightarrow \lambda$ $P\otimes A$-a.e., we have
\begin{equation}\label{assumptionA(p,R)}
\lim_{n\rightarrow +\infty}\int K_p(\lambda_n^{tr}x)F^M(dx)=\left\{
                                                                  \begin{array}{llll}
                                                                    +\infty, & \hbox{\quad on\quad $\Gamma$;} \\
                                                                    \\
                                                                    \displaystyle\int K_p(\lambda^{tr}x)F^M(dx), & \hbox{\quad on\quad$\Gamma^c$.}
                                                                  \end{array}
                                                                \right.\end{equation}
where $K_p(y):=y\left[1-(1+y)^{p-1}\right]$ and $\Gamma:=\{F^M(\mathbb R^d)>0$ and $\lambda\notin int(\mathcal{D}^+)\}$.
\end{assumption}

\begin{examples}\label{assA-example}
The {\bf Assumption \ref{crucialassumption}} is a technical condition essential for the solution of our stochastic optimization to belong to the interior of the effective domain of the functional that we minimize. Below, we cite some models that satisfy {\bf Assumption \ref{crucialassumption}}.\\
1) Assumption \ref{crucialassumption} is automatic when $S$ is continuous since, in this case, $F\equiv 0$ and $\Gamma=\emptyset$.\\
2) Assumption \ref{crucialassumption} is fulfilled for any model such that
 $$F_t(dx)=\sum_{i=1}^{n_t}\delta_{x_i(t)}(dx),\ \ \ \ \ \ P\otimes A-a.e.,$$ where $x_i(t)$ is a process with values in $\mathbb R^d$ and $n_t$ is a process with values in $\mathbb N$. Examples of models ---for which their kernels $F$ are atomic--- can be found in risk theory such as
 $X_t=\displaystyle\sum_{j=1}^{N_t}Y_j.$
 Here $(Y_i)_{i\geq 1}$ are iid and independent of the Poisson process $N$, and
 $
\displaystyle\sum_{k=1}^n P(Y_1=y_k)=1.$
Here $y_k\in \mathbb R^d$ and $n\in\mathbb N$, and $X_t$ represents the aggregate claims up to time $t$. 
\end{examples}

\noindent On the set $\Omega\times [0,\ T]$, we define two
$\sigma$-fields denoted by, ${{\cal O}}$ and ${{\cal P}}$, generated
by the adapted and RCLL processes and the adapted and continuous
processes respectively. On the set $\Omega\times [0,\
T]\times\mathbb R^d$, we consider the
$\sigma$-field ${\widetilde{\cal P}}={{\cal P}}\otimes{{\cal
B}}(\mathbb R^d)$ (resp. $\widetilde{{\cal
O}}={{\cal O}}\otimes{{\cal B}}(\mathbb R^d))$,
where ${\cal B}(\mathbb R^d)$ is the Borel $\sigma$-field for $\mathbb R^d$.\\
For any ${\widetilde{{\cal O}}}$-measurable functional, $g$, (hereafter
denoted by $g\in{\widetilde{{\cal O}}}$), we define $M^P_{\mu}(g\ |\
{\widetilde{{\cal P}}})$ to be the unique ${\widetilde{{\cal
P}}}$-measurable functional, when it exists,
 such that for any bounded $W\in{\widetilde{{\cal P}}}$,
$$
M^P_{\mu}(Wg):=E\Bigl(\int_0^T\int_{\mathbb R^d}
W( s,x)g( s,x)\mu( d s, d x)\Bigr) = M^P_{\mu}\Bigl(WM^P_{\mu} (g\
|\ {\widetilde{{\cal P}}})\Bigr).
$$

\noindent Throughout the paper, we will use frequently the following order to compare two stochastic processes.

\begin{definition}\label{order}
Let $X$ and $Y$ be two processes such that $X_0=Y_0$. Then, we write
$$ X\preceq Y$$ if $Y-X$ is a nondecreasing process. \end{definition}

\noindent This order was used to defined the minimal Hellinger martingale measures (or densities) in recent papers, see for instance Choulli and Stricker (2005, 2006) and Choulli et al. (2007). Below, we define a slight extension of the Hellinger process that incorporates the local change of probabilities.

\begin{definition}\label{hellinger-changemeasure}
(i) Let $Q$ be a probability measure and $Y$ be a $Q$-local
martingale such that $1+\Delta Y\geq 0$. Then, if the RCLL
nondecreasing process \begin{equation}\label{VEofY}
V^{(q)}(Y)={1\over{2}}\langle Y^c\rangle +\sum f_{q}(\Delta Y),
 \end{equation}
 is $Q$-locally integrable (i.e. $V^{(q)}(Y)\in {\cal A}^+_{loc}(Q)$), then its $Q$-compensator is called the Hellinger process of order $q$ of the local martingale $Y$ (or equivalently of ${{\cal E}}(Y)$) with respect to $Q$, and is denoted by $h^{(q)}(Y,Q)$ (respectively $h^{(q)}({{\cal E}}(Y),Q)$).\\
 (ii) Let $N\in{\cal M}_{0,\ loc}(P)$ such that $1+\Delta N> 0$ and $Y$ is a semimartingale such that $Y{{\cal E}}(N)$ is a $P$-local martingale and $1+\Delta Y\geq 0$. Then, if the process
 \begin{equation}\label{VeNNbar}
 {1\over{2}}\langle Y^c\rangle +\sum(1+\Delta N)f_{q}(\Delta Y),\end{equation}
 is $P$-locally integrable, then its $P$-compensator is called the Hellinger process of order $q$ of the local martingale ${{\cal E}}(Y)$ with respect to ${{\cal E}}(N)$, and is denoted by
 $h^{(q)}\left({{\cal E}}(Y),{{\cal E}}(N)\right)$.
\end{definition}

\begin{definition}\label{qHellinger}
Let $q$ be a real number. We call the minimal Hellinger martingale density of order $q$, the process $\widehat Z$ that belongs to ${\cal Z}_{q,loc}^e(S,Z)$ and satisfies
\begin{equation}\label{dominationHellinger}
h^{(q)}({\widehat Z},P)\preceq h^{(q)}({Z},P),\ \ \ \ \ \ \forall\ \ \ \ Z\in{\cal Z}_{q,loc}^e(S).\end{equation}\end{definition}

\noindent The following lemma proves that the concept of minimal Hellinger martingale density is stable under localization, and thus it is enough to describe this minimal martingale density locally. This lemma was first established in Choulli et al. (2011) for the case of exponential utility (i.e. $q=1$, the minimal entropy-Hellinger martingale density).

\begin{lemma}\label{LemmaforMEHM} Let $(T_n)_{n\geq 0}$ ($T_0=0$) be a sequence of stopping times that increases stationarily to $T$, and let $q\in\mathbb R$.
Suppose that for each $n$, $S^{T_n}$ admits the minimal Hellinger martingale density of order $q$ (called hereafter in the shorthand form of MHM density of order $q$), denoted by $\widetilde Z^{(n)}$. Then, $S$ admits the MHM density of order $q$, denoted by $\widetilde Z$ and is given by
$$
\widetilde Z:={{\cal E}}(\widetilde N)\ \ \ \ \mbox{and}\ \ \ \
\widetilde N:=\sum_{n\geq 1} I_{\Lbrack T_{n-1}, T_{n}\Lbrack}
{1\over{\widetilde Z^{(n)}_{-}}}\cdot \widetilde Z^{(n)}.$$
\end{lemma}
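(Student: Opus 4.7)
\noindent The plan is to verify that $\widetilde{Z}$ lies in $\mathcal{Z}^e_{q,loc}(S)$ and then to exploit the minimality of each localized density $\widetilde{Z}^{(n)}$ on the stochastic intervals induced by $(T_n)_{n\geq 0}$. The guiding principle is that both membership in $\mathcal{Z}^e_{q,loc}(S)$ and the Hellinger process $h^{(q)}(\cdot,P)$ are intrinsically local quantities, so pasting along $(T_n)$ commutes with both.

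First, I would show that $\widetilde{N}$ is a well-defined local martingale null at zero and that $\widetilde{Z}=\mathcal{E}(\widetilde{N})$ is strictly positive. On each $\Lbrack T_{n-1},T_n\Lbrack$ the integrand $1/\widetilde{Z}^{(n)}_-$ is locally bounded (as $\widetilde{Z}^{(n)}>0$) and the integrator $\widetilde{Z}^{(n)}$ is a local martingale, so the $n$th summand is a local martingale null at $T_{n-1}$. Pasting yields $\widetilde{N}^{T_n}\in\mathcal{M}_{0,loc}(P)$ for every $n$, whence $\widetilde{N}\in\mathcal{M}_{0,loc}(P)$ via the localizing sequence $(T_n)$. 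Moreover $\Delta\widetilde{N}=\Delta\widetilde{Z}^{(n)}/\widetilde{Z}^{(n)}_->-1$ on $\Lbrack T_{n-1},T_n\Lbrack$, so $\widetilde{Z}>0$.

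Second, I would identify $\widetilde{Z}$ piecewise with $\widetilde{Z}^{(n)}$: from $d\widetilde{Z}/\widetilde{Z}_-=d\widetilde{Z}^{(n)}/\widetilde{Z}^{(n)}_-$ on $\Lbrack T_{n-1},T_n\Lbrack$, one gets $\widetilde{Z}_t/\widetilde{Z}_{T_{n-1}}=\widetilde{Z}^{(n)}_t/\widetilde{Z}^{(n)}_{T_{n-1}}$ on that interval. Since $S^{T_n}\widetilde{Z}^{(n)}$ is a local martingale, a direct computation of the product dynamics on each piece shows that $(S\widetilde{Z})^{T_n}$ is a local martingale, whence $S\widetilde{Z}\in\mathcal{M}_{loc}(P)$. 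The same identification shows that the jumps of $\widetilde{N}$ on $\Lbrack T_{n-1},T_n\Lbrack$ coincide with those of the stochastic logarithm $\widetilde{N}^{(n)}$ of $\widetilde{Z}^{(n)}$, so on $[0,T_n]$ one has the domination $\sum_{s\leq T_n}f_q(\Delta\widetilde{N}_s)\leq\sum_{k\leq n}\sum_{s\geq 0}f_q(\Delta\widetilde{N}^{(k)}_s)$, which is $P$-locally integrable by the assumption $\widetilde{Z}^{(k)}\in\mathcal{Z}^e_{q,loc}(S^{T_k})$. Consequently $\widetilde{Z}\in\mathcal{Z}^e_{q,loc}(S)$.

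Third, for the minimality, the Hellinger process of order $q$ is defined as the compensator of $\frac{1}{2}\langle\widetilde{N}^c\rangle+\sum f_q(\Delta\widetilde{N})$, so it depends only on the continuous quadratic variation and the jumps of $\widetilde{N}$. Combined with the piecewise identification above and the stability of compensators under stopping, this yields $h^{(q)}(\widetilde{Z},P)^{T_n}=h^{(q)}(\widetilde{Z}^{(n)},P)^{T_n}$ for every $n$. Given any $Z\in\mathcal{Z}^e_{q,loc}(S)$, the stopped density $Z^{T_n}$ lies in $\mathcal{Z}^e_{q,loc}(S^{T_n})$, and the minimality of $\widetilde{Z}^{(n)}$ gives $h^{(q)}(\widetilde{Z}^{(n)},P)\preceq h^{(q)}(Z^{T_n},P)=h^{(q)}(Z,P)^{T_n}$. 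Combining the two produces $h^{(q)}(\widetilde{Z},P)^{T_n}\preceq h^{(q)}(Z,P)^{T_n}$ for every $n$, and as $(T_n)$ increases stationarily to $T$ the desired $h^{(q)}(\widetilde{Z},P)\preceq h^{(q)}(Z,P)$ follows.

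The main technical obstacle I anticipate is the rigorous piecewise gluing of the Hellinger processes across the partition $(T_n)$: one has to verify that compensation and the operators $\sum f_q(\Delta\cdot)$ and $\langle\cdot^c\rangle$ are compatible with the family of indicators $\{I_{\Lbrack T_{n-1},T_n\Lbrack}\}_{n\geq 1}$ without introducing spurious boundary contributions at the stopping times $T_n$ themselves. Once this bookkeeping is settled, the rest of the argument is essentially a routine localization.
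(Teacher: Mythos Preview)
Your approach is correct and matches what the paper has in mind; indeed, the paper simply declares the proof ``immediate'' and omits it, and your outline is precisely the routine localization argument the authors are gesturing at.

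One small imprecision worth flagging: the equality $h^{(q)}(\widetilde{Z},P)^{T_n}=h^{(q)}(\widetilde{Z}^{(n)},P)^{T_n}$ is not automatic, because on $[0,T_{n-1}]$ the stochastic logarithm $\widetilde{N}$ is built from $\widetilde{Z}^{(1)},\dots,\widetilde{Z}^{(n-1)}$, not from $\widetilde{Z}^{(n)}$, and nothing in the hypotheses forces $(\widetilde{Z}^{(n)})^{T_{n-1}}=\widetilde{Z}^{(n-1)}$. What your piecewise identification actually gives is that the \emph{increments} of $h^{(q)}(\widetilde{Z},P)$ and $h^{(q)}(\widetilde{Z}^{(n)},P)$ agree on $\Lbrack T_{n-1},T_n\Lbrack$. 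This is exactly what you need: since $\preceq$ is a statement about nondecreasing differences, the minimality of $\widetilde{Z}^{(n)}$ against $Z^{T_n}$ yields that $h^{(q)}(Z,P)-h^{(q)}(\widetilde{Z},P)$ is nondecreasing on each $\Lbrack T_{n-1},T_n\Lbrack$, and hence globally. This is the ``bookkeeping'' you yourself flagged, and with the increment formulation it goes through cleanly.
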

{\bf Proof.} The proof of this lemma is immediate and will be omitted.\qed
\subsection{Preliminaries on Random Field Utilities}
Throughout the paper, we call a random field utility, any ${\cal B}([0,T])\otimes{\cal B}(\mathbb R)\otimes{\cal F}$-measurable
 functional, $U(t,x,\omega)$ satisfying:\\
 (i) for any fixed $x$, the process $U(t,x,\omega)$ is a c\`adl\`ag and adapted process,\\
 (ii) and for any fixed $(t,\omega)$ the function $x\mapsto U(t,x,\omega)$ is strictly increasing and strictly concave.\\

 \noindent For a random field utility $U(t,x,\omega)$, a probability measure $Q$, a semimartingale $X$, and $x\in\mathbb R$
  such that $U(t,x,\omega)<+\infty $, we denote by
 \begin{equation}\label{admissibleset}
\displaystyle{\cal A}_{adm}(x, X, Q,U):=\Bigl\{\pi\in L(X)\ \Big|\
\sup_{\tau\in{{\cal T}_T}}E^Q\left[U\Bigl(\tau, x+(\pi\cdot
X)_{\tau}\Bigr)^-\right]<+\infty\Bigr\},\end{equation} the set of
admissible portfolios for the model $(x, X, Q, U)$. Here ${\cal
T}_T$ is the set of stopping times, $\tau$, such that $\tau\leq T$.
When $X=S$ and $Q=P$, for the sake of simply, we write ${\cal A}_{adm}(x,U)$.\\

\begin{definition}\label{forwardutilitiesdefi0} Consider a c\`adl\`ag semimartingale, $X$, and a probability measure, $Q$.
 Then, we call a forward utility for $(X,Q)$, any random field utility, $U:=(U(t, \omega, x))$, fulfilling the following self-generating property:\\
 a) The function $U(0,x)$ is strictly increasing and concave.\\
 b) For any $x\in (0,+\infty)$, there exists $\pi^*_x\in{\cal A}_{adm}(x,X,Q)$ such that
 $$
 U\Bigl(s,x+(\pi^*_x\cdot X)_s\Bigr)=E^Q\left[U\Bigl(t,x+(\pi^*_x\cdot X)_t\Bigr)|{\cal F}_s\right],\ \ \ \ \forall\ T\geq t\geq s\geq  0.$$
 c) For any $x\in (0,+\infty)$ and for any $\pi \in{\cal A}_{adm}(x,X,Q)$, we have
 $$
 U\Bigl(s,x+(\pi\cdot X)_s\Bigr)\geq E^Q\left[U\Bigl(t,x+(\pi\cdot X)_t\Bigr)|{\cal F}_s\right],\ \ \ \  \forall\ T\geq t\geq s\geq  0.$$
 When $X=S$ and $Q=P$, we simply call $U$ a forward utility.\end{definition}

 \begin{definition}\label{forwardutilitiesdefi}
 Let $X$ be a c\`adl\`ag semimartingale and $Q$ be a probability measure. Then, we call HARA forward utility for $(X,Q)$, any
 forward utility for $(X,Q)$ that takes one of the following forms 
 \begin{equation}\label{harafowards}
 U_p(t,x):=D(t)x^{p(t)},\ \ \ 
 U_0(t,x):={\widehat D}(t)\log\left(x\right)+\overline{D}(t),\ \ \ 
 U_e(t,x):=D_e(t)\exp\left(\gamma(t)x\right).\end{equation}
 Here $D=\left(D(t)\right)_{0\leq t\leq T}$, $p=\left(p(t)\right)_{0\leq t\leq T}$, $\gamma:=(\gamma(t))_{0\leq t\leq T}$, ${D}_e=({ D}_e(t))_{0\leq t\leq T}$, and  ${\widehat D}=({\widehat D}(t))_{0\leq t\leq T}$ and $\overline{D}=(\overline{D}(t))_{0\leq t\leq T}$ are stochastic processes.
\end{definition}

\noindent In this paper, we focus on the first and the second types of HARA utilities ( i.e. the two cases of $U_p$ and $U_0$). The exponential-type forward utilities (i.e. the case of $U_e$) is completely analyzed in Choulli et al. (2011). In our view, it is impossible to describe all cases in one single form. This fact is highly supported by the different forms that we found for the risk-aversion processes $p$ and $\gamma$.

\begin{proposition}\label{PropertiesOfForward}
Let $U:=U(t,\omega,x)$ be a random field utility and $S$ be a semimartingale. Then the following hold.\\
(i) If $U$ is a forward utility for $(S,P)$, then for any stopping time $\tau\in\mathcal{T}_T$, the functional
\begin{equation}\label{Ubar} \overline U(t,\omega,x):=U(t\wedge
\tau(\omega),\omega,x),\end{equation}
is a forward dynamic utility for $(S^{\tau},P)$.\\
(ii) Consider a probability measure $Q$ that is absolutely
continuous with respect to $P$ with the density process denoted by
$Z$. Then, the random field utility \begin{equation}\label{UQ}
U^Q(t,\omega,x):=U(t,\omega,x)Z_t(\omega),\end{equation} is a
forward utility for $(S,P)$ if and only if $U$
is forward utility for $(S,Q)$.
\end{proposition}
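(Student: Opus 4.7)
For assertion (i), I would verify the three axioms of Definition \ref{forwardutilitiesdefi0} for $\overline{U}$ on $(S^{\tau},P)$. Axiom (a) is immediate, since $\tau\ge 0$ forces $\overline{U}(0,x)=U(0,x)$. For axiom (b), I would take the optimizer $\pi^{*}_{x}$ supplied by the forward-utility property of $U$ for $(S,P)$, note that $\pi^{*}_{x}\cdot S^{\tau}=(\pi^{*}_{x}I_{\Rbrack 0,\tau\Lbrack})\cdot S$, and apply optional sampling at the bounded stopping time $\tau\le T$ to the $P$-martingale $M_{t}:=U(t,x+(\pi^{*}_{x}\cdot S)_{t})$. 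This yields
\begin{equation*}
M^{\tau}_{t}=U\bigl(t\wedge\tau,\,x+(\pi^{*}_{x}\cdot S)_{t\wedge\tau}\bigr)=\overline{U}\bigl(t,\,x+(\pi^{*}_{x}\cdot S^{\tau})_{t}\bigr),
\end{equation*}
a $P$-martingale, and admissibility of $\pi^{*}_{x}$ in $\mathcal{A}_{adm}(x,S^{\tau},P,\overline{U})$ follows from the identity $E[\overline{U}(\sigma,x+(\pi^{*}_{x}\cdot S^{\tau})_{\sigma})^{-}]=E[U(\sigma\wedge\tau,x+(\pi^{*}_{x}\cdot S)_{\sigma\wedge\tau})^{-}]$, combined with $\sigma\wedge\tau\in\mathcal{T}_{T}$ and admissibility of $\pi^{*}_{x}$ for $(S,P,U)$. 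For axiom (c), given $\pi\in\mathcal{A}_{adm}(x,S^{\tau},P,\overline{U})$, I would set $\widetilde{\pi}:=\pi I_{\Rbrack 0,\tau\Lbrack}\in L(S)$ (so that $\widetilde{\pi}\cdot S=\pi\cdot S^{\tau}$); once $\widetilde{\pi}\in\mathcal{A}_{adm}(x,S,P,U)$ is verified, axiom (c) for $U$ makes $U(\cdot,x+(\widetilde{\pi}\cdot S)_{\cdot})$ a $P$-supermartingale, and optional stopping at $\tau$ delivers the desired $\overline{U}(\cdot,x+(\pi\cdot S^{\tau})_{\cdot})$.

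For assertion (ii), I would lean on Bayes' rule. First, $U^{Q}$ is a random field utility: joint measurability and c\`adl\`ag adaptedness come from $U$ and from $Z$ being a c\`adl\`ag $P$-martingale, and on $\{Z_{t}>0\}$ the map $x\mapsto Z_{t}U(t,x)$ inherits strict monotonicity and concavity from $U(t,\cdot)$ (on the absorbing set $\{Z_{t}=0\}$ the property is vacuous). Second, the admissibility sets coincide:
\begin{equation*}
E^{Q}\bigl[U(\sigma,x+(\pi\cdot S)_{\sigma})^{-}\bigr]=E\bigl[Z_{\sigma}U(\sigma,x+(\pi\cdot S)_{\sigma})^{-}\bigr]=E\bigl[U^{Q}(\sigma,x+(\pi\cdot S)_{\sigma})^{-}\bigr],
\end{equation*}
using $Z_{\sigma}\ge 0$ and $E^{Q}[Y]=E[Z_{\sigma}Y]$ for $\mathcal{F}_{\sigma}$-measurable $Y$. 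Third, the conditional Bayes formula $E^{Q}[Y\mid\mathcal{F}_{s}]=Z_{s}^{-1}E[Z_{t}Y\mid\mathcal{F}_{s}]$ on $\{Z_{s}>0\}$ (for $\mathcal{F}_{t}$-measurable $Y$, $s\le t$) shows that for any strategy $\pi$, the process $U(\cdot,x+(\pi\cdot S)_{\cdot})$ is a $Q$-martingale (resp.\ $Q$-supermartingale) iff $Z_{\cdot}U(\cdot,x+(\pi\cdot S)_{\cdot})=U^{Q}(\cdot,x+(\pi\cdot S)_{\cdot})$ is the corresponding $P$-process; on $\{Z_{s}=0\}$ both processes vanish and the equivalence is trivial. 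Axiom (a) is immediate from $Z_{0}=1$. Applying this equivalence to an optimizer and to every admissible strategy establishes axioms (b) and (c) in both directions.

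The main obstacle I anticipate lies in the admissibility step for (i)(c): admissibility of $\pi$ for $(S^{\tau},\overline{U})$ only controls $E[U(\sigma\wedge\tau,x+(\pi\cdot S^{\tau})_{\sigma\wedge\tau})^{-}]$, whereas admissibility of $\widetilde{\pi}$ for $(S,U)$ requires bounding $E[U(\sigma,x+(\pi\cdot S^{\tau})_{\sigma\wedge\tau})^{-}]$ with $U$ evaluated past $\tau$. I would circumvent this by avoiding the global admissibility detour and verifying the supermartingale inequality directly: split over $\{\sigma\le\tau\}$ and $\{\sigma>\tau\}$, note that on $[\tau,T]$ both sides of the inequality for $\overline{U}(\cdot,x+(\pi\cdot S^{\tau})_{\cdot})$ coincide with $U(\tau,x+(\pi\cdot S^{\tau})_{\tau})$, and on $[0,\tau]$ the restriction of $\pi$ is admissible for $U$ in the local sense needed to invoke axiom (c) of $U$ (by localization through the stopping times $\tau\wedge T_{n}$ reducing $\pi\cdot S$), after which optional stopping at $\tau$ yields the claim.
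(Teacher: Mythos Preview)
The paper does not actually prove this proposition: it simply states that ``the proof of this proposition is easy and can be found in Choulli et al.\ (2011)'' and stops there. So there is no paper proof to compare your proposal against; your direct verification of the three axioms in Definition~\ref{forwardutilitiesdefi0} is exactly the standard route one would expect, and parts (i)(a), (i)(b), and all of (ii) are handled correctly by optional stopping and the conditional Bayes formula as you describe.

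The one soft spot you yourself flag---admissibility in (i)(c)---is real, and your proposed workaround does not quite close it. You appeal to ``axiom (c) of $U$ in the local sense needed,'' but Definition~\ref{forwardutilitiesdefi0}(c) is stated only for strategies in $\mathcal{A}_{adm}(x,S,P,U)$; there is no local version available from the definition alone, and localizing $\pi\cdot S$ via $\tau\wedge T_n$ does not by itself manufacture the required integrability of $U(\sigma,\,x+(\pi\cdot S^{\tau})_{\sigma\wedge\tau})^{-}$ at times $\sigma>\tau$. In the paper's actual applications this never bites, because the HARA forward utilities considered always satisfy a null-strategy integrability condition (e.g.\ \eqref{null-strategy} or \eqref{log-null-strategy}) that makes $\sup_{\sigma}E[U(\sigma,c)^{-}]<\infty$ for constants $c$, and this is enough to push the admissibility through; the cited reference presumably makes this explicit. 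Your argument is therefore morally correct and more detailed than anything the paper offers, but for a self-contained proof you would need either to add such a mild hypothesis or to consult Choulli et al.\ (2011) for how the gap is filled there.
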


{\bf Proof.} The proof of this proposition is easy and can be found in Choulli et al. (2011). \qed
%
%
%





\section{Parametrization of the Power-Type Forward Utilities}\label{powercase}
In this section, we will parameterize the forward utilities of the following form
\begin{equation}\label{harapower}
 U_p(t,x):=D(t)x^{p(t)}.
\end{equation}
To this end we start by deriving some useful properties of the portfolio rate process defined in the following.
\begin{definition}\label{portfolioratedefi}
 Let $\pi$ be a portfolio (i.e. $\pi\in L(S)$) and $x>0$ such that
\begin{equation}\label{existenceofportfoliorate}
x+\pi\cdot S>0,\ \ \mbox{and}\ \ x+(\pi\cdot S)_{-}>0.\end{equation}
Then, we define the {\it portfolio rate} by
\begin{equation}\label{portfoliorate}
\theta_x:=\Bigl(x+(\pi\cdot S)_{-}\Bigr)^{-1}\pi.\end{equation}
\end{definition}

\begin{lemma} Let $\pi$ be a portfolio and $x>0$ such that (\ref{existenceofportfoliorate}) holds. Then, the following hold.\\
(i) The portfolio rate $\theta_x$ is $S$-integrable and $ {\cal E}(\theta_x\cdot S)=\left(x+\pi\cdot S\right)/x>0.$\\
(ii) There is a one-to-one correspondence between $\pi$ and its portfolio rate $\theta_x$ via (\ref{portfoliorate}) and
\begin{equation}\label{reverse}
\pi=x{\cal E}_{-}(\theta_x\cdot S)\theta_x.\end{equation}
\end{lemma}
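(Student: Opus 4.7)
The plan is to reformulate both statements in terms of the wealth process $V:=x+\pi\cdot S$ and then appeal to the characterization of the stochastic exponential as the unique solution of its defining linear SDE.

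First I would establish the $S$-integrability of $\theta_x$. The process $V_-=x+(\pi\cdot S)_-$ is adapted and left-continuous, hence predictable, and by hypothesis is strictly positive. A pathwise argument (a c\`adl\`ag strictly positive function on $[0,T]$ with strictly positive left limits attains a strictly positive infimum) shows that $V$ and $V_-$ are locally bounded away from zero; more precisely, the stopping times $\sigma_n:=\inf\{t:V_t\leq 1/n\}\wedge T$ satisfy $V_{t-}\geq 1/n$ on $[0,\sigma_n]$ and increase stationarily to $T$. Consequently $V_-^{-1}$ is a locally bounded predictable process, so the associativity of the stochastic integral gives $\theta_x=V_-^{-1}\pi\in L(S)$ together with $\theta_x\cdot S=V_-^{-1}\cdot(\pi\cdot S)$.

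Next, to identify ${\cal E}(\theta_x\cdot S)$ in assertion~(i), I would set $Z:=V/x$ (which is strictly positive) and compute
\begin{equation*}
Z_-\cdot(\theta_x\cdot S)\;=\;\frac{V_-}{x}\,V_-^{-1}\cdot(\pi\cdot S)\;=\;\frac{\pi\cdot S}{x}\;=\;Z-1,
\end{equation*}
so that $Z=1+Z_-\cdot(\theta_x\cdot S)$ with $Z_0=1$. Uniqueness of the solution to this linear SDE identifies $Z$ with ${\cal E}(\theta_x\cdot S)$, which proves the identity in~(i) and, in particular, its strict positivity.

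Finally, assertion~(ii) follows by direct inversion. From the definition of $\theta_x$ one has $\pi=V_-\theta_x$, and substituting $V_-=x\,{\cal E}_-(\theta_x\cdot S)$ from~(i) gives formula~(\ref{reverse}). The maps (\ref{portfoliorate}) and (\ref{reverse}) are then easily checked to be mutually inverse, yielding the bijection. The only delicate point in the entire argument is the localization step ensuring that $V_-^{-1}$ is locally bounded so that the associativity of the stochastic integral may be legitimately invoked; once this is settled, everything reduces to unwinding the definitions and applying the defining property of ${\cal E}$.
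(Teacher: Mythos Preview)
Your proof is correct. The paper itself omits the proof entirely, stating only that it ``is obvious and will be omitted,'' so there is no approach to compare against; your write-up supplies exactly the kind of routine verification the authors had in mind, with the localization of $V_-^{-1}$ being the only step requiring any care.
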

{\bf Proof.} The proof of this lemma is obvious and will be omitted.\qed\\

For our analysis when dealing with utilities $U_p$ ( in both cases of $p=0$ and $p\not=0$), it is more convenient to deal with the portfolio rate than the portfolio itself. Therefore we need to define the set of admissible portfolio rates that we will use throughout the paper. For any semimartingale $X$, any probability $Q$, any random utility $U$, and any initial capital $x>0$, we associate the set of admissible portfolio rates $\Theta(x,X,U,Q)$ defined by
\begin{equation}\label{porfoliorateadmissible}
\Theta(x,X,Q,U):=\Bigl\{\theta\in L(X)\ \big|\ \ \ {\cal E}(\theta\cdot X)>0\ \ \&\ \ \pi_x:=x{\cal E}_{-}(\theta\cdot X)\theta\in \mathcal{A}_{adm}(x,X,Q,U)\Bigr\}.\end{equation}
Again, when $X=S$ and $Q=P$, the set of admissible portfolio rates will be denoted by $\Theta(x,U)$ for the sake of simplicity.

\begin{proposition}\label{optimalportfolio} Suppose that $S$ is locally bounded and ${\cal Z}_{loc}^e(S)\not=\emptyset$. Let $p\in (-\infty,1)$ and consider
\begin{equation}\label{utility}
{\overline U}_p(t,x):=\left\{\begin{array}{llll}D(t)x^p,\ \ \ \mbox{if}\ \ p\not=0\\
\\
\overline{D}(t)+{\widehat D}(t)\log(x),\ \ \ \mbox{if}\ \ p=0.\end{array}\right.\end{equation}
For any $x\in (0,+\infty)$, consider the following maximization problem
\begin{equation}\label{optimization1}
\max_{\pi\in{\cal A}_{adm}(x)}E{\overline U}_p\left(T,x+(\pi\cdot S)_T\right),\end{equation}
where the set ${\cal A}_{adm}(x, {\overline U}_p)$ is defined in (\ref{admissibleset}). Then following assertions hold.\\
(1) For any $x\in (0,+\infty)$, if the solution to (\ref{optimization1}) ---that we denote by $\widetilde\pi_x$--- exists, then
\begin{equation}\label{positivity}
x+\widetilde\pi_x\cdot S>0,\ \ \ \ \ \ \mbox{and}\ \ \ x+(\widetilde\pi_x\cdot S)_{-}>0.\end{equation}
(2) The optimal portfolio rate for ${\overline U}_p$ with initial capital $x$, that we denote by $\widetilde\theta_x:=\left(x+(\widetilde\pi_x\cdot S)_{-}\right)^{-1}\widetilde\pi_x$,
is independent of $x\in (0,+\infty)$ (or equivalently $\widetilde\pi_x/x$ is independent of $x$).
\end{proposition}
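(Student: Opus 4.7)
My plan is to handle the two assertions separately, with the bulk of the work being the positivity claim and the reparametrization.

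\textbf{For assertion (1)}, I would exploit the fact that $\overline U_p(t,\cdot,\omega)$ must be extended by $-\infty$ on the non-positive half-line — this is forced by the requirement that $\overline U_p$ is strictly increasing and strictly concave on $\mathbb R$ while taking the prescribed HARA form on $(0,+\infty)$. Under this convention, whenever $x+(\widetilde\pi_x\cdot S)_\tau \leq 0$ on a set of positive probability at some stopping time $\tau\in{\cal T}_T$, the negative part $\overline U_p(\tau,\cdot)^-$ equals $+\infty$ on that set, contradicting the admissibility condition $\sup_\tau E[\overline U_p(\tau,x+(\widetilde\pi_x\cdot S)_\tau)^-]<+\infty$. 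Applying this with $\tau$ equal to the debut $\tau_0:=\inf\{t\leq T:\ x+(\widetilde\pi_x\cdot S)_t\leq 0\}$ (a stopping time by right-continuity) yields $x+\widetilde\pi_x\cdot S>0$. For the left-limit I would then take the predictable debut $\sigma:=\inf\{t\leq T:\ x+(\widetilde\pi_x\cdot S)_{t-}\leq 0\}$, pick an announcing sequence $\sigma_n\uparrow\sigma$, and apply admissibility at each $\sigma_n$ together with the blow-up behaviour of $\overline U_p$ near $0$ (for $p\leq 0$) or the finiteness of the supremum (for $p\in(0,1)$) to conclude $\sigma=+\infty$ a.s.

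\textbf{For assertion (2)}, I would perform the portfolio-rate change of variables from the preceding lemma. With $\theta_x := (x+(\pi\cdot S)_-)^{-1}\pi$ one has $x+\pi\cdot S = x\,{\cal E}(\theta_x\cdot S)$, so for any admissible $\theta$
\begin{equation*}
E\!\left[\overline U_p\bigl(T,x\,{\cal E}(\theta\cdot S)_T\bigr)\right]
 =\begin{cases} x^p\, E\!\left[D(T)\,{\cal E}(\theta\cdot S)_T^{\,p}\right], & p\ne 0,\\[4pt]
 E[\overline D(T)]+E[\widehat D(T)]\log x+E\!\left[\widehat D(T)\log{\cal E}(\theta\cdot S)_T\right], & p=0.
\end{cases}
\end{equation*}
In both cases the dependence on $x$ splits off (multiplicatively by the positive constant $x^p$ for $p\ne 0$, additively by a constant in $\theta$ for $p=0$), so the remaining $\theta$-dependent functional is identical across initial capitals. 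I would also check that the admissibility set $\Theta(x,S,P,\overline U_p)$ is $x$-independent: for $p\ne 0$ we have $\overline U_p(\tau,x{\cal E}(\theta\cdot S)_\tau)^-=x^p\bigl(D(\tau){\cal E}(\theta\cdot S)_\tau^p\bigr)^-$, so the admissibility bound changes only by the positive multiplicative constant $x^p$; for $p=0$ it changes by at most $|\widehat D(\tau)|\,|\log(y/x)|$ whose expectation is finite (the utility being a.s.\ finite-valued). Hence the optimiser of an $x$-independent functional over an $x$-independent set is itself $x$-independent, which is assertion (2).

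\textbf{Main obstacle.} The delicate step is the strict positivity of the left-limit $x+(\widetilde\pi_x\cdot S)_-$ in (1); unlike the right version, it cannot be reached by evaluating admissibility at a single stopping time (the debut of $\{X_-\leq 0\}$ is only predictable), so it needs the announcing-sequence argument combined with the divergence of $\overline U_p$ at the boundary. Once that is in place, part (2) is a direct algebraic consequence of the portfolio-rate parametrization and the homogeneity (resp.\ additivity up to a constant) of the HARA utilities in the wealth level.
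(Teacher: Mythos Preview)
Your treatment of assertion (2) is essentially the paper's: both exploit the homogeneity of $\overline U_p$ in $x$ to put the admissible sets for different initial capitals in bijection via scaling, so the optimizer is $x$-independent once written as a portfolio rate. The paper phrases this as ``$x\widetilde\pi_1\in\mathcal A_{adm}(x,\overline U_p)$ and $x^{-1}\pi\in\mathcal A_{adm}(1,\overline U_p)$'', which is exactly your observation that the $\theta$-functional and its domain do not depend on $x$.

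For assertion (1) you take a different route, and it has a gap. The paper does not use the admissibility condition or the boundary behaviour of $\overline U_p$ at all; it simply invokes Kramkov--Schachermayer (1999) to obtain that the optimal terminal wealth $x+(\widetilde\pi_x\cdot S)_T$ is strictly positive and that, for any $Z\in\mathcal Z^e_{loc}(S)$, the product $(x+\widetilde\pi_x\cdot S)Z$ is a nonnegative supermartingale. Since a nonnegative supermartingale that is strictly positive at the terminal time is strictly positive throughout, and so is its left-limit process, both inequalities in (\ref{positivity}) follow at once, with no case distinction in $p$.

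Your admissibility argument does give $x+\widetilde\pi_x\cdot S>0$ once the convention $\overline U_p=-\infty$ on $(-\infty,0]$ is adopted. The trouble is the left-limit half. For $p\in(0,1)$ (where $D>0$), one has $D(t)x^p\to 0$ as $x\downarrow 0$, so along an announcing sequence $\sigma_n\uparrow\sigma$ with $X_{\sigma_n}\downarrow 0$ you get $\overline U_p(\sigma_n,X_{\sigma_n})^-\to 0$, not $+\infty$; no contradiction with $\sup_\tau E[\overline U_p(\tau,X_\tau)^-]<\infty$ arises, and ``finiteness of the supremum'' does not rescue this. Even for $p\le 0$ the argument is incomplete as stated: concluding $E[\overline U_p(\sigma_n,X_{\sigma_n})^-]\to\infty$ from $X_{\sigma_n}\to 0$ requires controlling $D(\sigma_n)$ (respectively $\widehat D(\sigma_n)$) along the sequence, and nothing in the hypotheses prevents these coefficients from compensating. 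The supermartingale route in the paper avoids all of this and handles every $p<1$ uniformly.
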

{\bf Proof.} It is clear from Kramkov and Schachermayer (1999), that the random variable $x+(\widetilde\pi_x\cdot S)_T$ is positive, and the process $(x+\widetilde\pi_x\cdot S)Z$ is a supermartingale, for any $Z\in {\cal Z}_{loc}^e(S)\not=\emptyset$. This implies that both processes $x+\widetilde\pi_x\cdot S$ and $x+(\widetilde\pi_x\cdot S)_{-}$ are positive and assertion (1) follows. To prove assertion (2), it is enough to remark that for any $x\in (0,+\infty)$, $x\widetilde\pi_1\in {\cal A}_{adm}(x, {\overline U}_p)$, and for any $\pi\in {\cal A}_{adm}(x, {\overline U}_p)$, we have $x^{-1}\pi\in {\cal A}_{adm}(1, {\overline U}_p)$. This ends the proof of the proposition.
\qed\\

The remaining part of this section contains three subsections. The first subsection (Subsection \ref{subsection1forpower}) deals with the description of the process $p$, while the second subsection (Subsection \ref{subsection:proofoftheorem6}) focuses on the process $D$. In the last subsection (Subsection \ref{sectyionproofofTheorem4}), we detail the proof of Theorem \ref{case-pre+var-Hellinger} which represents the back-bone for the general result of the second subsection. Throughout the rest of the paper we denote by $U_p(t,x)$ the functional defined in (\ref{harapower}). This functional depends on the uncertainty $\omega\in\Omega$ also, while throughout the paper we will use the shorthand $U_p(t,x)$ for the sake of simplicity.





\subsection{The Dynamic of the Risk-Aversion Process $p$}\label{subsection1forpower}

This subsection constitutes our first major step within our parametrization of the functional defined in (\ref{harapower}). Below, we state the principal result of this subsection.
\begin{theorem}\label{pconstant}
Suppose that $S$ is locally bounded and ${\cal Z}_{loc}^e(S)\not=\emptyset$. Let $U_p(t,x)$ be defined in (\ref{harapower}) such that
\begin{equation}\label{null-strategy}
\sup_{\tau\in\mathcal{T}_{T}}E\left[D(\tau)^{-}\right]<+\infty,\ \ \mbox{and}\ \ p=\left(p(t)\right)_{0\leq t\leq T}\ \mbox{is locally bounded}.\end{equation}
 If $U_p(t,x)$ is a forward utility, then $p(\omega,t)=p(0)$ $P-$almost all $\omega\in\Omega$, and for all $t\in[0,T].$
\end{theorem}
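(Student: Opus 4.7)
The plan is to exploit the martingale property at the optimum in the definition of a forward utility, combined with the scale-invariance of the optimizer supplied by Proposition \ref{optimalportfolio}(2), and to reduce the constancy of $p$ to a conditional moment generating function identity.

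First I would invoke Proposition \ref{optimalportfolio}: for every $x>0$ the optimal portfolio has the form $\widetilde\pi_x = x\,\mathcal{E}_-(\widetilde\theta\cdot S)\,\widetilde\theta$ for a single portfolio rate $\widetilde\theta\in L(S)$ not depending on $x$, so that the corresponding wealth process is $x+\widetilde\pi_x\cdot S = x\,\mathcal{E}(\widetilde\theta\cdot S)>0$. The martingale clause (b) of Definition \ref{forwardutilitiesdefi0} then reads, for every $x>0$ and every $0\le s\le t\le T$,
\begin{equation*}
E\!\left[D(t)\,x^{p(t)}\,\mathcal{E}(\widetilde\theta\cdot S)_t^{p(t)}\,\big|\,\mathcal{F}_s\right] \;=\; D(s)\,x^{p(s)}\,\mathcal{E}(\widetilde\theta\cdot S)_s^{p(s)}.
\end{equation*}
Here the two standing assumptions (local boundedness of $p$ and $\sup_\tau E[D(\tau)^-]<+\infty$) together with a standard localization ensure the integrability needed to upgrade the supermartingale/martingale property from its local form to a genuine conditional expectation identity.

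Next I set $N(t):=D(t)\,\mathcal{E}(\widetilde\theta\cdot S)_t^{p(t)}$. Taking $x=1$ shows $N$ is a $P$-martingale. Since $\mathcal{E}(\widetilde\theta\cdot S)>0$ and $D$ has constant sign (forced by the monotonicity and concavity of $x\mapsto U_p(t,x)$, which in particular rules out $p(t)\equiv 0$), the process $|N|$ is a strictly positive $P$-martingale. Dividing the identity above by $|N(s)|$ and setting $x=e^u$ yields
\begin{equation*}
E\!\left[Z\,e^{u(p(t)-p(s))}\,\big|\,\mathcal{F}_s\right] \;=\; 1, \qquad \forall\, u\in\mathbb R,
\end{equation*}
where $Z:=|N(t)|/|N(s)|$ satisfies $E[Z|\mathcal{F}_s]=1$ and hence defines a conditional probability equivalent to $P$ on $\mathcal{F}_t$ given $\mathcal{F}_s$. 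The finiteness of the left-hand side for every real $u$ is built into the identity itself.

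Finally, I interpret the left-hand side as the conditional moment generating function of $Y:=p(t)-p(s)$ under that conditional probability. Since it is finite on the whole real line, it is smooth in $u$ and differentiation under the conditional expectation at $u=0$ gives $E[ZY|\mathcal{F}_s]=0$ and $E[ZY^2|\mathcal{F}_s]=0$, forcing $Y=0$ $P$-a.s. Choosing $s=0$ delivers $p(t)=p(0)$ $P$-a.s. for every $t\in[0,T]$.

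The main technical obstacle is harmless but needs care: the martingale identity above is a priori available for each fixed $x$ off an $x$-dependent $P$-null set, while my argument uses it for all $x>0$ at once. I would first record it for $x\in\mathbb Q_+$ (countable union of null sets), then extend to all $x>0$ by monotone convergence, exploiting the fact that $x\mapsto x^{p(t)}\,|N(t)|$ is monotone in $x$ on each of the sets $\{p(t)>0\}$ and $\{p(t)<0\}$. Once this uniformity is in hand, the MGF-constancy step closes the proof.
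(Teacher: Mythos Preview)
Your argument has a genuine gap at its very first step: the appeal to Proposition~\ref{optimalportfolio}(2) is circular. That proposition is stated for the functional $\overline U_p(t,x)=D(t)x^{p}$ with a \emph{fixed real number} $p\in(-\infty,1)$, and its proof of (2) relies on the scaling identity
\[
\overline U_p\bigl(T,x+(\pi\cdot S)_T\bigr)=D(T)\,x^{p}\bigl(1+(x^{-1}\pi\cdot S)_T\bigr)^{p},
\]
in which the prefactor $x^{p}$ is deterministic and can be pulled outside the expectation. When $p=p(T,\omega)$ is random---which is precisely the situation before Theorem~\ref{pconstant} has been proved---the factor $x^{p(T)}$ is a genuine random variable, and maximising $E\bigl[D(T)x^{p(T)}Y^{p(T)}\bigr]$ over admissible $Y$ will in general produce optimisers that depend on $x$. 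Consequently the forward-utility martingale clause gives you, for each $x>0$, an optimal rate $\widetilde\theta_x$ possibly depending on $x$, and the identity
\[
E\!\left[D(t)\,x^{p(t)}\,\mathcal{E}(\widetilde\theta_x\cdot S)_t^{p(t)}\,\big|\,\mathcal{F}_s\right]=D(s)\,x^{p(s)}\,\mathcal{E}(\widetilde\theta_x\cdot S)_s^{p(s)}.
\]
Since the process $N_x(t):=D(t)\,\mathcal{E}(\widetilde\theta_x\cdot S)_t^{p(t)}$ itself varies with $x$, you cannot divide through by a single $|N(s)|$ and isolate a conditional moment generating function of $p(t)-p(s)$; the separation of the $x^{p(t)}$ factor from the wealth factor, on which your MGF step rests, is exactly what fails.

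This is why the paper proceeds differently. For $p>0$ (Lemma~\ref{lem-p-positive}) it uses only the \emph{supermartingale} clause at the null strategy $\theta\equiv 0$, which is admissible for every $x$ by \eqref{null-strategy} and manifestly independent of $x$; the resulting inequality $E[D(\sigma)x^{p(\sigma)}\mid\mathcal F_\tau]\le D(\tau)x^{p(\tau)}$ for all $x>0$ then yields $p(\sigma)=p(\tau)$ via a Jensen/change-of-measure argument. For $p<0$ (Lemma~\ref{Cor-continuous-p}) the paper passes to the Fenchel--Legendre conjugate and uses Proposition~\ref{prop-dual-submartingale}, where the role of the free parameter $x$ is played by an arbitrary $\eta\in L^0_+(\mathcal F_t)$ and no scale-invariance of the primal optimiser is needed. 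A separate argument then shows that $p$ has constant sign. If you want to salvage your approach, you would first need an independent proof that the optimal rate for $U_p$ is $x$-free---but that is essentially equivalent to what you are trying to establish.
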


The proof of Theorem \ref{pconstant} requires two intermediary steps that will be detailed in Lemmas \ref{lem-p-positive} and \ref{Cor-continuous-p}. These two lemmas prove that Theorem \ref{pconstant} holds when $p$ has a constant sign.


\begin{lemma}\label{lem-p-positive}
Suppose that $p=(p(t))_{t\geq 0}$ is positive, (\ref{null-strategy}) holds and $U_p$ is a forward utility. Then, the process $p$ is constant in $(\omega,t)$ (i.e. Theorem \ref{pconstant} holds true in this case).
\end{lemma}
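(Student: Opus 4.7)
The plan is to extract the constancy of $p$ solely from the trivial strategy $\pi\equiv 0$. Strict monotonicity of $U_p(t,x)=D(t)x^{p(t)}$ in $x$, together with the standing hypothesis $p(t)>0$, forces $D(t)>0$ (since $\partial_x U_p=p(t)D(t)x^{p(t)-1}$ must be strictly positive). Consequently $U_p(t,x)>0$ and $U_p(t,x)^-\equiv 0$, so $\pi\equiv 0$ lies in $\mathcal{A}_{adm}(x,U_p)$ for every $x>0$. The supermartingale requirement in Definition \ref{forwardutilitiesdefi0}(c), applied to $\pi\equiv 0$, then yields, for all deterministic $0\le s\le t\le T$ and every $x>0$,
$$
D(s)\,x^{p(s)}\;\ge\;E\!\left[D(t)\,x^{p(t)}\,\big|\,\mathcal{F}_s\right]\quad P\text{-a.s.}
$$

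Restricting $x$ to the countable set $\mathbb{Q}_+$, the above holds simultaneously for every rational $x>0$ outside a single $P$-null set. Fix $s\le t$; since $p(s)$ is $\mathcal{F}_s$-measurable, dividing by $x^{p(s)}$ gives
$$
D(s)\;\ge\; E\!\left[D(t)\,x^{p(t)-p(s)}\,\big|\,\mathcal{F}_s\right].
$$
As $x\to+\infty$ along rationals, $x^{p(t)-p(s)}\to+\infty$ on $\{p(t)>p(s)\}$; combined with $D(t)>0$, Fatou's lemma yields
$$
D(s)\;\ge\;E\!\left[\liminf_{x\to+\infty}D(t)\,x^{p(t)-p(s)}\,\Big|\,\mathcal{F}_s\right]\;\ge\;(+\infty)\cdot\mathbf{1}_{\{P(p(t)>p(s)\mid\mathcal{F}_s)>0\}},
$$
and finiteness of $D(s)$ forces $P(p(t)>p(s)\mid\mathcal{F}_s)=0$ a.s., i.e.\ $p(t)\le p(s)$ a.s. The symmetric passage $x\to 0^+$ forces $p(t)\ge p(s)$ a.s. Hence $p(t)=p(s)$ $P$-a.s.\ for every fixed pair $s\le t$.

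It remains to promote this fixed-time identity to a pathwise statement. Since $U_p(\cdot,x)$ is c\`adl\`ag for each $x>0$, so are $D(t)=U_p(t,1)$ and $U_p(t,e)=D(t)e^{p(t)}$; because $D>0$, the ratio $e^{p(t)}$, and therefore $p$ itself, is a c\`adl\`ag process. Taking $s=0$ and letting $t$ range over rationals in $[0,T]$ and unioning countably many null sets, $p(t)=p(0)$ for all rational $t$, and c\`adl\`ag regularity extends the identity to every $t\in[0,T]$. The main subtlety is the $x$-dependence of the exceptional null set in the supermartingale relation, which is neutralized by the rational restriction before Fatou is invoked; the non-degeneracy $D>0$ supplied by $p>0$ and strict monotonicity is exactly what makes the Fatou step bite.
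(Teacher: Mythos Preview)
Your proof is correct. The opening is identical to the paper's: $p>0$ together with strict monotonicity forces $D>0$, whence the null strategy is admissible and the supermartingale inequality $D(s)x^{p(s)}\ge E[D(t)x^{p(t)}\mid\mathcal F_s]$ holds for every $x>0$.

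The difference lies in how the constancy of $p$ is extracted from this family of inequalities. The paper introduces the auxiliary probability $Q\propto D(\sigma)/D(\tau)\cdot P$, rewrites the inequality as a bound on $E^Q[e^{\log(x)\Delta}\mid\mathcal F_\tau]$ with $\Delta=p(\sigma)-p(\tau)$, splits $\Delta=\Delta^+-\Delta^-$, and applies Jensen's inequality before letting $x\to\infty$ and $x\to 0$. You bypass both the change of measure and Jensen entirely, applying conditional Fatou directly to $D(t)x^{p(t)-p(s)}$ along rational sequences $x\to\infty$ and $x\to 0^+$. Your route is more elementary and avoids the need to verify that $Q$ is well defined; the paper's route, working with arbitrary stopping times $\sigma\ge\tau$, yields $p(\sigma)=p(\tau)$ in one stroke without the separate c\`adl\`ag patching you perform at the end. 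Both executions rest on the same underlying observation, namely that a uniform-in-$x$ bound on an exponential in $\log x$ forces the exponent to vanish.
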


{\bf Proof.}
Since $U_p(t,x)$ is a random field utility, then it is strictly increasing in the variable $x$ for any $(t,\omega)\in [0,T]\times\Omega$. This implies that $pD>0$ and hence $D>0$ (since $p$ is positive). By stopping and using Proposition \ref{PropertiesOfForward}--(i), we can assume ---without loss of generality--- that $p$ is bounded. Thus, it is easy to see that in this case the null portfolio rate $\theta=0$ is admissible for any $x>0$ --- i.e. $0\in\Theta(x, U_p)$--- due to (\ref{null-strategy}). Therefore, the optional sampling theorem and the supermartingale property of  $U_p(t,x)$ lead to
\begin{equation}\label{power-ppositive-super}
E\left(D(\sigma)x^{p(\sigma)}|\mathcal{F}_{\tau}\right)\leq D(\tau)x^{p(\tau)},
\end{equation}
for any $x>0$ and any stopping times $\sigma$ and $\tau$ satisfying $\sigma\geq \tau$. By putting
$$Q:=\frac{D(\sigma)/D(\tau)}{E(D(\sigma)/D(\tau))}\cdot P,\ \ \ \ \ \ \ \mbox{and}\ \ \ \Delta:=p(\sigma)-p(\tau),$$
we conclude that (\ref{power-ppositive-super}) becomes
$$
E^{Q}\left(e^{\log(x)\Delta}-1|\mathcal{F}_{\tau}\right)\leq  C_{Q}:=\Bigl(E(D(\sigma)|\mathcal{F}_{\tau})\Bigr)^{-1}-1,\ \ \ \ \ \mbox{for all}\,\,x>0.
$$
Due to $e^{\log(x)\Delta}-1=e^{\log(x)\Delta^+}+e^{-\log(x)\Delta^-}-2,$ the above inequality is equivalent to
\begin{equation}\label{power-Delta+-}
E^{Q}\left(e^{\log(x)\Delta^+}+e^{-\log(x)\Delta^-}|\mathcal{F}_{\tau}\right)\leq C_{Q}+2,\ \ \ \ \ \mbox{for all}\,\,x>0.
\end{equation}
Thanks to Jensen's inequality, (\ref{power-Delta+-}) yields
$$
\exp\left(\log(x)E^{Q}(\Delta^+|\mathcal{F}_{\tau})\right)+\exp\left(-\log(x)E^{Q}(\Delta^-|\mathcal{F}_{\tau})\right)\leq C_Q+2.
$$
This inequality holds for all $x>0$ if and only if
$$
E^{Q}(\Delta^+|\mathcal{F}_{\tau})=E^{Q}(\Delta^-|\mathcal{F}_{\tau})=0,\quad P-a.s.
$$
or equivalently $p(\sigma)=p(\tau)$ $P-a.s.$ Since the pair of stopping times is arbitrary, then the proof of the lemma follows immediately.\qed\\


 The positivity condition on $p$ is crucial in our proof of Lemma \ref{lem-p-positive}, and without it this approach will not be conclusive at some stage. Thus, for the case of negative $p$, we will proceed differently by using a result of Berrier et al. (2009), where the authors tried to measure the effect of the forward property of a random field utility on its random field conjugate that can be defined as follows. For any random field utility, $U(t,x)$, $x\in\mathbb R^+$ we define its Frenchel-Legendre conjugate (called hereafter random field conjugate),
$V(\omega,t,y)$, by
$$
V(t,\omega,y):=\sup_{x>0}\Bigl(U(t,\omega,x)-xy\Bigr),\ \ \ \ \mbox{for}\ \  t\geq0,\ \ \ \ \ y> 0.$$
The random field conjugate of $U_p$ defined in (\ref{harapower}) is given by
\begin{equation}\label{Vpdualconjugate}
V_p(t,y)=-(D(t)p(t))^{1-q(t)}y^{q(t)}\left(q(t)\right)^{-1},
\end{equation}
where $q$ is the conjugate process of $p$ and is given by $q(t):={{p(t)}\over{p(t)-1}}$.

\begin{lemma}\label{Cor-continuous-p} Suppose that the process $p=(p(t))_{0\leq t\leq T}$ is negative, (\ref{null-strategy}) holds and $U_p$ is a forward utility.
Then, $p$ is constant in $(t,\omega)$, i.e. $p(t)=p(0)$, $P-a.s..$
\end{lemma}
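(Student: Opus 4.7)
The plan is to transfer the forward property of $U_p$ to its random field conjugate $V_p$ (given explicitly in (\ref{Vpdualconjugate})) and then run a dual analogue of the argument of Lemma~\ref{lem-p-positive}. As a preliminary observation, the strict monotonicity of $U_p$ in $x$ forces $p(t)D(t)>0$, so under the hypothesis $p<0$ we also have $D<0$. With $q(t):=p(t)/(p(t)-1)$, which is a bijection from $(-\infty,0)$ onto $(0,1)$, the coefficient $\widetilde D(t):=(D(t)p(t))^{1-q(t)}/q(t)$ is then strictly positive, and $V_p(t,y)=-\widetilde D(t)y^{q(t)}$.

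The first substantive step is to invoke the duality of Berrier, Rogers and Tehranchi (2009): since $U_p$ is a forward utility for $(S,P)$, for every $Z\in{\cal Z}^e_{loc}(S)$ (non-empty by assumption) the process $V_p(t,yZ_t)$ is a $P$-submartingale for each $y>0$; equivalently, $-V_p(t,yZ_t)=\widetilde D(t)(yZ_t)^{q(t)}$ is a positive $P$-supermartingale. Fixing one such $Z$, applying the supermartingale inequality at two stopping times $\tau\leq\sigma$, substituting $y\mapsto y/Z_\tau$ (which is $\mathcal{F}_\tau$-measurable and positive) and dividing by $y^{q(\tau)}>0$, I obtain
$$E\bigl[\zeta\,y^{\Delta}\big|\mathcal{F}_\tau\bigr]\leq 1,\qquad\forall\,y>0,$$
where $\Delta:=q(\sigma)-q(\tau)$ and $\zeta:=(\widetilde D(\sigma)/\widetilde D(\tau))(Z_\sigma/Z_\tau)^{q(\sigma)}>0$. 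This is precisely the functional inequality that drove Lemma~\ref{lem-p-positive}.

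From here, the argument of Lemma~\ref{lem-p-positive} transplants almost verbatim: define $Q$ by $dQ/dP\propto\zeta$, apply the decomposition $y^{\Delta}=y^{\Delta^+}+y^{-\Delta^-}-1$ together with Jensen's inequality for the convex map $t\mapsto y^t$, and deduce
$$y^{E^Q[\Delta^+|\mathcal{F}_\tau]}+y^{-E^Q[\Delta^-|\mathcal{F}_\tau]}\leq C+1,\qquad\forall\,y>0,$$
for some finite $\mathcal{F}_\tau$-measurable $C$. Sending $y\to+\infty$ and $y\to 0^+$ forces both conditional expectations to vanish, hence $\Delta=0$ $P$-a.s., so $q(\sigma)=q(\tau)$; the bijection $p\leftrightarrow q$ then gives $p(\sigma)=p(\tau)$, and the arbitrariness of $\tau\leq\sigma$ yields $p\equiv p(0)$. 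The hardest part will be rigorously invoking the Berrier--Rogers--Tehranchi duality in our semimartingale setting: I expect to need to verify the required integrability of $V_p(t,yZ_t)$ for a suitable $Z\in{\cal Z}^e_{loc}(S)$ (possibly after a localization via Lemma~\ref{LemmaforMEHM}-style stopping arguments), so that the dual submartingale property is genuinely available; once this transfer to the dual side is in place, the remainder is a direct dual analogue of the previous lemma.
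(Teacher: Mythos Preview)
Your approach is correct and essentially the same as the paper's: both invoke the Berrier--Rogers--Tehranchi dual inequality (stated here as Proposition~\ref{prop-dual-submartingale}) and exploit the explicit form of $V_p$ to force $q$ constant by letting the free scalar parameter range over $(0,\infty)$. The only cosmetic differences are that the paper works with deterministic times $t\le T'$ and finishes with a direct $\varepsilon$-slicing argument rather than your Jensen-plus-change-of-measure trick, and that your anticipated ``hardest part'' (integrability for the dual) is already absorbed into Proposition~\ref{prop-dual-submartingale}, which delivers the inequality for arbitrary $\eta\in L^0_+(\mathcal{F}_t)$ without further localization.
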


{\bf Proof.} Consider $t\geq 0$, arbitrary but fixed. A direct application of Proposition \ref{prop-dual-submartingale} (see at the end of this proof)  to $U_p(t,x)=D(t)x^{p(t)}$ and its random field conjugate defined in (\ref{Vpdualconjugate}) implies that for any $T'\in[t,+\infty)$, any $Z\in\mathcal{Z}^{e}_{loc}(S)$,  and $\eta\in L^{0}_{+}(\mathcal{F}_{t})$, we have
\begin{equation}\label{case2-applyProp}
E\left(\frac{(D(T')p(T'))^{1-q(T')}}{q(T')}(\eta\frac{Z_{T'}}{Z_{t}})^{q(T')}|\mathcal{F}_{t}\right)\leq
\frac{(D(t)p(t))^{1-q(t)}}{q(t)}\eta^{q(t)}.
\end{equation}
By choosing $\eta=Z_te^{\alpha}$, and putting $X_{s}:=\frac{(D(s)p(s))^{1-q(s)}}{q(s)}Z_{s}^{q(s)}>0$, the equation (\ref{case2-applyProp}) becomes
\begin{equation}\label{case2-Xintro}
E\left(X_{T'}e^{\alpha(q(T')-q(t))}|\mathcal{F}_{t}\right)\leq X_{t}.
\end{equation}
Since $X$ is positive (which is due to $q(t)={{p(t)}\over{p(t)-1}}>0$), then we derive
\begin{equation}\label{proof-thm-max}
\max\Big
\{e^{\varepsilon\alpha^+}E\left(X_{T'}I_{\{q(T')-q(t)\geq \varepsilon\}}|\mathcal{F}_{t}\right),e^{\varepsilon\alpha^-}E\left(X_{T'}I_{\{q(T')-q(t)\leq -\varepsilon\}}|\mathcal{F}_{t}\right)\Big\}\leq X_{t},
\end{equation}
for any $\alpha\in \RR$ and any $\varepsilon>0$, where $\alpha=\alpha^+-\alpha^-$. Therefore, we deduce that
$$q(T')=q(t),\quad P-a.s.\quad \forall\ \ \ T\geq T'\geq t\geq 0.$$
This ends the proof of the lemma. \qed\\

The proof of the previous lemma is essentially based on the following.
\begin{proposition}\label{prop-dual-submartingale}
If $U(t,x)$ is a forward utility, then for any $T'$, $0\leq t\leq T'\leq T$ and any $\eta\in L^{0}_{+}(\mathcal{F}_{t})$, we have
\begin{equation}\label{dual-submartingale}
V(t,\eta)\leq \underset{Z\in\mathcal{Z}^{e}_{loc}(S)}{\mathrm{ess\,inf}}\,\,
E\left(V(T',\eta\frac{Z_{T'}}{Z_{t}})\Big|\mathcal{F}_{t}\right),\quad P-a.s..
\end{equation}
\end{proposition}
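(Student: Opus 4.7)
The plan is to derive the claimed ``dual submartingale'' inequality for $V$ from the primal (super)martingale property of $U$ along admissible wealth processes, via Fenchel--Young duality together with the elementary supermartingale estimate on $X^{\pi}Z$ for $Z\in \mathcal{Z}^{e}_{loc}(S)$.

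First I would fix $x>0$, $Z\in \mathcal{Z}^{e}_{loc}(S)$, and $\eta\in L^{0}_{+}(\mathcal{F}_{t})$, and let $\pi^{*}_{x}$ be the optimal strategy guaranteed by Definition \ref{forwardutilitiesdefi0}(b). Writing $X^{*}_{s}:=x+(\pi^{*}_{x}\cdot S)_{s}$, the forward property gives the martingale identity $U(t,X^{*}_{t})=E[U(T',X^{*}_{T'})\,|\,\mathcal{F}_{t}]$. Setting $Y_{T'}:=\eta Z_{T'}/Z_{t}$, Fenchel--Young at time $T'$ yields
\begin{equation*}
U(T',X^{*}_{T'})\leq V(T',Y_{T'})+X^{*}_{T'}Y_{T'}.
\end{equation*}
Taking $E[\,\cdot\,|\,\mathcal{F}_{t}]$ and using that $X^{*}Z$ is a non-negative local martingale, hence a supermartingale, one gets
\begin{equation*}
E[X^{*}_{T'}Y_{T'}\,|\,\mathcal{F}_{t}]=\tfrac{\eta}{Z_{t}}E[X^{*}_{T'}Z_{T'}\,|\,\mathcal{F}_{t}]\leq \eta X^{*}_{t},
\end{equation*}
so that combining these pieces delivers the key inequality, valid for every constant $x>0$:
\begin{equation*}
U(t,X^{*}_{t})-\eta X^{*}_{t}\leq E[V(T',\eta Z_{T'}/Z_{t})\,|\,\mathcal{F}_{t}].
\end{equation*}

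Next I would pass from this pointwise estimate to the Fenchel conjugate $V(t,\eta)=\sup_{x'>0}(U(t,x')-x'\eta)$ by taking the supremum over the initial wealth $x$. This is where one leverages the HARA structure that motivates the proposition: by Proposition \ref{optimalportfolio}(2) the optimal portfolio rate $\widetilde\theta$ does not depend on $x$, and so $X^{*}_{t}=x\,\mathcal{E}_{t}(\widetilde\theta\cdot S)=x M^{*}_{t}$ with $M^{*}_{t}>0$. Therefore, as $x$ runs over $(0,+\infty)$, the value $X^{*}_{t}(\omega)$ runs over all of $(0,+\infty)$ for a.e.\ $\omega$, and pointwise in $\omega$
\begin{equation*}
\sup_{x>0}\Bigl(U(t,\omega,X^{*}_{t}(\omega))-\eta(\omega)X^{*}_{t}(\omega)\Bigr)=\sup_{x'>0}\Bigl(U(t,\omega,x')-\eta(\omega)x'\Bigr)=V(t,\omega,\eta(\omega)).
\end{equation*}
Taking this supremum in the inequality above gives $V(t,\eta)\leq E[V(T',\eta Z_{T'}/Z_{t})\,|\,\mathcal{F}_{t}]$, and the essential infimum over $Z\in \mathcal{Z}^{e}_{loc}(S)$ concludes the proof.

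The main obstacle is precisely this final passage: to obtain $V(t,\eta)$ on the left-hand side I need the family $\{X^{*}_{t}(\omega):x>0\}$ to cover (a.e.) the effective domain of $U(t,\omega,\cdot)$, so that the pointwise Fenchel conjugate is actually realized. For the HARA utilities $U_{p}$ that are the subject of the paper this follows cleanly from the homogeneity asserted in Proposition \ref{optimalportfolio}; in a fully general forward-utility framework one would instead need a conditional dynamic-programming identity of the form $U(t,x)=\mathrm{ess\,sup}_{\pi}E[U(T',X^{\pi}_{T'})\,|\,\mathcal{F}_{t}]$ for every constant $x>0$, which can be established by concatenating the null strategy on $[0,t]$ with an admissible strategy on $[t,T']$ together with an attainability argument. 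Everything else in the plan (Fenchel--Young, the conditional supermartingale estimate, and the ess-inf over $Z$) is routine once that bridging step is secured.
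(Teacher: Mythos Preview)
The paper does not actually prove this proposition: its entire ``proof'' is the single sentence ``The proof of this proposition can be found in Berrier et al.\ (2009).'' So there is no argument in the paper to compare yours against.

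That said, your Fenchel--Young route is the natural one and, in the HARA setting where the proposition is actually invoked (Lemma \ref{Cor-continuous-p}), your argument is complete and correct. The scaling $X^{*}_{t}=xM^{*}_{t}$ with $M^{*}_{t}>0$ from Proposition \ref{optimalportfolio}(2) does give you the full range $(0,+\infty)$ pointwise, so the supremum over $x$ recovers $V(t,\eta)$; measurability of the supremum follows by continuity in $x$ and restriction to rationals. The integrability needed to pass to conditional expectations is also in place: $U(\cdot,X^{*}_{\cdot})$ is a true martingale by the forward property, and $X^{*}Z$ is a nonnegative supermartingale, so $V(T',\eta Z_{T'}/Z_{t})$ is bounded below by an integrable random variable and its conditional expectation is well defined (possibly $+\infty$).

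You are also right to flag the gap for a general forward utility: the proposition as stated is not HARA-specific, and your argument genuinely uses the $x$-homogeneity of the optimal wealth to turn the family $\{X^{*}_{t}:x>0\}$ into a surjection onto $(0,+\infty)$. The fix you suggest (a conditional dynamic-programming identity obtained by concatenating the null strategy on $[0,t]$ with admissible strategies on $[t,T']$) is exactly the device used in Berrier--Rogers--Tehranchi; since the paper defers to that reference anyway, your writeup is at least as self-contained as the paper's, and strictly more informative for the HARA case that matters here.
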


\begin{proof} The proof of this proposition can be found in Berrier et al. (2009). \end{proof}

The remaining part of this subsection is devoted to the proof of Theorem  \ref{pconstant}.\\

 {\bf Proof of Theorem \ref{pconstant}:} If the process $p$  is either positive or negative, then the proof of this theorem follows from Lemmas \ref{lem-p-positive} and \ref{Cor-continuous-p} respectively. Thus, the proof of this theorem will be achieved once we prove that the process $p$ has a constant sign (i.e. $p(t)p(0)>0,\ P-a.s.$ for all $t\in [0,T]$). To this end, we assume that
 \begin{equation}\label{assumtiononp}
 \sup_{0\leq t\leq T}\vert p(t)\vert>0,\ \ \  P-a.s.,\end{equation}

and consider the following stopping time
$$
\tau:=\inf\left\{t\geq 0\ \Big|\ p(t)p(0)<0\right\}\wedge T.$$
Due to (\ref{null-strategy}), the null portfolio belongs to ${\cal A}_{adm}(x,U_p)$ for any $x\geq 1$. Hence, $D$ and $U_p(t,e)$ are two c\`adl\`ag supermartingales that never vanish, and thus $p(t)=\log\left(U_p(t,e)/D(t)\right)$ is a right continuous and adapted process. By combining this right continuity of $p$ and  (\ref{assumtiononp}), we deduce that $p$ has a constant sign if and only if
 \begin{equation}\label{constantsignforp}
  p(0)p(\tau)>0,\ \ \ \ \ \ P-.a.s.\end{equation}
 Since $p$ is locally bounded, there is no loss of generality in assuming $p$ bounded. To prove (\ref{constantsignforp}), we will proceed by distinguishing whether $p(0)<0$ or $p(0)>0$ in parts a) and b) respectively.\\

\noindent {\bf a)}
Suppose that $p(0)<0$, and hence $D(0)<0$. Then, due to (\ref{null-strategy}), we deduce that the null strategy is admissible for any $x>0$. Hence, $D(t)x^{p(t)}$ is a supermartingale, for any $x>0$, and we have
$$
E\left(D(\tau)x^{p(\tau)}I_{\{p(\tau)>0\}}\ \Big|\ {\cal F}_0\right)\leq -E\left(D(\tau)x^{p(\tau)} I_{\{p(\tau)<0\}}\ \Big|\ {\cal F}_0\right)+D(0)x^{p(0)}.$$
By letting $x$ goes to infinity and using Fatou's lemma we conclude that we should have $P\left(p(\tau)>0\right)=0$ (otherwise we will have a contradiction from the above inequality). This proves (\ref{constantsignforp}).\\

\noindent {\bf b)} Suppose that $p(0)>0$, or equivalently $D(0)>0$. Then for any $n\geq 1$, there exists $\theta_n\in {\cal A}_{adm}(n^{-1})$ such that
\begin{equation}\label{power-p-opt}
E\left\{D(\tau)n^{-p(\tau)}(1+(\theta_n\cdot S)_{\tau})^{p(\tau)}\right\}=D(0)n^{-p(0)}.
\end{equation}
Thanks to Lemma A1.1 in Delbaen and Schachermayer (1994) , there exists a sequence of non-negative real numbers, $(\alpha_k)_{k=n,...,N_n}$, such that
$$
\sum_{k=n}^{N_n}\alpha_k=1 \ \ \mbox{and}\ \  Y_n:=1+\sum_{k=n}^{N_n} \alpha_k(\theta_k\cdot S)_{\tau}\ \ \ \mbox{converges almost surely to}\ \ Y\geq 0.$$
Thanks to Fatou's lemma and $\mathcal{Z}^e_{loc}(S)\not=\emptyset$, we obtain
$$E\left(ZY\right)\leq \lim_{n\rightarrow+\infty}E\left(ZY_n\right)\leq 1,$$
for some $Z\in \mathcal{Z}^e_{loc}(S)$. This implies that $0\leq Y<+\infty\ \ P-a.s.$
Consider $(X_n)_{n\geq 1}$ given by
$$X_n:=D(\tau)n^{-p(\tau)}Y_n^{p(\tau)}-D(\tau)Y_n^{p(\tau)}.$$
It is easy to check that $
X_n\leq 0,\ \ \ P-a.s.$, and by distinguishing the cases of $\{p(\tau)>0\}$ and $\{p(\tau)<0\}$ separately, on the one hand we derive
\begin{equation}\label{power-xn-lim}
\lim_{n\rightarrow +\infty}X_n=\left\{
                                   \begin{array}{ll}
                                     -D(\tau)Y^{p(\tau)}, &\quad  \hbox{if $\ p(\tau)>0$;} \\
                                     -\infty, & \quad \hbox{if $\ p(\tau)<0$.}
                                   \end{array}
                                 \right.
\end{equation}
On the other hand, we have
\begin{eqnarray}
X_n&\geq& \sum_{k=n}^{N_n}\alpha_kn^{-p(\tau)}D(\tau)(1+\theta_k\cdot S_{\tau})^{p(\tau)}-D(\tau)Y_n^{p(\tau)}\notag\\
&\geq&\sum_{k=n}^{N_n}\alpha_kk^{-p(\tau)}D(\tau)(1+\theta_k\cdot S_{\tau})^{p(\tau)}-D(\tau)Y_n^{p(\tau)}.\label{power-xn-geq}
\end{eqnarray}
Then, by taking expectation in both sides of (\ref{power-xn-geq}), and using (\ref{power-p-opt}) and the supermartingale property of $U_p(\tau,1+ \sum_{k=n}^{N_n}\alpha_k\theta_k\cdot S_{\tau})$, we get
\begin{equation}\label{power-xn-E}
E(X_n)\geq D(0)\left[\sum_{k=n}^{N_n}\alpha_kk^{-p(0)}-1\right].
\end{equation}
Since $X_n$ is nonpositive, again Fatou's lemma to the left-hand-side term of (\ref{power-xn-E}) leads to
\begin{equation}\label{power-fatou-xn}
E(\lim_{n\rightarrow +\infty}X_n)\geq -D(0)>-\infty.
\end{equation}
Then, thanks to (\ref{power-xn-lim}) and (\ref{power-fatou-xn}), we deduce that $P(p(\tau)<0)=0$. Hence (\ref{constantsignforp}) holds, and the theorem is proved under the assumption (\ref{assumtiononp}). The proof of the theorem will be completed if we prove that this assumption actually holds. To this end, consider
$$
\tau_0:=\inf\{t\geq 0\ \ |\ p(t-)=0\}\wedge T.$$

Since $p$ never vanishes (since $U_p$ is a random field utility), we deduce that on $\{\tau_0<T\}$, we have $p(\tau_0-)=0$ $P-a.s$. This implies that $\tau_0$ is a predictable stopping time that is announced by a sequence of stopping times $(\sigma_n)_{n\geq 1}$ satisfying
$$
\sup_{0\leq t\leq\sigma_n}\vert p(t)\vert>0\ \ P-a.s.$$
Thus, $p^{\sigma_n}$ fulfills the assumption (\ref{assumtiononp}) and hence it is constant equal to $p(0)$. Then, on $\{\tau_0<T\}$ we have  $0=p(\tau_0-)=p(0)\not=0$, which implies that $\tau_0=T\ \mbox{ and}\ p(T-)=p(0)\not =0\  P-a.s.$. This proves that both processes $p(t)$ and $p(t-)$ never vanish, and (\ref{assumtiononp}) follows immediately. This ends the proof of the theorem.\qed

\vskip 0.5cm

\subsection{The Dynamic of the Process $D$}\label{subsection:proofoftheorem6}

\noindent In this subsection, we develop our second and last step of our parametrization for the power-type forward utilities. Thanks to Theorem \ref{pconstant}, we will assume ---throughout the rest of this section--- that the process $p$ is constant in $(\omega, t)$, and we will describe $\left(D(t)\right)_{0\leq t\leq T}$ as well as the optimal portfolio for the utility maximization problem associated to $U_p(t,x)$. These results will be presented in two theorems that are stated in the increasing order of generality. First, we describe the process $D$ that is predictable with finite variation (Theorem \ref{case-pre+var-Hellinger}). Afterwards, we drop the predictability and the finite variation assumptions, and determine the general form of $D$ (Theorem \ref{power-characterization-ass}).\\

\begin{theorem}\label{case-pre+var-Hellinger}
Let $p$ be a real number such that $0\not= p<1$, $q$ is its conjugate ($q:=\frac{p}{p-1}$), and the set ${\cal D}^+$ is given by (\ref{setD}). Suppose that $D(t)$ is a c\`adl\`ag and predictable process with finite variation, $S$ is locally bounded, ${\cal Z}_{loc}^e(S)\not=\emptyset$, and Assumptions \ref{crucialassumption} with $M\equiv 1$ holds .\\
Then, the assertions (1) and (2) below are equivalent.\\
(1) $U(t,x)=D(t)x^{p}$ is a forward utility with the optimal portfolio rate $\widehat{\theta}$.\\
(2) The minimal Hellinger martingale density of order $q$, $\widetilde{Z}$, exists and satisfies:  \\
(2.a) The process $\widehat{Z}:=\widetilde{Z}{\cal E}\left(\widehat{\theta}\cdot S\right)$ is a martingale.\\
(2.b) The process $D$ is given by
\begin{equation}\label{pre+var-characterD(t)}
D=D_{0}{\cal E}\left(q(q-1)h^{(q)}(\widetilde{Z},P)\right)^{p-1}.
\end{equation}
(2.c) $P\otimes A-$almost all $(\omega, t)$, the optimal portfolio rate, $\widehat{\theta}$, belongs to int$\left({\cal D}^+\right)$, and is a root for
\begin{equation}\label{eq-root-thetatilde}
b+(p-1)c\theta+\int\left[(1+\theta^{tr}x)^{p-1}-1\right]xF(dx)=0,\ \ \ \ P\otimes A-a.e.
\end{equation}
\end{theorem}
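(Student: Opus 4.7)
My strategy is to establish both implications by computing explicitly the drift of the candidate supermartingale $D(t)\mathcal{E}(\theta\cdot S)_t^p$ for an arbitrary admissible portfolio rate $\theta$, and then reading off conditions (2.a)--(2.c) from the supermartingale requirement (with equality at $\theta=\widehat\theta$). The backbone of both directions is the Dol\'eans--Dade identity $\mathcal{E}(\theta\cdot S)^p=\mathcal{E}(N^{(p,\theta)})$, where $N^{(p,\theta)}$ is a semimartingale whose predictable drift, after expanding via (\ref{modelS})--(\ref{modelSbis}) and the identity $r(r-1)f_r(y)=(1+y)^r-1-ry$, equals $p(p-1)\Phi_p(\theta)\cdot A$ with $\Phi_p$ as in (\ref{Phip}). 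Since $f_p''(y)=(1+y)^{p-2}\ge 0$, the function $\Phi_p$ is convex on $\mathcal{D}^+$, so equation (\ref{eq-root-thetatilde}) characterizes its interior critical point $\widehat\theta$ as the unique minimizer of $\Phi_p$ on $\mathrm{int}(\mathcal{D}^+)$.

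For $(1)\Rightarrow(2)$: From the forward property, $D(t)\mathcal{E}(\widehat\theta\cdot S)_t^p$ is a true $P$-martingale. I define the candidate MHM density $\widetilde Z=\mathcal{E}(\widetilde N)$ by its Jacod parameters $\widetilde\beta=(p-1)\widehat\theta$ (on the continuous martingale part) and $\widetilde f(x)=(1+\widehat\theta^{tr}x)^{p-1}$ (on the jumps), with trivial orthogonal component. A direct drift computation shows that $\widetilde ZS\in\mathcal{M}_{\mathrm{loc}}(P)$ is equivalent to (\ref{eq-root-thetatilde}); and a second Yor-formula computation on $\widetilde Z\mathcal{E}(\widehat\theta\cdot S)$ shows that its drift vanishes by the same root equation, producing (2.a) once we upgrade from local to true martingale using the integrability built into the forward property. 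Setting the drift of $D\mathcal{E}(\widehat\theta\cdot S)^p$ equal to zero then yields a predictable equation for $D$ which, after identifying $-(p-1)^2\Phi_p(\widehat\theta)\cdot A$ with the Hellinger process $h^{(q)}(\widetilde Z,P)$ via the algebraic identity $(p-1)q=p$, is precisely (\ref{pre+var-characterD(t)}). Finally, the minimality $h^{(q)}(\widetilde Z,P)\preceq h^{(q)}(Z,P)$ for every $Z\in\mathcal{Z}^e_{q,\mathrm{loc}}(S)$ follows from the supermartingale condition at arbitrary $\theta$: the pointwise inequality $\Phi_p(\theta)\ge\Phi_p(\widehat\theta)$ translates, via the conjugate relation between $\Phi_p$ and the integrand of the Hellinger functional of order $q$, into the desired order relation between Hellinger processes.

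For $(2)\Rightarrow(1)$: Fix an arbitrary admissible portfolio rate $\theta$; Proposition \ref{optimalportfolio}(1) together with the positivity requirement forces $\theta\in\mathcal{D}^+$, $P\otimes A$-a.e. Combining Yor's formula, the Dol\'eans--Dade identity, and formula (\ref{pre+var-characterD(t)}) for $D$, I decompose $D(t)\mathcal{E}(\theta\cdot S)_t^p$ as $D_0$ times a positive local martingale times $\exp\bigl(p(p-1)(\Phi_p(\theta)-\Phi_p(\widehat\theta))\cdot A\bigr)$ in the continuous case, with the obvious c\`adl\`ag modification in general (absorbing jumps of the Hellinger process through Yor's formula). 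The convexity of $\Phi_p$ and (\ref{eq-root-thetatilde}) give $\Phi_p(\theta)-\Phi_p(\widehat\theta)\ge 0$, so the predictable factor is monotone in the direction producing a positive (super-)martingale for every admissible $\theta$ and a genuine martingale at $\theta=\widehat\theta$---which is assertion (2.a). Hypothesis (\ref{null-strategy}) and a standard localization via Lemma \ref{LemmaforMEHM} upgrade the local supermartingale to a true supermartingale, completing the verification of Definition \ref{forwardutilitiesdefi0}.

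The main obstacle is the algebraic bookkeeping at the jumps: forming the product $\mathcal{E}\bigl(q(q-1)h^{(q)}(\widetilde Z,P)\bigr)^{p-1}\cdot\mathcal{E}(N^{(p,\theta)})$ via Yor's formula requires tracking the quadratic covariation corrections between the predictable finite-variation Hellinger exponential and the local martingale part, and identifying the resulting drift as a scalar multiple of $\Phi_p(\theta)-\Phi_p(\widehat\theta)$ requires the identity $(p-1)q=p$ together with the specific form of $\widetilde f$ dictated by (2.c). A secondary subtlety is the treatment of the boundary $\partial\mathcal{D}^+$, where $\Phi_p$ may blow up; this is precisely what Assumption \ref{crucialassumption} (with $M\equiv 1$) handles by forcing $\int K_p(\theta_n^{tr}x)F(dx)\to+\infty$ along sequences $\theta_n\to\theta\notin\mathrm{int}(\mathcal{D}^+)$, thereby confining the minimizer $\widehat\theta$ to the interior of $\mathcal{D}^+$ as required by (2.c).
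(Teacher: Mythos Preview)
Your overall strategy coincides with the paper's: compute the drift of $D\,\mathcal{E}(\theta\cdot S)^p$ via the Dol\'eans--Dade identity, extract (\ref{eq-root-thetatilde}) as the first-order condition for the minimizer of $\Phi_p$, build $\widetilde Z$ from the Jacod parameters $\bigl((p-1)\widehat\theta,\ (1+\widehat\theta^{tr}x)^{p-1}-1\bigr)$, and identify $D$ with the Hellinger exponential. The paper carries this out through Lemmas \ref{lemmafrorstep(b)}--\ref{lemmafrorstep(e)}, and your sketch matches them at the level of ideas.

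There is, however, a genuine gap in your minimality argument. You claim that the inequality $\Phi_p(\theta)\ge\Phi_p(\widehat\theta)$ for admissible portfolio rates $\theta$ ``translates, via the conjugate relation,'' into $h^{(q)}(\widetilde Z,P)\preceq h^{(q)}(Z,P)$ for every $Z\in\mathcal{Z}^e_{q,\mathrm{loc}}(S)$. These are variational problems over \emph{different} sets: portfolio rates $\theta\in\mathcal{D}^+$ on one side, martingale densities $Z$ (parametrized by their Jacod components $(\beta,k)$) on the other. There is no bijection between them, and a martingale density generally has a jump functional $k(x)$ that is not of the form $(1+\theta^{tr}x)^{p-1}-1$. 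The paper does not deduce Hellinger minimality from the forward supermartingale property; instead, Lemma \ref{lemmafrorstep(e)} proves it directly by applying the convexity of $\phi(z)=\tfrac{(1+z)^q-qz-1}{q(q-1)}$ to the pair $(\widetilde Z,Z)$, using the martingale-density constraint $b+c\beta+\int xk(x)F(dx)=0$ (satisfied by both) to make the first-order term vanish, and treating the sets $\{\Delta A=0\}$ and $\{\Delta A\neq 0\}$ separately. You need either this direct argument or an explicit appeal to the duality results of Choulli--Stricker (2005, 2006); the hand-wave to a ``conjugate relation'' does not close the gap.

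A secondary issue is in $(2)\Rightarrow(1)$. Neither hypothesis (\ref{null-strategy}) nor Lemma \ref{LemmaforMEHM} is relevant for passing from a local supermartingale to a true supermartingale here: (\ref{null-strategy}) is not assumed in this theorem, and Lemma \ref{LemmaforMEHM} concerns localization of the MHM density, not integrability of wealth processes. The paper's route is to write $D\,\mathcal{E}(\theta\cdot S)^p=D_0x^p\,\widehat Z\,\bigl(\mathcal{E}(\theta\cdot S)/\mathcal{E}(\widehat\theta\cdot S)\bigr)^p$ and then invoke Proposition \ref{theta-supermatingale}, which uses the admissibility condition (\ref{p<0-integrability}) and a de la Vall\'ee--Poussin argument under $\widehat Q=\widehat Z_T\cdot P$ to obtain the true supermartingale property. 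Your proposed decomposition via $\exp\bigl(p(p-1)(\Phi_p(\theta)-\Phi_p(\widehat\theta))\cdot A\bigr)$ is correct only in the quasi-left-continuous case; in the presence of predictable jump times the Yor correction terms do not reduce to that exponential, which is why the paper's product decomposition is preferable.
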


\noindent The proof of this theorem requires a number of intermediate results that are interesting in themselves. Technically, Theorem \ref{case-pre+var-Hellinger} is the back-bone of this subsection. Thus, for the sake of clear exposition, we will postpone its proof to Subsection \ref{sectyionproofofTheorem4}. In the following, we will highlight the importance of Theorem \ref{case-pre+var-Hellinger} on the particular case when $S$ is continuous, and afterwards we will deal with describing $D$ in the general case.

\begin{corollary}\label{corollaryPredictbaleD} Suppose that $D(t)$ is a c\`adl\`ag and predictable process with finite variation, $S$ is continuous and ${\cal Z}_{loc}^e(S)\not=\emptyset$. Then the following are equivalent.\\
(1) $U(t,x)=D(t)x^{p}$ is a forward utility with the optimal portfolio rate $\widehat{\theta}$.\\
(2) The optimal portfolio rate $\widehat{\theta}$ is a root for
$b+(p-1)c\theta=0,\ P\otimes A-a.e,$
and the following properties hold:\\
(2.a) The process $D$ is given by
$D_t=\displaystyle D_{0}\exp\left({{q}\over{2}}\int_0^t\widehat\theta_u^{tr} c_u\widehat\theta_u dA_u\right),\ \ \ \ 0\leq t\leq T.
$\\
(2.b) The process $\widehat{Z}:={\cal E}\left((p-1)\widehat\theta\cdot M^S\right){\cal E}\left(\widehat{\theta}\cdot S\right)$ is a martingale, where $M^S$ is the local martingale part of $S$.\\
\end{corollary}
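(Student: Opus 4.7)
The plan is to deduce this corollary as a direct specialization of Theorem \ref{case-pre+var-Hellinger} to the case $F\equiv 0$. Since $S$ is continuous, Assumption \ref{crucialassumption} holds automatically by Examples \ref{assA-example}(1) (with $M\equiv 1$ and $\Gamma=\emptyset$), and $\mathcal{D}^{+}=\mathbb{R}^{d}$, so the interior constraint on $\widehat\theta$ appearing in Theorem \ref{case-pre+var-Hellinger}(2.c) is vacuous. Moreover the jump integral in the root equation (\ref{eq-root-thetatilde}) vanishes identically, leaving only $b+(p-1)c\theta=0$, which is precisely the first displayed identity of assertion (2).

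The key intermediate step is an explicit identification of the minimal Hellinger martingale density $\widetilde{Z}$ of order $q$ in the continuous setting. By the Kunita--Watanabe decomposition, every $Z=\mathcal{E}(N)\in\mathcal{Z}^{e}_{loc}(S)$ admits a representation $N=\beta\cdot M^{S}+L$ with $L$ a local martingale strongly orthogonal to $M^{S}$. A direct integration by parts, combined with the continuity of $S$, shows that the local martingale property of $ZS$ is equivalent to $b+c\beta=0$, $P\otimes A$-a.e. Since $N$ is continuous, $h^{(q)}(Z,P)=\tfrac{1}{2}\langle N\rangle=\tfrac{1}{2}\left(\beta^{tr}c\beta\cdot A+\langle L\rangle\right)$, so the order $\preceq$ on $h^{(q)}(\cdot,P)$ is minimized by taking $L=0$ together with $\beta$ satisfying $c\beta=-b$. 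Comparing with $b+(p-1)c\widehat\theta=0$ forces $\beta=(p-1)\widehat\theta$, and hence $\widetilde{Z}=\mathcal{E}\left((p-1)\widehat\theta\cdot M^{S}\right)$. Substituting into Theorem \ref{case-pre+var-Hellinger}(2.a) produces assertion (2.b) of the corollary.

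To conclude, assertion (2.a) follows by direct calculation: $h^{(q)}(\widetilde{Z},P)=\tfrac{(p-1)^{2}}{2}\widehat\theta^{tr}c\widehat\theta\cdot A$ is continuous and of finite variation, so the stochastic exponential appearing in (\ref{pre+var-characterD(t)}) reduces to the ordinary exponential, and the algebraic identity $q(q-1)=p/(p-1)^{2}$ combined with the exponent in (\ref{pre+var-characterD(t)}) yields the advertised closed form for $D$. The converse implication $(2)\Rightarrow(1)$ is obtained by re-assembling the same data and invoking the converse direction of Theorem \ref{case-pre+var-Hellinger}. The only non-routine point is the identification of $\widetilde{Z}$ in the paragraph above, where one must argue that optimality with respect to $\preceq$ (and not merely integrated optimality) is achieved by setting the orthogonal component $L$ to zero; all remaining manipulations are routine once continuity has been exploited.
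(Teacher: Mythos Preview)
Your approach is essentially the paper's: specialize Theorem \ref{case-pre+var-Hellinger} to the continuous case and identify $\widetilde{Z}$ explicitly. The paper dispatches the identification in one line by citing Choulli et al.\ (2007) for the fact that, when $S$ is continuous, the minimal Hellinger martingale density of every order coincides with the F\"ollmer--Schweizer minimal martingale density; you instead give a self-contained Kunita--Watanabe argument, which is a reasonable substitute.

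One imprecision to flag: you write ``Since $N$ is continuous, $h^{(q)}(Z,P)=\tfrac{1}{2}\langle N\rangle$'', but for a generic $Z=\mathcal{E}(N)\in\mathcal{Z}^{e}_{q,loc}(S)$ the local martingale $N$ need \emph{not} be continuous even though $S$ is --- the orthogonal component $L$ may carry jumps. The correct statement is that
\[
h^{(q)}(Z,P)=\tfrac{1}{2}\beta^{tr}c\beta\cdot A+\tfrac{1}{2}\langle L^{c}\rangle+\Bigl(\sum f_{q}(\Delta L)\Bigr)^{p},
\]
and since $f_{q}\geq 0$ both extra terms are nondecreasing, so the $\preceq$-minimum is still achieved at $L=0$. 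This is an easy fix and does not affect your conclusion.
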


{\bf Proof.} The proof of this corollary is straightforward from Theorem \ref{case-pre+var-Hellinger} and from the fact that, when $S$ is continuous, the minimal Hellinger densities of all order $r$ coincide with the minimal martingale density (see Choulli et al. (2007)), and Assumption \ref{crucialassumption} holds. This ends the proof of this corollary.\qed


\begin{remark} Theorems \ref{pconstant} and \ref{case-pre+var-Hellinger} claim that the set of forward utilities having the form of (\ref{harapower}) with predictable and finite variation $D$ is parameterized by the constant $p(0)$ which is the initial value of the risk-aversion process. Given $p(0)$, the process $D$ is uniquely and explicitly determined using the Hellinger process of an optimal pricing density. This density is the minimal Hellinger martingale density that is calculated explicitly in Choulli et al. (2007) (see also Choulli and Stricker (2005, 2006) for other cases of $p$) and plays --herein-- its natural r\^ole of the ``optimal dual process'' for the power-type forward utility maximization.  Furthermore, the optimal portfolio rate is explicitly given via a point-wise $\mathbb R^d$-equation (\ref{eq-root-thetatilde}).\\
\end{remark}

\noindent  Now, we are in the stage of announcing our description of the general form of $D$ and achieve the parametrization of $U_p$ --defined in (\ref{harapower})-- in its full generality.

\begin{theorem}\label{power-characterization-ass}
Consider ${U}_p$ given by (\ref{harapower}), where $p$ is assumed to be constant in $(-\infty,0)\cup(0,1)$ and $D$ satisfies
 \begin{equation}\label{Ddecomposition}
D(t)=D(0){\cal E}(N^D)\exp(a^D).\end{equation}
Here $N^D$ is a local martingale and $a^D$  is a predictable process with finite variation such that $Z^D:={\cal E}(N^D)>0$. Suppose that $S$ is locally bounded, ${\cal Z}_{loc}^e(S)\not=\emptyset$, and Assumptions \ref{crucialassumption} with $M\equiv Z^D$ holds. Then, the assertions (1) and (2) below are equivalent. \\
(1) ${U}_p$ is a forward utility with the optimal portfolio rate $\widehat\theta$.\\
(2) The following properties hold.\\
(2.a) The minimal Hellinger martingale density of order $q$ with respect to $Z^{D}$, denoted by $\widetilde{Z}^{D}$, exists and
\begin{equation}\label{characterD(t)-power}
D=D_{0}Z^{D}{\cal E}\left(q(q-1)h^{(q)}(\widetilde{Z}^{D},Z^{D})\right)^{p-1}.
\end{equation}
(2.b) The optimal portfolio rate, $\widehat{\theta}$, belongs to $int({\cal D}^+)$, and is a root for
\begin{equation}\label{root-theta-general0}
{ b^{D}+(p-1)c\theta+\int\left[(1+\theta^{T}x)^{p-1}-1\right]xF^{D}(dx)=0},\ \ \ \ P\otimes A-a.e.
\end{equation}
(2.c) The process $\widehat{Z}:=Z^{D}\widetilde{Z}^{D}{\cal E}(\widehat{\theta}\cdot S)$ is a martingale.\\
Here, $b^D$ and $F^D$ are given by \begin{equation}\label{jacodparametersunderD}
\displaystyle{ b^{D}:=b+c\beta+\int f(x)xF(dx),\quad \mbox{and}\ \  F^{D}(dx):=(1+f(x))F(dx),}
\end{equation}
 and $\left(\beta, f, g, \overline{N}^{D}\right)$ are the Jacod's parameters for $N^{D}$ guaranteed by Theorem \ref{representation}.
\end{theorem}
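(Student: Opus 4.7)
The plan is to reduce the theorem to Theorem \ref{case-pre+var-Hellinger} via a change--of--measure argument based on the multiplicative decomposition (\ref{Ddecomposition}). Writing $\tilde D(t) := D(0)\exp(a^D(t))$, which is predictable with finite variation, we have the factorization $U_p(t,x) = Z^D(t)\,\tilde U_p(t,x)$ with $\tilde U_p(t,x) := \tilde D(t) x^p$. Proposition \ref{PropertiesOfForward}(ii) then tells us that $U_p$ is a forward utility for $(S,P)$ if and only if $\tilde U_p$ is a forward utility for $(S,Q^D)$, where $Q^D$ denotes the measure with density process $Z^D$ against $P$.

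Because $Z^D$ is in general only a local martingale, I would proceed by localization. Pick a sequence $(T_n)$ of stopping times increasing stationarily to $T$ such that $(Z^D)^{T_n}$ is a true $P$--martingale, and apply the preceding equivalence on each stopped model $(S^{T_n}, Q^{D,n})$. Under $Q^{D,n}$, Girsanov's theorem combined with Jacod's representation of $N^D$ through the parameters $(\beta, f, g, \overline N^D)$ shows that the predictable characteristics of $S^{T_n}$ become $(b^D\cdot A, c\cdot A, F^D\cdot A)$ with $b^D$ and $F^D$ given by (\ref{jacodparametersunderD}). Theorem \ref{case-pre+var-Hellinger}, applied in $(S^{T_n}, Q^{D,n})$ with the predictable finite--variation process $\tilde D^{T_n}$, then produces an MHM density of order $q$ of $S^{T_n}$ under $Q^{D,n}$, which by Definition \ref{hellinger-changemeasure}(ii) corresponds precisely to $\widetilde Z^{D,n}$, the MHM density of order $q$ of $S^{T_n}$ with respect to $(Z^D)^{T_n}$. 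This yields local versions of (2.a)--(2.c) on each stochastic interval $\Rbrack 0, T_n \Lbrack$.

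To globalize, I would invoke Lemma \ref{LemmaforMEHM}, which ensures that MHM densities are stable under localization, so that $\widetilde Z^D$ exists on $[0,T]$ and $h^{(q)}(\widetilde Z^D, Z^D)$ is obtained by concatenating the local pieces. The identity (\ref{pre+var-characterD(t)}) applied to $\tilde D^{T_n}$ passes to the limit and, after multiplication by $Z^D$, becomes (\ref{characterD(t)-power}); the pointwise root equation (\ref{eq-root-thetatilde}) --- now written for the characteristics $(b^D, c, F^D)$ of $S$ under $Q^D$ --- is exactly (\ref{root-theta-general0}); and the martingale condition of Theorem \ref{case-pre+var-Hellinger}(2.a), asserting that $\widetilde Z^D {\cal E}(\widehat\theta\cdot S)$ is a $Q^D$--martingale, becomes by the Bayes rule the $P$--martingale property of $Z^D\widetilde Z^D {\cal E}(\widehat\theta\cdot S) = \widehat Z$, which is assertion (2.c).

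The main technical hurdle I anticipate lies in making the change of measure entirely rigorous when $Z^D$ is only a local martingale. Concretely, one must verify that: (a) admissibility of portfolio rates is preserved between $P$ and $Q^{D,n}$ on each stopped segment, so that the super/martingale properties defining a forward utility transfer faithfully in both directions through Proposition \ref{PropertiesOfForward}(ii); (b) Assumption \ref{crucialassumption} with $M = Z^D$ is exactly what is needed to invoke Theorem \ref{case-pre+var-Hellinger} in the changed market, since $\int K_p(\lambda^{tr}x)\,F^D(dx) = \int K_p(\lambda^{tr}x)(1+f(x))\,F(dx)$ is the natural replacement of the assumption of that theorem with $F$ swapped for $F^D$; and (c) the drift shift $b^D = b + c\beta + \int f(x)\,x\,F(dx)$ arising from Girsanov's theorem (combining the continuous martingale part and the compensated random measure part of $S$) is correctly extracted from the Jacod parameters of $N^D$. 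Once these points are handled, the rest of the argument is an essentially mechanical transfer of the statements of Theorem \ref{case-pre+var-Hellinger} through the factorization $D = Z^D \tilde D$.
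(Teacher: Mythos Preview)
Your overall strategy---localize $Z^D$, pass to $Q^{D,n}:=(Z^D)_{T_n}\cdot P$, apply Theorem \ref{case-pre+var-Hellinger} to the predictable finite-variation utility $\tilde U_p=D_0\exp(a^D)x^p$, and globalize via Lemma \ref{LemmaforMEHM}---is exactly what the paper does for (1)$\Rightarrow$(2.a) and (2.b). The identification of $(b^D,c,F^D)$ as the $Q^{D,n}$-characteristics via Girsanov and Jacod's parameters is also the right observation.

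There is, however, a genuine gap in your treatment of (2.c). Your Bayes-rule step only works on each stopped interval: you get that $\widehat Z^{T_n}$ is a $P$-martingale for every $n$, hence $\widehat Z$ is a $P$-\emph{local} martingale. Passing from ``$\widehat Z^{T_n}$ is a martingale for all $n$'' to ``$\widehat Z$ is a true martingale on $[0,T]$'' requires uniform integrability that you have not supplied, and it does not follow automatically from stationarity of $(T_n)$. The paper avoids this entirely: using Proposition \ref{lem-represofMHE-changemeasure} (formula (\ref{ZtildeZ})) it establishes the global identity
\[
\widehat Z \;=\; Z^D\widetilde Z^D{\cal E}(\widehat\theta\cdot S)\;=\;\bigl(D_0x^p\bigr)^{-1}U_p\bigl(\cdot,\,x{\cal E}(\widehat\theta\cdot S)\bigr),
\]
and then the true-martingale property of $\widehat Z$ is read off directly from the forward-utility assumption on $U_p$ at the optimum. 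This identity is the missing ingredient in your argument.

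The same identity is what drives the converse (2)$\Rightarrow$(1), which your proposal does not really address. Transferring the forward property back through Proposition \ref{PropertiesOfForward}(ii) on each $\Rbrack 0,T_n\Lbrack$ again only gives a localized statement; in particular, the supermartingale property for an arbitrary admissible $\theta$ does not globalize for free. The paper instead writes $U_p(\cdot,x{\cal E}(\theta\cdot S))=D_0x^p\,\widehat Z\,\bigl({\cal E}(\theta\cdot S)/{\cal E}(\widehat\theta\cdot S)\bigr)^p$ and invokes Proposition \ref{theta-supermatingale} (with $\widetilde Z=Z^D\widetilde Z^D$) to obtain the supermartingale property under $\widehat Q=\widehat Z_T\cdot P$. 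You should incorporate both this identity and Proposition \ref{theta-supermatingale} explicitly; the pure localize-and-Bayes route does not close on its own.
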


{\bf Proof.} We start by proving $(1)\Rightarrow (2)$. Suppose that assertion (1) holds. 
\noindent Let $(T_{n})_{n\geq 1}$ be a sequence of stopping times that increases stationarily to $T$ such that $(Z^{D})^{T_{n}}$ is a martingale. Put  $Q_{n}:=Z^{D}_{T_{n}}\cdot P$. Then, due to Lemma \ref{PropertiesOfForward}, we conclude that $U_{n}(t,\omega,x):=D_{0}\exp(a^{D}_{t\wedge T_{n}})x^{p}$ is a forward dynamic utility for $(S^{T_{n}},Q_{n})$ with the optimal portfolio rate $\widehat\theta_n:=\widehat\theta I_{\Rbrack 0,T_n\Lbrack}$. Hence, a direct application of Theorem \ref{case-pre+var-Hellinger} to $(S^{T_{n}},Q_{n},U_{n}, \widehat\theta_n)$ implies the existence of the minimal Hellinger martingale density for this model, denoted by $\widetilde Z^{D,n}$, that satisfies
\begin{equation}\label{Unequations}
\begin{array}{lll}\exp(a^{D}_{t\wedge T_{n}})={\cal E}_{t\wedge T_{n}}\left(q(q-1)h^{(q)}(\widetilde{Z}^{D,n},Q_{n})\right)^{1/(q-1)},\ \ \ \ \ \ 0\leq t\leq T,
\end{array}\end{equation}
and on $\Rbrack 0,T_n\Lbrack$, $\widehat\theta$ belongs to int$({\cal D}^+)$ and is a root for (\ref{root-theta-general0}). Thus, it is clear that, this last statement implies assertion (2.b). By virtue of Lemma \ref{LemmaforMEHM}, we deduce that the minimal Hellinger martingale density of order $q$ with respect to $Z^D$ (denoted by $\widetilde Z^D$) exists and
 $$
 h^{(q)}_t(\widetilde{Z}^{D,n},Q_{n})=h^{(q)}_{t\wedge T_n}(\widetilde{Z}^{D},Z^D).$$
 Therefore, a combination of this equality with (\ref{Unequations}) leads to the assertion (2.a).\\

\noindent Due to Proposition \ref{lem-represofMHE-changemeasure} (see formula (\ref{ZtildeZ}) and notice that our $\widehat\theta$ here is a version of $\widetilde\beta$ of that proposition) and (\ref{characterD(t)-power}), we derive
\begin{equation}\label{ZhatUtility}\begin{array}{lll}
\widehat Z=Z^{D}\widetilde{Z}^{D}{\cal E}({\widehat{\theta}}\cdot S)&=&Z^{D}{\cal E}\left(\widetilde{H}^{D}\cdot S+q(q-1)h^{(q)}(\widetilde{Z}^{D},Z^{D})\right)^{p-1}{\cal E}(\widehat{\theta}\cdot S)\\
&=&Z^{D}{\cal E}(\widehat{\theta}\cdot S)^{p-1}{\cal E}\left(q(q-1)h^{(q)}(\widetilde{Z}^{D},Z^{D})\right)^{p-1}{\cal E}(\widehat{\theta}\cdot S)\\
&=&\left(D_{0}x^{p}\right)^{-1}U(t,x{\cal E}_{t}(\widehat{\theta}\cdot S)).
\end{array}\end{equation}
This proves that $\widehat Z$ is a martingale, since $U(t,x)$ is a forward utility with optimal portfolio rate $\widehat\theta$. This ends the proof of $(1)\Rightarrow (2)$.\\

\noindent In the remaining part of this proof, we will address $(2)\Rightarrow (1)$. Suppose that assertion (2) is fulfilled, and remark that (\ref{ZhatUtility}) remains valid as long as assertion (2--a) holds. Thus, we obtain
\begin{eqnarray*}\label{cont-thetatilde-Utrue}
{U}_p\left(\cdot,x{\cal E}\left(\widehat{\theta}\cdot S\right)\right)&=&D_{0}x^{p}\widehat{Z},
\end{eqnarray*}
and due to assertion (2-c), we conclude that $U\left(\cdot,x{\cal E}\left(\widehat{\theta}\cdot S\right)\right)$ is a martingale for any $x>0$. Furthermore, for any admissible strategy $\theta$, we have
\begin{eqnarray}
{U}_p\Bigl(t,x{\cal E}_t\left({\theta}\cdot S\right)\Bigr)&=&D_{0}x^{p}\widehat{Z}_t{\cal E}_t\left({\theta}\cdot S\right)^p{\cal E}_t\left(\widehat{\theta}
\cdot S\right)^{-p}.\label{thetatilde-Usuper-changemeasure}
\end{eqnarray}
Thanks to $pD_0>0$ ($U(t,x)$ is a random field utility), Proposition \ref{theta-supermatingale} (take $\widetilde Z:=Z^D \widetilde Z^D$ which is a martingale density for $S$ by definition  of $\widetilde Z^D$), and
$$\sup_{\tau\in\mathcal{T}_{T}} E^{\widehat{Q}}\left\{{\cal E}_{\tau}\left({\theta}\cdot S\right)^p{\cal E}_{\tau}\left(\widehat{\theta}
\cdot S\right)^{-p}\right\}=-\frac{1}{D_{0}x^{p}}\sup_{\tau\in\mathcal{T}_{T}} E\Bigl[U\Bigl(\tau,x{\cal E}_{\tau}\left({\theta}\cdot S\right)\Bigr)^{-}\Bigr]<+\infty,$$
we deduce that $U\left(t,x{{\cal E}}_t(\theta\cdot S)\right)$ is a supermartingale for any admissible strategy $\theta$ and any $x>0$. This ends the proof of the theorem.\qed

\begin{remark} 1) From the proof of the theorem, one can easily see that (\ref{Ddecomposition}) is fulfilled when $U_p$ is as forward utility.\\
  2) Theorems \ref{power-characterization-ass} and \ref{pconstant} state that the set of forward utility having the form of (\ref{harapower}) is parameterized by the pair $\left(p(0), Z^D\right)$, where $Z^D$ is the positive local martingale in the multiplicative Doob-Meyer decomposition of $D$ that exists when  $U_p$ is a forward utility. If we suppose that $p(0)$ is given, then Theorem \ref{power-characterization-ass} claims that all power-type forward utilities are obtained by combining the local change of probability ---that corresponds to the local change of belief economically--- and Theorem \ref{case-pre+var-Hellinger}. Again, the optimal portfolio rate is explicitly described by (\ref{root-theta-general0}) once the parameters $\left(p(0), Z^D\right)$ are chosen.
\end{remark}

\subsection{Proof of Theorem \ref{case-pre+var-Hellinger}}\label{sectyionproofofTheorem4}

The proof of Theorem \ref{case-pre+var-Hellinger} requires three intermediary technical lemmas. The main tools used in the proof of these lemmas are two types of integrations ---involved with the random measure $\mu$--- that we precise in the following.

\begin{definition}\label{integrationswithmu} Let $K=(K(t,\omega,x),\ \ x\in\mathbb R^d,\ \omega\in\Omega,\ t\in [0,T])$ be a $\widetilde{\cal P}$-measurable functional.\\
(1) If $K$ is nonnegative, we denote by $K\star\mu$ and $K\star\nu$) the following nondecreasing processes
\begin{equation}\label{Wstarmu}\begin{array}{llll}
\left(K\star\mu\right)_t:=\displaystyle\int_0^t \int_{\mathbb R^d}K(u,x)\mu(du,dx)=\sum_{0<u\leq t}K(u,\Delta S_u)I_{\{ \Delta S_u\not=0\}}\\
\\ \mbox{and}
 \ \ \ \ \left(K\star\nu\right)_t:=\displaystyle\int_0^t \int_{\mathbb R^d}K(u,x)\nu(du,dx).\end{array}\end{equation}
(2) Let $Q$ be a probability, and $\nu^Q$ is the compensator of $\mu$ under $Q$. We say that $K$ is $(\mu-\nu^Q)$-integrable, and denote $K\in {\cal G}^1_{loc}(\mu,Q)$, if
\begin{equation}\label{munuintegrability}
 \left(\sum_{0<t\leq\cdot}\Bigl(K(t,\Delta S_t)I_{\{\Delta S_t\not=0\}}-\int K_t(x)\nu^Q(\{t\},dx)\Bigl)^2\right)^{1/2}\in {\cal A}^+_{loc}(Q).\end{equation}
 In this case, the resulting integral ---denoted by $K\star(\mu-\nu^Q)$--- is a $Q$-local martingale. When $Q=P$, we simply write $W\in {\cal G}^1_{loc}(\mu)$.
\end{definition}

For more details and properties about these two integrations using random measures as well as the obtained integrals, we refer the reader to Jacod (1979), Jacod and Shiryaev (2003), or He, Wang and Yan and (1992).

\begin{lemma}\label{lemmafrorstep(b)}
Suppose that $S$ is locally bounded, ${\cal Z}_{loc}^e(S)\not=\emptyset$, Assumptions \ref{crucialassumption} with $M\equiv 1$ holds, $D$ is predictable with finite variation, and $U_p$ given by (\ref{harapower}) is a forward utility. Then, the process $D$ satisfies
\begin{equation}\label{Drepresentation}
D=D_0\exp(a^D)=D_0{{\cal E}}(X^D) ,\hskip 1cm  X^D:=a^D+\sum (e^{\Delta a^D}-1-\Delta a^D),\end{equation}
and the following assertions hold.\\
(i) For any $\alpha\in (0,1)$, the processes
\begin{equation}\label{processestetahat}
{{(1+\widehat\theta^{tr} z)^p-1-p \widehat\theta^{tr} z}\over{p(p-1)}}I_{\{ \vert \widehat\theta^{tr} z\vert\leq \alpha\}}\star\mu,\ \
 \ \mbox{and}\ \ \ (1+\widehat\theta^{tr} z)^pI_{\{ \vert \widehat\theta^{tr} z\vert> \alpha\}}\star\mu,\end{equation}
are non decreasing and locally integrable.\\
(ii) $P\otimes A$-almost all $(\omega,t)\in \Omega\times [0,+\infty[$, $\widehat\theta\in int\left(\mathcal{D}^+\right)$.\\
(iii) The optimal portfolio rate, $\widehat\theta$, is a root of (\ref{eq-root-thetatilde}). That is
\begin{equation}\label{martingalequationforthethat}
0={1\over{p-1}}b+c\widehat\theta+\int {{(1+\widehat\theta^{tr} z)^{p-1}-1}\over{p-1}}z F(dz),\ \ \ \ \ \ P\otimes A-a.e.\end{equation}
(iv) The optimal portfolio rate, $\widehat\theta$, satisfies
\begin{equation}\label{hellinger=D}
e^{-\Delta a^{D}}\cdot X^{D}=\displaystyle\frac{p(p-1)}{2}\widehat{\theta}^{tr}c\widehat{\theta}\cdot A-\left[\int\left[(1+\widehat{\theta}^{tr}z)^{p-1}-1+(1-p){\widehat\theta}^{tr}x(1+\widehat{\theta}^{tr}z)^{p-1}\right]F(dz)\right]\cdot A.
\end{equation}
Here $X^D$ is given by (\ref{Drepresentation}).\\
(v) If we denote $u(t,\omega,x):=\left(1+x^{tr}\widehat\theta_t(\omega)\right)^{p-1}$, then
\begin{equation}\label{vhat}
1-a+\widehat u_t:=1+\int (u(t,x)-1)\nu(\{t\},dx)=\exp(-\Delta a^D),\ \ \ \ \ a_t:=\nu(\{t\},{{\it \hbox{I\kern-.2em\hbox{R}}}}^d),\end{equation}
and as a consequence, the positive predictable process, $\left(1-a+\widehat u\right)^{-1}$, is locally bounded.
\end{lemma}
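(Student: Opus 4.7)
The plan is to exploit the martingale/supermartingale structure of $U_p$ on wealth processes in order to read off the equations for $\widehat\theta$ and $D$. First, since $U_p(t,x) = D(t)x^p$ must be strictly monotone and concave in $x$, one has $pD > 0$ and $p(p-1)D < 0$, so $D$ keeps a constant sign and $a^D := \log(D/D_0)$ is a well-defined c\`adl\`ag predictable process with finite variation. Dol\'eans-Dade's exponential formula then yields $\exp(a^D) = \mathcal{E}(X^D)$ with $X^D$ as in (\ref{Drepresentation}).

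The key step is to express the normalized wealth-evaluated utility as a single stochastic exponential. Applying It\^o to $z \mapsto z^p$ on the positive semimartingale $\mathcal{E}(\theta \cdot S)$, I would write $\mathcal{E}(\theta \cdot S)^p = \mathcal{E}(\Phi(\theta))$ with
\[
\Phi(\theta) := p\theta \cdot S + \tfrac{p(p-1)}{2}\theta^{tr}c\theta \cdot A + [(1+\theta^{tr}x)^p - 1 - p\theta^{tr}x]\star\mu,
\]
and by Yor's product formula, $U_p(t, x\mathcal{E}(\theta\cdot S))/(D_0 x^p) = \mathcal{E}(X^D + \Phi(\theta) + [X^D, \Phi(\theta)])$. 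Using $c \equiv 0$ on $\{\Delta A \neq 0\}$ to simplify $\Delta\Phi(\theta) = [(1+\theta^{tr}\Delta S)^p - 1] I_{\{\Delta S \neq 0\}}$, one gets $[X^D, \Phi(\theta)] = ((e^{\Delta a^D} - 1)[(1+\theta^{tr}x)^p - 1])\star\mu$. The forward utility property then forces the predictable finite-variation part of $X^D + \Phi(\widehat\theta) + [X^D, \Phi(\widehat\theta)]$ to vanish, and those of the $\theta$-perturbations to have a definite sign (set by $\mathrm{sign}(D_0) = \mathrm{sign}(p)$).

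I would next split this drift equation into its continuous component (in $A$) and its atomic component at predictable jumps of $A$. A variational argument at an interior optimum of the continuous part yields the $\mathbb R^d$-equation (\ref{martingalequationforthethat}), establishing (iii); substituting the FOC back into the continuous drift equation and using the identity $py[(1+y)^{p-1} - 1] - [(1+y)^p - 1 - py] = -[(1+y)^{p-1}(1+(1-p)y) - 1]$ produces (iv). At a predictable time $\tau$, the atomic part of the drift condition reduces, after compensation, to $\Delta X^D_\tau + e^{\Delta a^D_\tau}\int [(1+\widehat\theta^{tr}x)^p - 1] \nu(\{\tau\}, dx) = 0$; combined with the atomic FOC $\int (1+\widehat\theta^{tr}x)^{p-1} x \, \nu(\{\tau\}, dx) = 0$ (which forces $\int (1+\widehat\theta^{tr}x)^p \nu(\{\tau\}, dx) = \int (1+\widehat\theta^{tr}x)^{p-1} \nu(\{\tau\}, dx)$), this delivers (v) in the form $e^{-\Delta a^D} = 1 - a + \widehat u$. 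For (ii), I argue by contradiction with Assumption \ref{crucialassumption} ($M\equiv 1$): on $\Gamma = \{F(\mathbb R^d) > 0,\ \widehat\theta \notin \mathrm{int}(\mathcal{D}^+)\}$, a predictable approximation $\widehat\theta_n \in \mathrm{int}(\mathcal{D}^+)$ of $\widehat\theta$ would force $\int K_p(\widehat\theta_n^{tr}x) F(dx) \to +\infty$, contradicting the pointwise bound $\widehat\theta_n^{tr}b + (p-1)\widehat\theta_n^{tr}c\widehat\theta_n$ furnished by the FOC (iii) dotted with $\widehat\theta_n$. Finally, (i) collects the integrability of the two pieces of the jump integral in $\Phi(\widehat\theta)$: both processes are nondecreasing by the sign-analysis of $(1+y)^p - 1 - py$ relative to $p(p-1)$, while their local integrability is a byproduct of the semimartingale property of $\Phi(\widehat\theta)$ obtained from admissibility of $\widehat\theta$. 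The most delicate step is the careful splitting of the drift between continuous and atomic components, together with matching jumps of $a^D$ to those of $A$ via $\nu(\{t\}, dx) = \Delta A_t\, F(dx)$; this is exactly where Assumption \ref{crucialassumption} and the interior property (ii) are needed to rule out degeneracies at $\partial\mathcal{D}^+$.
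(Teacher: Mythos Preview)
Your overall strategy coincides with the paper's: express $U_p(t,x\mathcal E(\theta\cdot S))/(D_0x^p)$ as a stochastic exponential (your $\Phi(\theta)$ is the paper's $X^\theta$), read off the drift condition from the supermartingale/martingale property, and extract equations (iii)--(v). The paper packages the drift condition as the pair $\min_{\theta}\Phi_p(\theta)=\Phi_p(\widehat\theta)$ and $e^{-\Delta a^D}\cdot X^D=p(1-p)\,\Phi_p(\widehat\theta)\cdot A$, then appeals to a separate technical lemma (Lemma~\ref{lem-interior-integrability}) for (ii) and (iii).

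There is, however, a genuine circularity in your ordering of (ii) and (iii). You derive (iii) by a ``variational argument at an interior optimum'', which already presupposes $\widehat\theta\in\mathrm{int}(\mathcal D^+)$---that is, (ii). Your subsequent proof of (ii) then invokes ``the FOC (iii) dotted with $\widehat\theta_n$'', which is not available until (ii) is known (and in any case the FOC at $\widehat\theta$ gives information about $\widehat\theta$, not about the approximants $\widehat\theta_n$). The paper avoids this as follows. From the supermartingale property for \emph{all} admissible $\theta$ one first deduces that $\widehat\theta$ minimizes $\Phi_p$ over $\mathcal D^+$ (no FOC yet). Setting $\lambda_n:=\tfrac{n-1}{n}\widehat\theta\in\mathrm{int}(\mathcal D^+)$, convexity of $\Phi_p$ together with minimality at $\widehat\theta$ give $\lambda_n^{tr}\nabla\Phi_p(\lambda_n)\le 0$, where $\nabla\Phi_p(\lambda_n)$ is well defined because $\lambda_n$ is interior (Lemma~\ref{lem-interior-integrability}(ii)). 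This inequality bounds $\int K_p(\lambda_n^{tr}x)F(dx)$ by a finite combination of $\lambda_n^{tr}b$ and $\lambda_n^{tr}c\lambda_n$, which converges as $n\to\infty$; Assumption~\ref{crucialassumption} then rules out $\widehat\theta\notin\mathrm{int}(\mathcal D^+)$, yielding (ii). Only after this is the first-order condition (iii) legitimate. Your proof should therefore be reorganized: first extract the minimization property from the drift comparison, then prove (ii) via convexity and the gradient at interior approximants (not via (iii)), and only then deduce (iii), (iv), (v).
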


{\bf Proof.} Since $pD(t)>0$ for all $(t,\omega)\in [0,T]\times\Omega$ ---$U(t,x)$ is a random field utility---, it is obvious to see that $D(t)/D_0$ is a positive and predictable process with finite variation. Therefore, the decomposition in (\ref{Drepresentation}) holds. The proof of the assertions (i)---(v) of the lemma will be carried in three steps (parts a),  b), and c) below). Part a) proves assertion (i), while part b) proves assertions (ii), (iii) and (iv). The part c) proves assertion (v).\\
{\bf a)} Remark that
\begin{equation}\label{U-theta}\begin{array}{lll}
U(t,x{\cal E}_{t}(\theta\cdot S))=D_{0}x^{p}\exp\left(a^{D}_{t}\right){\cal E}_{t}\left(\theta\cdot S\right)^{p}
=D_{0}x^{p}{\cal E}_{t}\left(X^{D}\right){\cal E}_{t}\left({X}^{\theta}\right)\\
\\
\hskip 2.5cm =D_{0}x^{p}{\cal E}_{t}\left((1+\Delta X^D)\cdot {X}^{\theta}+X^D\right),\end{array}
\end{equation}
for any admissible strategy $\theta$, where $X^D$ is defined in (\ref{Drepresentation}) and $X^{\theta}$ is given by
\begin{equation}\label{Xtheta}
{X}^{\theta}=p\theta\cdot S+\frac{p(p-1)}{2}\theta^{tr}c\theta\cdot A+\left((1+\theta^{tr}z)^{p}-1-p\theta^{tr}z\right)\star\mu.
\end{equation}
Thus, $U(t,x{\cal E}_{t}(\theta\cdot S))$ is a local supermartingale (local martingale for $\theta=\widehat\theta$) if and only if
\begin{equation}\label{supermartinagelforXtheta}
{{1}\over{p(1-p)}}\left(e^{\Delta a^{D}}\cdot {X}^{\theta}+X^{D}\right)\ \ \mbox{is a local supermartingale (local martingale for }\theta=\widehat\theta).\end{equation}
Due to Ito's formula, we easily deduce that (\ref{supermartinagelforXtheta}) holds if and only if
\begin{equation}\label{conditionA}
\big|(1+\theta^{tr}z)^{p}-1-p\theta^{tr}z I_{\{\vert \theta^{tr}z\vert\leq \alpha\}} \big|\star\mu\in{\cal A}^+_{loc},\end{equation}
and $P\otimes A$-almost all $(\omega, t)$, we have
\begin{equation}\label{conditionB}
{{\exp(-\Delta a^{D})}\over{p(1-p)}}\cdot X^{D}=\Phi_p\left(\widehat{\theta}\right)\cdot A,\ \ \  \mbox{and}\quad\quad\mathop{\min}_{\theta\in\mathbb R^d}\left[\Phi_p(\theta)\right]=\Phi_p(\widehat\theta),
\end{equation}
where $\Phi_p$ is given by (\ref{Phip}), that we recall below for the convenience of the reader
\begin{equation}\label{Phipfunction}
\Phi_p(\theta):={{b^{tr}\theta}\over{p-1}}+{1\over{2}}\theta^{tr}c\theta+\int{{(1+\theta^{tr} x)^p-1-p\theta^{tr} x}\over{p(p-1)}}F(dx).\end{equation}

\noindent Due to $I_{\{\vert \theta^{tr}z\vert> \alpha\}} \star\mu=\sum I_{\{\vert \theta^{tr}\Delta S\vert> \alpha\}}\in {\cal A}^+_{loc}$, and
$$\begin{array}{lll}
 \big|(1+\theta^{tr}z)^{p}-1-p\theta^{tr}z I_{\{\vert \theta^{tr}z\vert\leq \alpha\}} \big|\star\mu=\big|(1+\theta^{tr}z)^{p}-1-p\theta^{tr}z  \big| I_{\{\vert \theta^{tr}z\vert\leq \alpha\}}\star\mu+\\
 \\
 \hskip 6.5cm +\big|(1+\theta^{tr}z)^{p}-1\big| I_{\{\vert \theta^{tr}z\vert> \alpha\}} \star\mu,\end{array}$$
 we deduce that assertion (i) of the lemma follows from (\ref{conditionA}).\\
{\bf b)} By combining the second equality in (\ref{conditionB}) with Lemma \ref{lem-interior-integrability}, both assertion (ii) and (iii) of the lemma follows immediately, while assertion (iv) follows from inserting (\ref{martingalequationforthethat}) into the first equation of (\ref{conditionB}).\\
{\bf c)} By multiplying (\ref{martingalequationforthethat}) with $\Delta A$, and using $\Delta A b=\int x F(dx)\Delta A$, $\Delta A c=0$, and $F_t(dx)\Delta A_t=\nu(\{t\},dx)$, we get
 \begin{equation}\label{martingalejump}
  \displaystyle
\int (1+\widehat\theta^{tr}z)^{p-1}z \nu(\{t\},dz)=0,\end{equation}
  on the one hand. On the other hand, by taking jumps in (\ref{hellinger=D}), inserting (\ref{martingalejump}) in the resulting equation afterwards, and using again $\Delta Ac=0$, we obtain
$$
 1-\exp(-\Delta a^{D})= \exp(-\Delta a^{D})\Delta X^{D}=-\int(1+\widehat{\theta}^{tr}z)^{p-1}\nu(\{t\},dz)+a=-\widehat u+a.$$
  Thus, assertion (v) follows from the above equality and the local boundeness of $e^{-\Delta a^D}$.\\
This achieves the proof of the lemma. \qed


\noindent The following lemma will show how the minimal Hellinger martingale density of order $q$ is built-up and is related to the optimal portfolio rate, $\widehat\theta$, when $U_p$ is a forward utility.

\begin{lemma}\label{lemmafrorstep(c)} Suppose that $S$ is locally bounded, ${\cal Z}_{loc}^e(S)\not=\emptyset$, Assumptions \ref{crucialassumption} with $M\equiv 1$ holds, $D$ is predictable with finite variation, and $U_p$ given by (\ref{harapower}) is a forward utility. Then, the following properties hold:\\
(i) The $\widetilde{\cal P}$-measurable functional
\begin{equation}\label{W}
W_{t}(z):=\frac{\left(1+\widehat{\theta}_{t}^{tr}z\right)^{1/(q-1)}-1}
{1-a_{t}+\int\left(1+\widehat{\theta}_{t}^{tr}y\right)^{1/(q-1)}\nu(\{t\},dy)}=:{{u(t,z)-1}\over{1-1+\widehat u}},\end{equation}
is $(\mu-\nu)$-integrable, (i.e. $W\in {\cal G}^1_{loc}(\mu)$ see (\ref{munuintegrability}) for the definition of this set).\\
(ii) The process, $\widetilde Z$, defined by
\begin{equation}\label{Ztilde}
\widetilde{Z}:={\cal E}\left(\widetilde N\right),\quad \widetilde N:={1\over{q-1}}\widehat{\theta}\cdot S^{c}+W\star(\mu-\nu),\end{equation}
is a martingale density for $S$.\\
(iii) The following
\begin{equation}\label{HellingerXD}
(q-1)X^D+\sum\Bigl[\left(1+\Delta X^D\right)^{q-1}-1-(q-1)\Delta X^D\Bigr]=q(q-1)h^{(q)}(\widetilde Z, P)\end{equation}
holds, where $X^D$ is defined in (\ref{Drepresentation}).
\end{lemma}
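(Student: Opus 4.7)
The plan is to exploit the algebraic identities $(q-1)(p-1)=1$ and $(p-1)q=p$, which together imply $u(t,z)^{q-1}=1+\widehat{\theta}^{tr}z$ and $u(t,z)^{q}=(1+\widehat{\theta}^{tr}z)^{p}$, where $u(t,z):=(1+\widehat{\theta}^{tr}z)^{p-1}$ as in Lemma \ref{lemmafrorstep(b)}. Under these identities, $W$ rewrites as $(u-1)/(1-a+\widehat{u})$, and Lemma \ref{lemmafrorstep(b)}(v) yields $1-a+\widehat{u}=\exp(-\Delta a^D)$ with locally bounded reciprocal, so that after localization every estimate essentially reduces to control of $u$. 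For (i), I would show $(W-\widehat{W})^{2}\star\nu\in\mathcal{A}^{+}_{loc}$ (so that $W\in\mathcal{G}^{2}_{loc}(\mu)\subset\mathcal{G}^{1}_{loc}(\mu)$) together with local integrability of the $(1-a)\widehat{W}^{2}$ contribution at atoms of $A$. Splitting jumps by $\alpha\in(0,1)$: on small jumps $\{|\widehat{\theta}^{tr}z|\leq\alpha\}$, Taylor expansion gives $|u-1|^{2}\leq C|\widehat{\theta}^{tr}z|^{2}$, comparable to the small-jump integrand $((1+\widehat{\theta}^{tr}z)^{p}-1-p\widehat{\theta}^{tr}z)/(p(p-1))$ that is controlled by Lemma \ref{lemmafrorstep(b)}(i); on large jumps, $\widehat{\theta}\in\mathrm{int}(\mathcal{D}^{+})$ from Lemma \ref{lemmafrorstep(b)}(ii) together with the local integrability of $u^{q}\star\mu=(1+\widehat{\theta}^{tr}z)^{p}\star\mu$ (second process of Lemma \ref{lemmafrorstep(b)}(i)) yields the required bounds after separating the regions $\widehat{\theta}^{tr}z>\alpha$ and $\widehat{\theta}^{tr}z<-\alpha$.

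For (ii), the positivity of $\widetilde{Z}=\mathcal{E}(\widetilde{N})$ reduces to $1+\Delta\widetilde{N}>0$. Direct computation using $\widehat{W}=(\widehat{u}-a)/(1-a+\widehat{u})$ yields $1+\Delta\widetilde{N}=u/(1-a+\widehat{u})$ on $\{\Delta S\neq 0\}$ and $1+\Delta\widetilde{N}=1/(1-a+\widehat{u})$ on $\{\Delta S=0\}$, both strictly positive since $\widehat{\theta}\in\mathrm{int}(\mathcal{D}^{+})$ forces $u>0$ $\mu$-almost everywhere and $e^{-\Delta a^{D}}>0$. For the martingale density property, integration by parts shows that $\widetilde{Z}S$ is a local martingale iff the predictable compensator of $\widetilde{Z}_{-}(B+[\widetilde{N},S])$ vanishes. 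Splitting by continuous and atomic parts of $A$, the continuous drift equation reduces precisely to $b+(p-1)c\widehat{\theta}+\int(u-1)zF(dz)=0$, which is (\ref{martingalequationforthethat}) from Lemma \ref{lemmafrorstep(b)}(iii); on each atom (where $c\Delta A=0$) the drift collapses by direct algebra using $\int u(s,z)z\nu(\{s\},dz)=0$ from (\ref{martingalejump}) together with $b\Delta A=\int z\nu(\{s\},dz)$.

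For (iii), I would decompose both sides of (\ref{HellingerXD}) into their continuous-in-$A$ and atomic-in-$A$ parts. On the continuous part, using $q(q-1)(p-1)^{2}=p$ and noting $W=u-1$ there (since $a=\widehat{u}=0$), the identity boils down to $\int u(u^{q-1}-1-\widehat{\theta}^{tr}z)F(dz)=0$ combined with the explicit form of $a^{D,c}$ from Lemma \ref{lemmafrorstep(b)}(iv) and the martingale equation (\ref{martingalequationforthethat}); this is immediate because $u^{q-1}=1+\widehat{\theta}^{tr}z$. On the atomic part at each jump $s$ of $A$, the left-hand side evaluates to $e^{(q-1)\Delta a^{D}_{s}}-1$, while the right-hand side expands into $q(q-1)[\int f_{q}(W-\widehat{W})\nu(\{s\},dx)+(1-a_{s})f_{q}(-\widehat{W}_{s})]$; using $1+W-\widehat{W}=u/(1-a+\widehat{u})$ and $1-\widehat{W}=1/(1-a+\widehat{u})$, the required equality reduces after simplification to $\int u^{q}\nu(\{s\},dx)=\int u\,\nu(\{s\},dx)=\widehat{u}_{s}$, which follows from $u^{q-1}-1=\widehat{\theta}^{tr}z$ combined with (\ref{martingalejump}). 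The main technical obstacle will be the atomic part of (iii): one must correctly account for the $(1-a_{s})f_{q}(-\widehat{W}_{s})$ contribution to the compensator of $\sum f_{q}(\Delta\widetilde{N})$, coming from the event $\{\Delta S_{s}=0\}$ at predictable atoms of $A$ with conditional probability $1-a_{s}$; it is precisely this $1-a_{s}$ factor combined with the jump-martingale identity (\ref{martingalejump}) that makes the algebra close.
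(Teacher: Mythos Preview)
Your treatment of (ii) and (iii) is essentially correct and matches the paper's argument; for (iii) the paper compresses the computation by appealing to the Hellinger formulae in Proposition~\ref{lem-represofMHE-changemeasure} (with $Z\equiv 1$), while you do the same algebra by hand, and your key identity $\int u^{q}\,\nu(\{s\},dx)=\widehat{u}_{s}$ via $u^{q}=u(1+\widehat{\theta}^{tr}z)$ and (\ref{martingalejump}) is exactly what makes both sides close.

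There is, however, a genuine gap in your proof of (i). You aim for the stronger inclusion $W\in\mathcal{G}^{2}_{loc}(\mu)$, i.e.\ $(W-\widehat W)^{2}\star\nu\in\mathcal{A}^{+}_{loc}$, and on the large-jump region $\{\widehat\theta^{tr}z<-\alpha\}$ you appeal to the local integrability of $u^{q}\star\mu=(1+\widehat\theta^{tr}z)^{p}\star\mu$ together with $\widehat\theta\in\mathrm{int}(\mathcal{D}^{+})$. This does not suffice: since $2(p-1)<p$ for all $p<1$, on $\{1+\widehat\theta^{tr}z\ \mbox{small}\}$ one has $u^{2}=(1+\widehat\theta^{tr}z)^{2(p-1)}\gg (1+\widehat\theta^{tr}z)^{p}=u^{q}$, so $u^{q}$ cannot dominate $u^{2}$. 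The membership $\widehat\theta_{t}\in\mathrm{int}(\mathcal{D}^{+})$ only produces a predictable lower bound $1+\widehat\theta_{t}^{tr}z\geq\delta_{t}>0$ ($F_{t}$-a.e.), and $\delta_{t}^{\,p-2}$ is \emph{not} a priori locally bounded, so you cannot convert $u^{q}$-integrability into $u^{2}$-integrability by division.

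The paper avoids this obstruction by working directly with the $\mathcal{G}^{1}_{loc}$ norm: after reducing to $\bigl[\,(u-1)^{2}\star\mu\,\bigr]^{1/2}\in\mathcal{A}^{+}_{loc}$, it uses $(\sum y_{k}^{2})^{1/2}\leq\sum|y_{k}|$ on the large-jump piece, so that only \emph{first-power} integrability $V_{2}:=|u-1|I_{\{|\widehat\theta^{tr}z|>\alpha\}}\star\mu\in\mathcal{A}^{+}_{loc}$ is required. This is then obtained from the algebraic identity
\[
(1+\widehat\theta^{tr}z)^{p-1}=(1+\widehat\theta^{tr}z)^{p}-(1+\widehat\theta^{tr}z)^{p-1}\widehat\theta^{tr}z,
\]
combining the $u^{q}$-integrability from Lemma~\ref{lemmafrorstep(b)}(i) with the estimate $\bigl|\int_{\Gamma^{c}}(1+\widehat\theta^{tr}z)^{p-1}\widehat\theta^{tr}z\,F(dz)\bigr|\cdot A\in\mathcal{A}^{+}_{loc}$, the latter proved by rewriting the optimality equation (\ref{martingalequationforthethat}) and invoking the semimartingale decomposition of Proposition~\ref{theta-Doobdecomposition}. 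You should either drop the $\mathcal{G}^{2}_{loc}$ target and follow this square-root route, or supply an independent argument that $\delta_{t}^{\,p-2}$ is locally bounded (which is not available in general).
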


{\bf Proof.} Thanks to Lemma \ref{lemmafrorstep(b)}--(v), $\left(\widetilde\gamma_t\right)^{-1}:=\left(1-a+\widehat u\right)^{-1}=e^{\Delta a^D}$ is locally bounded, and
$$
\sum_{0<t\leq\cdot} (\widehat W_t)^2:=\sum_{0<t\leq\cdot}\left(\int W_t(x)\nu(\{t\},dx)\right)^2=\sum_{0<t\leq\cdot} \left({{\widehat u_t}\over{1-a+\widehat u_t}}\right)^2\preceq e^{3\vert\Delta a^D\vert}\cdot \vert a^D\vert_{var}$$ is a nondecreasing and locally bounded process. Therefore, $W\in {\cal G}^1_{loc}(\mu)$ if and only if $$
\left[\sum \left(W_t(\Delta S_t)\right)^2 I_{\{ \Delta S_t\not=0\}}\right]^{1/2}=\left[(\widetilde\gamma)^{-2}\Bigl((1+\widehat{\theta}^{tr}x)^{p-1}-1\Bigr)^2\star\mu\right]^{1/2}\in {\cal A}^+_{loc}(P).$$ Again, the boundedness of $(\widetilde\gamma)^{-2}=(1-a+\widehat u)^{-2}=e^{2\Delta a^D}$ (see assertion (v) of Lemma \ref{lemmafrorstep(b)}) implies that the above claim is equivalent to
\begin{equation}\label{localintegrable}
\left[\Bigl((1+\widehat{\theta}^{tr}x)^{p-1}-1\Bigr)^2\star\mu\right]^{1/2}\in {\cal A}^+_{loc}(P).\end{equation}
If we put $\Gamma:=\{z\in \mathbb R^d\Big| \ \vert \widehat\theta^{tr} z\vert\leq\alpha\}$, then ---due to $(\sum (\Delta X)^{2})^{1/2}\leq \sum |\Delta X|$--- it is easy to check that (\ref{localintegrable}) is implied by the local integrability of
 \begin{equation}\label{V1V2}
 V_1:=\Bigl((1+\widehat{\theta}^{tr}z)^{p-1}-1\Bigr)^2 I_{\Gamma}\star\mu,\ \mbox{and}\  \ V_2:=\vert
(1+\widehat{\theta}^{tr}z)^{p-1}-1\vert I_{\Gamma^c}\star\mu.\end{equation}
The local integrability of $V_1$ follows directly from $ I_{\{\vert\widehat\theta^{tr}\Delta S\vert\leq
\alpha\}}\cdot[\widehat\theta\cdot S, \widehat\theta\cdot
S]\in{\cal A}^+_{loc}$ (since $\widehat{\theta}$ is $S$-integrable and hence $\widehat\theta\cdot S$ is a c\`adl\`ag semimartingale), and
\begin{eqnarray*}
\frac{(q-1)^{2}}{(1-\alpha)^{2(p-2)}}V_1 &\preceq& \sum (\widehat\theta^{tr}\Delta S)^2
I_{\{\vert\widehat\theta^{tr}\Delta S\vert\leq \alpha\}}
\preceq
I_{\{\vert\widehat\theta^{tr}\Delta S\vert\leq
\alpha\}}\cdot[\widehat\theta\cdot S, \widehat\theta\cdot
S].\end{eqnarray*}

\noindent To prove the local integrability of $V_2$, it is enough to prove
\begin{equation}\label{integro1}
\vert \int (1+\widehat\theta^T z)^{p-1}\widehat\theta^{tr} z I_{\{\vert \widehat\theta^{tr} z\vert >\alpha\}} F(dz)\vert\cdot A\in {\cal A}^+_{loc}.\end{equation}
Indeed, by combining (\ref{integro1}) with $(1+\widehat\theta^{tr} z)^{p}I_{\{\vert \widehat\theta^{tr} z\vert >\alpha\}}\star\mu\in {\cal A}^+_{loc}$ (see Lemma \ref{lemmafrorstep(b)}--(i)), and
$$
(1+\widehat\theta^{tr} z)^{p-1}I_{\Gamma^c}\star\nu=
-\int_{\Gamma^c} (1+\widehat\theta^{tr} z)^{p-1}\widehat\theta^{tr} z F(dz)\cdot A+(1+\widehat\theta^{tr}z)^{p}I_{\Gamma^c}\star\nu,$$
we deduce that $(1+\widehat\theta^{tr} z)^{p-1}I_{\{\vert \widehat\theta^{tr}z\vert >\alpha\}}\star\mu$ is locally integrable (since it is nondecreasing and its compensator is locally integrable). Finally, due to
$$I_{\{\vert \widehat\theta^{tr}z\vert >\alpha\}}\star\mu=\sum I_{\{\vert \widehat\theta^{tr}\Delta S\vert >\alpha\}}\in {\cal A}^+_{loc},$$ which follows from the fact that $\widehat\theta\cdot S$ is a c\`adl\`ag semimartingale, we conclude that $V_2$ is locally integrable. In the remaining part of this proof, we will prove (\ref{integro1}). Thanks to Proposition \ref{theta-Doobdecomposition}, we have
\begin{equation}\label{equa1000}
\widehat\theta^{tr} c\widehat\theta\cdot A\in{\cal A}^+_{loc},\ \ \mbox{and}\ \ \vert\widehat\xi\vert\cdot A:=\vert\widehat\theta^{tr} b-\int \widehat\theta^{tr} z I_{\{\vert \widehat\theta^{tr} z\vert>\alpha\}} F(dz)\vert\cdot A\in{\cal A}^+_{loc}.\end{equation}
Since $\widehat\theta$ satisfies (\ref{martingalequationforthethat}), then we get (recall that $\Gamma:=\{x\in\mathbb R^d:\ \vert\widehat\theta^{tr} x\vert\leq \alpha\}$ and $q-1=(p-1)^{-1}$)
\begin{equation}\label{equa1001}
-(q-1)\widehat\xi-\widehat\theta^{tr} c\widehat\theta=
{1\over{p-1}}\int_{\Gamma^c} (1+\widehat\theta^{tr} z)^{p-1}\widehat\theta^{tr} z F(dz)+ \int_{\Gamma} {{(1+\widehat\theta^{tr} z)^{p-1}-1}\over{p-1}}\widehat\theta^{tr} z F(dz).\end{equation}
Then, by combining
$$\begin{array}{llll}
0\preceq {{(1+\widehat\theta^{tr} z)^{p-1}-1}\over{p-1}}(\widehat\theta^{tr} z) I_{\{\vert \widehat\theta^{tr} z\vert \leq\alpha\}}\star\mu\preceq (1-\alpha)^{p-2}(\widehat\theta^{tr} z)^2 I_{\{\vert \widehat\theta^{tr} z\vert \leq\alpha\}}\star\mu\\
\\
\hskip 5cm \preceq (1-\alpha)^{p-2}I_{\{\vert \widehat\theta^{tr} \Delta S\vert \leq\alpha\}}\cdot [\widehat\theta\cdot S,\widehat\theta\cdot S]\in {\cal A}^+_{loc},\end{array}$$
(\ref{equa1000}) and (\ref{equa1001}), we conclude that (\ref{integro1}) holds. This ends the proof of assertion (i).\\
Thus, $\widetilde Z$ is well defined and is a positive local martingale. Then, a direct application of Ito's formula for $\widetilde ZS$ leads to conclude that $\widetilde ZS$ is a local martingale if and only if
$$
b\cdot A+(p-1)c\widehat\theta\cdot A+\int x\left((1+\widehat\theta^{tr} x)^{p-1}(1-a+\widehat u)^{-1}-1\right)F(dx)\cdot A\equiv 0.$$
Then, it is easy to check ---by distinguishing the two cases whether $\Delta A=0$ or $\Delta A\not=0$--- that the above equation is equivalent to (\ref{martingalequationforthethat}). Indeed, it is clear that we have $b\Delta A=\int xF(dx)\Delta A$ and $c\Delta A=0$, and on $\{\Delta A=0\}$ we have $1-a+\widehat u=1$. This ends the proof of assertion (ii).\\

\noindent In the remaining part of this proof, we focus on proving the last assertion (i.e. assertion (iii)). By combining (\ref{vhat}) and (\ref{HZtilde}) when $Z\equiv 1$, we deduce that
 $$
 (1+\Delta X^D)^{q-1}-1=q(q-1)\Delta h^{(q)}(\widetilde Z,P).$$
 This equation is exactly (\ref{HellingerXD}) on $\{ \Delta A\not=0\}$, while on $\{\Delta A=0\}$ (\ref{HellingerXD}) follows from combining (\ref{hellinger=D}) and (\ref{HellingerAc}) when $Z\equiv 1$ (recall here $\widetilde\beta$ and $\widehat\theta$ coincide). This proves (\ref{HellingerXD}), and the proof of the lemma is completed. \qed


\begin{remarks}\label{remarkforp/zero}
1) In the proof of Lemma \ref{lemmafrorstep(c)}, it is easy to notice that the proof of (\ref{localintegrable}) follows exactly from $\widehat\theta\in L(S)$ and the equation (\ref{martingalequationforthethat}) that $\widehat\theta$ satisfies. Therefore, the proof is also valid for the case of $p=0$. Thus, if $\widehat\theta\in L(S)$ and is a root of
\begin{equation}\label{martinbgaleforthetahatzero}
b-c\lambda+\int \left[\left(1+\lambda^{tr} x\right)^{-1}-1\right]x F(dx)=0,\end{equation}
 then the process $\Bigl[\left(1+\widehat\theta^{tr}x\right)^{-1}-1\Bigr]\in {\cal G}^1_{loc}(\mu)$ .\\
2) Also the proof of Lemma \ref{lemmafrorstep(e)} (see below) is based on (\ref{martingalequationforthethat}) that $\widehat\theta$ fulfills and the form of $\widetilde Z$ given by (\ref{Ztilde}) only. These two ingredients do not assume any condition on $p\in (-\infty,1)$. As a result, Lemma \ref{lemmafrorstep(e)} is still valid for the case of $p=0$, or equivalently $\widetilde Z$ is the minimal Hellinger martingale density of order zero as long as $\widehat\theta\in L(S)$ solves (\ref{martingalequationforthethat}) and $\widetilde Z$ is given by (\ref{Ztilde}) when $p=0$.
\end{remarks}


\begin{lemma}\label{lemmafrorstep(e)}
The process $\widetilde Z$ defined in Lemma \ref{lemmafrorstep(c)} is the minimal Hellinger martingale density of order $q$. That is, $\widetilde Z$ is a martingale density (belongs to ${\cal Z}^e_{loc}(S,P)$) satisfying
\begin{equation}\label{goaloflemma}
h^{(q)}(\widetilde Z,P)\preceq h^{(q)}(Z,P),\end{equation}
for any $ Z\in{\cal Z}^e_{loc}(S,P)$.
\end{lemma}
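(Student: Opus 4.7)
The plan is to exploit convexity of $f_q$ together with the first-order optimality condition (\ref{martingalequationforthethat}) satisfied by $\widehat\theta$. First, I fix an arbitrary candidate $Z={\cal E}(N)\in{\cal Z}^e_{q,loc}(S)$ and apply Jacod's representation (Theorem \ref{representation}) to decompose
\begin{equation*}
N \;=\; \beta\cdot S^c \;+\; (Y-1)\star(\mu-\nu) \;+\; N',
\end{equation*}
where $\beta$ is predictable, $Y$ is a nonnegative $\widetilde{\cal P}$-measurable functional, and $N'$ is a $P$-local martingale orthogonal (in Jacod's sense) to $S^c$ and to $(\mu-\nu)$ restricted to the jumps of $S$. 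The local-martingale property of $ZS$ translates into the constraint $b+c\beta+\int x(Y(x)-1)F(dx)=0$ $P\otimes A$-a.e. on $\{\Delta A=0\}$, together with its atomic counterpart $\int x\,Y(x)\nu(\{t\},dx)=0$ on $\{\Delta A>0\}$. Computing the $P$-compensator of $V^{(q)}(N)$ as in (\ref{VEofY}) then yields
\begin{equation*}
h^{(q)}(Z,P) \;=\; \tfrac{1}{2}\beta^{tr}c\beta\cdot A \;+\; f_q(Y-1)\star\nu \;+\; \delta,
\end{equation*}
where $\delta$ is a nondecreasing process collecting the nonnegative contributions of $N'$ (since $f_q\geq 0$ and $\langle(N')^c\rangle$ is nondecreasing). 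The analogous computation applied to $\widetilde N$ in (\ref{Ztilde}) gives the matching expression for $h^{(q)}(\widetilde Z,P)$, with coefficients $\widetilde\beta=\widehat\theta/(q-1)$ and jump functional $W$ from (\ref{W}).

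The heart of the argument is then a pointwise convex inequality. Since $f_q''(y)=(1+y)^{q-2}>0$ on $\{1+y>0\}$, the function $f_q$ is strictly convex on its natural domain, so
\begin{equation*}
f_q\bigl(Y(x)-1\bigr) \;\geq\; f_q\bigl(W(x)\bigr) + f_q'\bigl(W(x)\bigr)\bigl[Y(x)-1-W(x)\bigr].
\end{equation*}
Using $(p-1)(q-1)=1$ and the explicit form of $W$, a direct computation shows that on $\{\Delta A=0\}$ one has $f_q'(W(x))=\widehat\theta^{tr}x/(q-1)$; integrating the Taylor remainder against $F(dx)$ and invoking the martingale constraint on $(\beta,Y)$ together with (\ref{martingalequationforthethat}) allows me to rewrite the linear term purely in terms of $c\widehat\theta$ and $c\beta$. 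Combining with the elementary quadratic bound $\tfrac{1}{2}\beta^{tr}c\beta-\tfrac{1}{2(q-1)^2}\widehat\theta^{tr}c\widehat\theta\geq(q-1)^{-1}\widehat\theta^{tr}c(\beta-\widehat\theta/(q-1))$ coming from convexity of the quadratic form in $\beta$, everything telescopes into
\begin{equation*}
\tfrac{1}{2}\beta^{tr}c\beta+\int f_q(Y(x)-1)F(dx) \;\geq\; \tfrac{1}{2(q-1)^2}\widehat\theta^{tr}c\widehat\theta+\int f_q(W(x))F(dx),
\end{equation*}
$P\otimes A$-a.e. on $\{\Delta A=0\}$, with equality at $(\beta,Y-1)=(\widetilde\beta,W)$. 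The atomic case on $\{\Delta A>0\}$ is treated in parallel, using the jump form of the martingale constraint and the identity (\ref{vhat}); here the normalization factor $1-a+\widehat u$ built into $W$ is precisely what is required to keep the Lagrange-type argument consistent with the positivity requirement $1+\Delta\widetilde N>0$.

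Putting these pieces together, $h^{(q)}(Z,P)-h^{(q)}(\widetilde Z,P)$ is the $P$-compensator of a pointwise nonnegative integrand plus the nondecreasing remainder $\delta$, hence itself nondecreasing, which is (\ref{goaloflemma}). The step I expect to be genuinely delicate is the atomic case: on $\{\Delta A>0\}$ the integrated form of the martingale constraint collapses to a local one involving $\nu(\{t\},dx)$, and one must reconcile the discrete compensator $\sum f_q(\Delta N)$ with the Taylor expansion above, using (\ref{vhat}) to absorb the denominator $1-a+\widehat u$ appearing in $W$. Everything else amounts to standard convex duality applied to the functional $(\beta,Y)\mapsto\tfrac{1}{2}\beta^{tr}c\beta+\int f_q(Y-1)F(dx)$ under the linear constraint expressing the local-martingale property of $ZS$.
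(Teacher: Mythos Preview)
Your approach is essentially the same as the paper's: both rely on the convexity of $f_q$ (the paper writes $\phi(z)=f_q(z)$) together with the two martingale constraints---yours and the paper's equation (\ref{martingaleRecall})---to obtain a pointwise inequality between the densities of $h^{(q)}(Z,P)$ and $h^{(q)}(\widetilde Z,P)$ with respect to $A$, splitting into the continuous part $\{\Delta A=0\}$ and the atomic part $\{\Delta A\not=0\}$. Two points deserve tightening. First, your treatment of the ``orthogonal'' remainder $N'$ (and implicitly $g\star\mu$) as contributing a separable nonnegative term $\delta$ is not immediate, since $f_q(\Delta N)$ does not split additively across the Jacod components; the paper sidesteps this by invoking Proposition~3.2 of Choulli--Stricker (2006), which shows it suffices to check (\ref{goaloflemma}) for $N=\beta\cdot S^c+Y\star(\mu-\nu)$. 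Second, the atomic case that you flag as delicate is carried out explicitly in the paper via the same tangent-line inequality for $\phi$, but now applied to the full jump $\Delta N$ (which involves the extra ``no-jump'' contribution on $\{\Delta S_t=0\}$) and then simplified using $\int x(k_t(x)+1)\nu_t(dx)=0$ and $\int x(1+\widehat\theta_t^{tr}x)^{p-1}\nu_t(dx)=0$; the normalizing factor $\widetilde\gamma_t=1-a_t+\widehat u_t$ indeed cancels exactly as you anticipate.
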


{\bf Proof.} Thanks to Lemma  \ref{lemmafrorstep(c)}--(ii), the proof of the lemma will follow from proving the optimality of $\widetilde
Z$. In virtue of Proposition
3.2 in Choulli and Stricker (2006) (see also Proposition 4.2 in
Choulli and Stricker (2005) for the case of quasi-left continuity), it
is enough to prove that (\ref{goaloflemma}) holds for any positive martingale density
$Z={{\cal E}}(N)$ of the form
$$
N=\beta\cdot S^c+Y\star(\mu-\nu), \ \ Y_t(x)=k_t(x)+{{\widehat
k_t}\over{1-a_t}}I_{\{a_t<1\}}, \ \ \widehat k_t:=\int
k_t(x)\nu(\{t\},d x),$$ where $\beta\in L(S)$ and $\left(\sum
k_t(\Delta S_t)^2I_{\{\Delta S_t\not=0\}}\right)^{1/2}\in{\cal
A}^+_{loc}$. Due to the convexity of $z^T cz$ and
$\phi(z):=\frac{(1+z)^{q}-qz-1}{q(q-1)}$, on the set $\{\Delta A=0\}$ we derive
\begin{equation}\label{equation111}\begin{array}{llll}
\displaystyle{{d h^{(q)}(Z,P)}\over{d A}}-{{d h^{(q)}(\widetilde Z,P)}\over{d A}}={1\over{2}}(\beta^{tr} c\beta-\widetilde\theta^{tr} c\widetilde\theta)+\int\Bigl[\phi(k(x))-\phi(\widetilde k(x))\Bigr]F(d x)\\
\\
\hskip 4cm \geq \widetilde\theta^T
c(\beta-\widetilde\theta)+\displaystyle\int\widetilde\theta^{tr}x
\Bigl(k(x)-\widetilde k(x)\Bigr)F(d
x)=0.\end{array}\end{equation}
Here $\widetilde\theta=(p-1)\widehat\theta$, $\widetilde k(x):=(1+\widehat\theta^{tr}x)^{p-1}-1$ and $\phi'(k((x))=(p-1)\widehat\theta^{tr} x=\widetilde\theta^{tr} x$. The last equality in (\ref{equation111}) is obtained from the fact that both $\widetilde Z$ and $Z$ belong to ${\cal Z}_{loc}^{e}(S)$, which is equivalent to
 \begin{equation}\label{martingaleRecall}
 b+c\beta+\int xk(x)F(dx)=0,\ \ \mbox{and}\ \ b+(p-1)c\widehat\theta+\int x\widetilde k(x)F(dx)=0.\end{equation}
Now, we compare the two jump processes $\Delta h_t^E(Z,P)$ and $\Delta h_t^E(\widetilde Z,P)$, by using again the convexity of $\phi(z)$, as follows
\begin{eqnarray}
&&\Delta h_t^E(Z,P)-\Delta h_t^E(\widetilde Z,P)=(1-a_t)\left[\phi\Bigl(-{{\widehat k_t}\over{1-a_t}}\Bigr)-\phi\Bigl({\widetilde\gamma_t}^{-1}-1\Bigl)\right]\notag\\
&+&\displaystyle\int\Bigl[\phi(k_t(x))-\phi\Bigl((1+\widehat\theta_t^{tr} x)^{p-1}{\widetilde\gamma}_{t}^{-1}-1\Bigr)\Bigr]\nu_{t}(d x)\nonumber\\
&\geq& (1-a_t)(1-{{\widehat k_t}\over{1-a_t}}-{1\over{\widetilde\gamma_t}})\frac{\widetilde{\gamma}_{t}^{1-q}-1}{q-1}\nonumber\\
&+&\displaystyle\int\Bigl[k_t(x)+1-{\widetilde\gamma_t}^{-1}(1+\widehat\theta_t^{tr}
x)^{p-1}\Bigr]\frac{(\widehat\theta_t^{tr}
x+1)\widetilde{\gamma}_{t}^{1-q}-1}{q-1}\nu_{t}(d x)\nonumber\\
&=& \frac{\widetilde{\gamma}_{t}^{1-q}}{q-1}\displaystyle\int\Bigl[
(k_t(x)+1)-(\widetilde\gamma_t)^{-1}(1+\widehat{\theta}^{tr}x)^{p-1}\Bigr]\widehat\theta_t^{tr} x\nu_{t}( d
x)=0.\label{equation112}\end{eqnarray}
The last equality in (\ref{equation112})
follows from (\ref{martingaleRecall}) (by multiplying both equations with $\Delta A$ and using $b\Delta A=\int x\nu(\{t\},dx)$, $ \Delta A c=0$ and $F_t(dx)\Delta A_t=\nu(\{t\},dx)$ ) that leads to
 $$
 \int x(k_t(x)+1)\nu(\{t\},dx)=0,\ \ \mbox{and}\ \ 0={\widetilde\gamma}^{-1}_t\int x(1+\widehat\theta^{tr}_t x)^{p-1}\nu(\{t\},dx).$$ Thus, by combining
(\ref{equation111}) and (\ref{equation112}), we deduce that
$\widetilde Z$ is the minimal Hellinger martingale density of order $q$ for $S$. This achieves the proof of the lemma. \qed

Now, we are ready to provide the proof of Theorem \ref{case-pre+var-Hellinger}.\\

{\bf Proof of Theorem \ref{case-pre+var-Hellinger}:} We start proving $(1)\Longrightarrow(2)$. Thus, suppose that assertion (1) holds. Therefore, Lemmas \ref{lemmafrorstep(b)}, \ref{lemmafrorstep(c)}, and \ref{lemmafrorstep(e)} are valid, and the minimal Hellinger martingale density, $\widetilde Z$ exists (it is given by Lemma \ref{lemmafrorstep(c)}). Furthermore, an application of Ito's formula to ${\cal E}(X^D)^{q-1}$ combined with (\ref{Drepresentation}) and (\ref{HellingerXD}), will easily lead to  (\ref{pre+var-characterD(t)}). This proves assertions (2.b) and (2.c) of the theorem. To conclude that assertion (2) is satisfied, we need to prove assertion (2.a). This follows from the forward property of $U_p$ with optimal portfolio rate $\widehat\theta$ and
\begin{eqnarray}
U_p\left(\cdot,x{\cal E}\left(\widehat{\theta}\cdot S\right)\right)
&=&D_{0}x^{p}{\cal E}\left(q(q-1)h^{(q)}\left(\widetilde{Z},P\right)\right)^{p-1}{\cal E}\left(\widehat{\theta}\cdot S\right)^{p}\label{cont-thetatilde-Utrue0}\\
&=&D_{0}x^{p}{\cal E}\left(\widehat{\theta}\cdot S\right){\cal E}\left(\widetilde\gamma^{1-q}\widehat{\theta}
\cdot S+q(q-1)h^{(q)}\left(\widetilde{Z},P\right)\right)^{p-1}\label{cont-thetatilde-Utrue1}\\
&=&D_{0}x^{p}{\cal E}\left(\widehat{\theta}\cdot S\right)\widetilde{Z}=D_{0}x^{p}\widehat{Z}.\label{cont-thetatilde-Utrue2}
\end{eqnarray}
It is clear that (\ref{cont-thetatilde-Utrue0}) follows from (\ref{pre+var-characterD(t)}) and $p-1={1\over{q-1}}$, while (\ref{cont-thetatilde-Utrue1})  and (\ref{cont-thetatilde-Utrue2}) follows from (\ref{ZtildeZ}) ---for the case of $Z\equiv 1$--- whenever the MHM density of order $q$, $\widetilde{Z}$, exists. This proves assertion (2).\\

In the remaining part of this proof, we focus on proving $(2)\Longrightarrow(1)$. Thus, we suppose that assertion (2) is fulfilled. Then, it is obvious that (\ref{cont-thetatilde-Utrue0}), (\ref{cont-thetatilde-Utrue1}), and (\ref{cont-thetatilde-Utrue2}) always hold as long as the MHM density of order $q$ exists and assertion (2.b) is valid. As a consequence, a combination of these equalities with assertion (2.a) imply that $U_p\left(\cdot,x{\cal E}\left(\widehat{\theta}\cdot S\right)\right)$ is a martingale. Furthermore, for any admissible $\theta$, we have
$$\begin{array}{llll}
\displaystyle (D_{0}x^{p})^{-1}U_p\left(\cdot,x{\cal E}\left({\theta}\cdot S\right)\right)
=\displaystyle {\cal E}\left({\theta}\cdot S\right)^{p}{\cal E}\left(\widehat{\theta}
\cdot S\right)^{1-p}{\cal E}\left(q(q-1)h^{(q)}\left(\widetilde{Z},P\right)\right)^{p-1}{\cal E}\left(\widehat{\theta}
\cdot S\right)^{p-1}\notag\\
\hskip 4cm =\displaystyle{\cal E}\left(\widehat{\theta}\cdot S\right)\widetilde{Z}\left({\cal E}\left({\theta}\cdot S\right)/{\cal E}\left(\widehat{\theta}
\cdot S\right)\right)^{p}\notag=\displaystyle \widehat{Z}\left({\cal E}\left({\theta}\cdot S\right)/{\cal E}\left(\widehat{\theta}
\cdot S\right)\right)^{p}.
\end{array}$$
Due to this equality, the equivalence between $\theta\in \Theta(x,U_p)$ and (\ref{p<0-integrability}), and  Proposition \ref{theta-supermatingale}, we conclude that $U_p\left(\cdot,x{\cal E}\left({\theta}\cdot S\right)\right)$ is a supermartingale for any admissible portfolio rate $\theta$. Hence, $U_p$ is a forward utility and assertion (1) holds. This ends the proof of the theorem. \qed



\section{Parametrization of the Log-Type Forward Utilities}\label{subsection:logcase}

This section focuses on describing the forward utilities having the form of 
\begin{equation}\label{haralog}
U_0(t,x):={\widehat D}(t)\log\left(x\right)+\overline{D}(t).\end{equation}
 We will prove that this class of forward utilities is completely parameterized by two local martingales intimately related to ${\widehat D}$ and ${\overline D}$. For these random field utilities, we consider the set of admissible portfolios, ${\cal A}_{adm}(x,\log)$, that is slightly different than the one defined in (\ref{admissibleset}).
$$
{\cal A}_{adm}(x,\log):=\Bigl\{\pi\in L(S)\ \big|\  \ x+\pi\cdot S> 0,\ \&\
\left(\left[U_0(\tau, x+(\pi\cdot S)_{\tau})\right]^-\right)_{ \tau\in {\cal T}_T}\ \mbox{is uniformly integrable}.\Bigr\}$$
Then, the set of admissible portfolio rates ---denoted by $\Theta(x,\log)$--- is given by
\begin{equation}\label{admissiblelog}
\Theta(x,\log):=\Bigl\{\theta\in L(S)\ \big|\ \ {\cal E}(\theta\cdot S)>0\ \ \&\ \ \ \theta{\cal E}_{-}(\theta\cdot S)\in {\cal A}_{adm}(x,\log)\Bigr\}.
\end{equation}

Below, we elaborate our main result of this section.
\begin{theorem}\label{forwardlog} Suppose that $U_0$ defined in (\ref{haralog}) is a random field utility such that
\begin{equation}\label{log-null-strategy}
\sup_{\tau\in\mathcal{T}_{T}}E\left(|{\widehat D}(\tau)|\right)<+\infty\ \ \mbox{and} \ \ \sup_{\tau\in\mathcal{T}_{T}}E\left(|\overline{D}(\tau)|\right)<+\infty.
\end{equation}
Suppose that  $S$ is locally bounded, ${\cal Z}_{loc}^e(S)\not=\emptyset$, and Assumption \ref{crucialassumption} with $M_t\equiv E\left({\widehat D}(T)\ |\ {\cal F}_t\right)/E({\widehat D}(T))$ holds.
Then the assertions (1) and (2) below are equivalent.\\
(1) The functional $ U_0$ is a forward utility with the optimal portfolio rate $\widehat{\theta}$.\\
(2) The following properties hold:\\
(2.a) The process $\widehat D$ is a positive martingale.\\
(2.b) The MHM density of order zero with respect to $Q:={\widehat D}(T)\left({\widehat D}(0)\right)^{-1}\cdot P$  exists (that we denote by $\widetilde Z^{Q}$), and there exists
 a $Q$-local martingale $L^Q$ such that
\begin{equation}\label{DforLogUtility}
\overline{D}(t)={\widehat D}(t)\left(\overline{D}(0)\left({\widehat D}(0)\right)^{-1}+L_t^Q-h^{(0)}_t(\widetilde Z^{Q},Q)\right),\ \ \ \ \ \ \ 0\leq t\leq T.\end{equation}
(2.c) $P\otimes A-$almost all $(\omega,t)$, the optimal portfolio rate $\widehat\theta$ belongs to $\mbox{int}\left({\cal D}\right)$, and is a root for
\begin{equation}\label{martingaleofzero}
b^{Q}-c\lambda+\int \left[\left(1+\lambda^{tr} x\right)^{-1}-1\right]x F^{Q}(dx)=0,\end{equation}
where $b^{Q}$ and $F^{Q}(dx)$ are the predictable characteristics of $S$ under $Q$.\\
(2.d) The process $\widehat N_t:=\overline{D}(t)-{\widehat D}(t)\log\left(\widetilde Z^Q_t\right)$ is a martingale.
\end{theorem}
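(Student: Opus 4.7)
My plan is to mimic the strategy used for Theorem \ref{power-characterization-ass}, preceded by a scalar reduction that puts the log-coefficient in constant form via a measure change. I split $(1)\Rightarrow(2)$ into three steps; the converse $(2)\Rightarrow(1)$ is obtained by running the drift computations backwards.

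\emph{Step 1: $\widehat D$ is a positive $P$-martingale, reduction to $(S,Q)$, and optimality.} Positivity of $\widehat D$ is forced by strict monotonicity of $U_0$ in $x$. Condition (\ref{log-null-strategy}) makes the null portfolio rate admissible for every $x>0$, so $U_0(t,x)=\widehat D(t)\log x+\overline D(t)$ is a $P$-supermartingale for each fixed $x$; applied to $\tau\le\sigma$ this yields
\[
\bigl(E[\widehat D(\sigma)\,|\,\mathcal F_\tau]-\widehat D(\tau)\bigr)\log x\le \overline D(\tau)-E[\overline D(\sigma)\,|\,\mathcal F_\tau],
\]
and sending $x\to+\infty$ and $x\to 0^+$ (exploiting the two signs of $\log x$) gives (2.a). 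Setting $Q:=(\widehat D(T)/\widehat D(0))\cdot P$ with density $Z^Q_t=\widehat D(t)/\widehat D(0)$, Proposition \ref{PropertiesOfForward}(ii) reduces everything to showing that $V(t,x):=U_0(t,x)/Z^Q_t=\widehat D(0)\log x+\widetilde D(t)$, with $\widetilde D:=\widehat D(0)\overline D/\widehat D$, is a forward utility for $(S,Q)$---a log utility with \emph{constant} log-coefficient under the $Q$-characteristics $(b^Q,c,\nu^Q)$ supplied by (\ref{jacodparametersunderD}). Writing the $Q$-Doob--Meyer decomposition $\widetilde D=\widetilde D(0)+\widehat D(0)L^Q+a^Q$ and applying Ito's formula to $\log\mathcal E(\theta\cdot S)$ under $Q$, the forward-utility (super/martingale) requirement becomes $a^Q\preceq-\widehat D(0)\Psi^Q(\theta)\cdot A$ for every admissible $\theta$, with equality at $\widehat\theta$, where
\[
\Psi^Q(\theta):=\theta^{tr}b^Q-\tfrac12\theta^{tr}c\theta+\int\!\bigl[\log(1+\theta^{tr}x)-\theta^{tr}x\bigr]F^Q(dx).
\]
This is a pointwise maximization of $\Psi^Q$; Assumption \ref{crucialassumption} (with the prescribed $M=Z^Q$) combined with Lemma \ref{lem-interior-integrability}, exactly as in Lemma \ref{lemmafrorstep(b)}, places $\widehat\theta\in\operatorname{int}({\cal D}^+)$ and yields the first-order condition (\ref{martingaleofzero}); this proves (2.c) and identifies $a^Q=-\widehat D(0)\Psi^Q(\widehat\theta)\cdot A$.

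\emph{Step 2: Hellinger identification and (2.b).} By Remark \ref{remarkforp/zero}, the process $\widetilde Z^Q=\mathcal E(\widetilde N^Q)$ built from $\widehat\theta$ via formula (\ref{Ztilde}) with $p=0$ is the MHM density of order zero for $S$ under $Q$. The key identification is $\Psi^Q(\widehat\theta)\cdot A=h^{(0)}(\widetilde Z^Q,Q)$. On $\{\Delta A=0\}$ this follows from plugging (\ref{martingaleofzero}) into $\Psi^Q(\widehat\theta)$ to obtain $\tfrac12\widehat\theta^{tr}c\widehat\theta+\int f_0(W^Q(x))F^Q(dx)$, with $W^Q(x)=-\widehat\theta^{tr}x/(1+\widehat\theta^{tr}x)$ being exactly the jump part of $\widetilde N^Q$ there; on $\{\Delta A\ne 0\}$ the argument mirrors Lemma \ref{lemmafrorstep(c)}(iii), using the predictable-jump form of (\ref{martingaleofzero}) together with an additive analog of (\ref{vhat}) for the $Q$-decomposition of $\widetilde D$. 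Substituting $a^Q=-\widehat D(0)h^{(0)}(\widetilde Z^Q,Q)$ and reverting from $\widetilde D$ to $\overline D=\widehat D\,\widetilde D/\widehat D(0)$ delivers (\ref{DforLogUtility}), hence (2.b).

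\emph{Step 3: Martingale identity (2.d) and converse.} By Ito's formula, $\log\widetilde Z^Q=\widetilde N^Q-V^{(0)}(\widetilde N^Q)$, so $-\log\widetilde Z^Q-h^{(0)}(\widetilde Z^Q,Q)$ is a $Q$-local martingale. Combined with (2.b), this gives $\widehat N/Z^Q=\overline D(0)+\widehat D(0)\bigl(L^Q-\log\widetilde Z^Q-h^{(0)}(\widetilde Z^Q,Q)\bigr)$, a $Q$-local martingale; thus $\widehat N$ is a $P$-local martingale, upgraded to a true martingale via (\ref{log-null-strategy}) and the $P$-martingale property of $U_0(\cdot,\mathcal E(\widehat\theta\cdot S))$ at the optimum. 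For $(2)\Rightarrow(1)$, one retraces the Step 1 drift: under (2.b)--(2.c) the drift of $V(t,x\mathcal E(\theta\cdot S))$ is $\widehat D(0)\bigl(\Psi^Q(\theta)-\Psi^Q(\widehat\theta)\bigr)\cdot A\preceq 0$ by the global-maximum property of $\widehat\theta$ (a direct consequence of (\ref{martingaleofzero}) and concavity of $\log$), which yields the supermartingale property along any admissible $\theta$, while equality at $\widehat\theta$ combined with (2.d) delivers the martingale property at the optimum.

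The main obstacle is the Hellinger identification in Step 2: its continuous part is immediate from the first-order condition, but matching the jumps requires combining the atomic version of (\ref{martingaleofzero}) with the predictable-jump dynamics of $a^Q$, and Assumption \ref{crucialassumption} (with $M=Z^Q$) is precisely what rules out the degenerate regime $\widehat\theta\in\partial{\cal D}^+$ that would invalidate this interior computation. A secondary technical point is verifying admissibility of the candidate optimum in the sufficiency direction, which rests on the integrability ensured by (\ref{log-null-strategy}) and (2.d).
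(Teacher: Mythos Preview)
Your proposal is essentially correct and follows the same overall architecture as the paper (reduce to $(S,Q)$ via the density $\widehat D/\widehat D(0)$, identify $\widehat\theta$ via a pointwise optimization under $Q$, then build $\widetilde Z^Q$ from Remark \ref{remarkforp/zero}). Two points are worth noting where the paper proceeds differently and more economically.

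First, your Step~2 ``Hellinger identification'' $\Psi^Q(\widehat\theta)\cdot A=h^{(0)}(\widetilde Z^Q,Q)$ is bypassed entirely in the paper. Instead, the paper invokes Proposition \ref{lem-represofMHE-changemeasure} with $q=0$, which gives directly $(\widetilde Z^Q)^{-1}=\mathcal E(\widehat\theta\cdot S)$. Plugging this into $U_0(t,x\mathcal E_t(\widehat\theta\cdot S))$ yields $\widehat D(t)\log x+\overline D(t)-\widehat D(t)\log\widetilde Z^Q_t$, from which (2.d) is immediate and (2.b) follows from the $Q$-Doob--Meyer decomposition of $\overline D/\widehat D$. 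This avoids the continuous/jump case-split you outline and in particular the ``additive analog of (\ref{vhat})'' that you flag as the main obstacle.

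Second, for $(2)\Rightarrow(1)$ your drift comparison $\Psi^Q(\theta)-\Psi^Q(\widehat\theta)\le 0$ only gives a $Q$-\emph{local} supermartingale; upgrading to a true supermartingale is not automatic from (\ref{log-null-strategy}) and (2.d) alone. The paper handles this differently: it writes $U_0^Q(t,x\mathcal E_t(\theta\cdot S))=\log X_0(t)+U_0^Q(t,x\mathcal E_t(\widehat\theta\cdot S))$ with $X_0=\widetilde Z^Q\mathcal E(\theta\cdot S)$ a positive $Q$-local martingale, then uses $e^{y^+}\le e^y+1$ and de la Vall\'ee Poussin to show $\{[\log X_0(\tau)]^+:\tau\in\mathcal T_T\}$ is $Q$-uniformly integrable, which combined with admissibility of $\theta$ gives the true supermartingale property. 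Your route can be completed, but this is the step where care is genuinely needed.
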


{\bf Proof.}  We start proving the difficult part, which is $(1)\Longrightarrow(2)$. Suppose that assertion (1) holds. Thanks to Proposition \ref{optimalportfolio}, the optimal portfolio rate $\widehat\theta_x$ does not depend on the initial capital $x\in (0,+\infty)$. Thus, a combination of this fact with assertion (1) lead to put $\widehat\theta:=\widehat\theta_1$, and
\begin{equation}\label{martingaleequationforlog}
U_0\left(t,x{\cal E}_t(\widehat\theta\cdot S)\right)=\log(x){\widehat D}(t)+{\widehat D}(t)\log\left({\cal E}_t(\widehat\theta\cdot S)\right)+\overline{D}(t)\ \ \mbox{is a martingale}, \end{equation}
for any $x\in (0,+\infty)$, and for any $\theta\in\Theta(x,\log)$,
\begin{equation}\label{supermartingaleforlog}
U_0\left(t,x{\cal E}_t(\theta\cdot S)\right)=\log(x){\widehat D}(t)+{\widehat D}(t)\log\left({\cal E}_t(\theta\cdot S)\right)+\overline{D}(t)\ \ \mbox{is a supermartingale.}
\end{equation}
 Thus, ${\widehat D}(t)$ is a positive martingale, and $\overline D$ is a c\`adl\`ag supermartingale. This proves assertion (2.a) as well as $\overline{D}(t)/{\widehat D}(t)$ is a supermartingale under $Q:=\left(\widehat D(T)/\widehat D(0)\right)\cdot P$. Thus, there exists a predictable and nondecreasing process, $\overline{A}^{Q}$, such that (\ref{martingaleequationforlog}) and (\ref{supermartingaleforlog}) translate into (after applying Ito's formula and compensating under $Q$)
\begin{equation}\label{optimizationforLog}
-\Phi_0^Q(\theta)\cdot A-\overline{A}^{Q}\preceq -\Phi_0^Q(\widehat\theta)\cdot A-\overline{A}^{Q}=0.\end{equation}
Here the function $\Phi_0^Q$ is given by
(\ref{Phip}) by taking $p=0$ and $R=Q$ and $(b^Q, c, F^Q)$ are the predictable characteristics of $S$ under $Q$ (it is obvious that $c^Q=c$).
 Therefore, (\ref{optimizationforLog}) implies that $\widehat\theta$ minimizes $\Phi_0^Q$ over the set ${\cal D}^+$,  and thus $\widehat\theta$ fulfills the assumptions of Lemma \ref{lem-interior-integrability}. Hence, assertion (2.c) follows immediately from this lemma. Then, thanks to the proofs of Lemma \ref{lemmafrorstep(c)} and Lemma \ref{lemmafrorstep(e)} (See Remarks \ref{remarkforp/zero} for details), we deduce that
$$\widetilde W_0(t,y):=\left[\left(1+\widehat\theta^{tr} y\right)^{-1}-1\right]\in {\cal G}^1_{loc}(\mu, Q),$$
and the minimal Hellinger martingale density  of order zero with respect to $Q$, denoted $\widetilde Z^Q$, exists and is given by
$$
\widetilde Z^Q:={\cal E}(\widetilde N^Q),\ \ \ \ \  \widetilde N^Q:=-\widehat\theta\cdot S^{c,Q}+{\widetilde W}_0\star(\mu-\nu^Q).$$
This proves assertion (2.b). Due to Proposition \ref{lem-represofMHE-changemeasure} (put $q=0$ in (\ref{ZtildeZ})), we obtain
$$
\left(\widetilde Z^Q\right)^{-1}={\cal E}\left(\widehat\theta\cdot S\right).$$
By inserting this equality into (\ref{martingaleequationforlog}) we derive
\begin{equation}\label{martingaleforlkog1}
\begin{array}{lll}
U_0\left(t, x{\cal E}_t(\widehat\theta\cdot S)\right):={\widehat D}(t)\log(x)+{\widehat D}(t)\log\left({\cal E}_t(\widehat\theta\cdot S)\right)+\overline{D}(t)\\
\\
\hskip 2.5cm ={\widehat D}(t)\log(x)+\overline{D}(t)-{\widehat D}(t)\log\left(\widetilde Z^Q_t\right).
\end{array}\end{equation}

\noindent Then, assertion (2.d) follows from the fact that both $U_0\left(t, x{\cal E}_t(\widehat\theta\cdot S)\right)$ and ${\widehat D}$ are martingales. This completes the proof assertion (2).\\

To prove the reverse implication (i.e. $(2)\Longrightarrow (1)$), it is easy to remark that (\ref{martingaleforlkog1}) is valid as long as assertion (2.b) holds. Then, assertions (2.a) and (2.d) implies that $ U^Q_0\left(t, x{\cal E}_t(\widehat\theta\cdot S)\right)$ is a $Q$-martingale for any $x>0$. Then, assertion (1) will follow immediately once we prove that $U^Q_0\left(t, x{\cal E}_t(\theta\cdot S)\right)$ is a $Q$-supermartingale for any $x>0$ and any $\theta\in\Theta(x,\log)$. To this end, we first calculate
$$
\begin{array}{lll}
U^Q_0\left(t, x{\cal E}_t(\theta\cdot S)\right)=\log(x)+\log\left({\cal E}_t(\theta\cdot S)\right)+\overline{D}(t)/{\widehat D}(t)\\
\\
\hskip 2.5cm =\displaystyle \log\left({\cal E}_t(\theta\cdot S)/{\cal E}_t(\widehat\theta\cdot S)\right)+U^Q_0\left(t, x{\cal E}_t(\widehat\theta\cdot S)\right),\\
\\
\hskip 2.5cm=:\log\Bigl(X_0(t)\Bigr)+U^Q_0\left(t, x{\cal E}_t(\widehat\theta\cdot S)\right).\end{array}$$
Then, it is easy to see that $X_0={\cal E}(\theta\cdot S)/{\cal E}(\widehat\theta\cdot S)={\widetilde Z}^Q{\cal E}(\theta\cdot S)$ is a positive $Q$-local martingale (which implies that $\log(X_0(t))$ is a $Q$-local supermartingale) , and ---due to $e^{y^+}\leq e^y+1$---
$$E^Q\left(e^{(\log(X_0(\tau)))^+}\right)\leq E^Q(X_0(\tau))+1\leq 2.$$ Then, the Lavall\'ee-Poussin argument allows us to conclude that
$$\left\{\bigl[\log( X_0(\tau))\bigr]^+,\ \tau\in {\cal T}_T\right\}\ \ \ \ \mbox{is $Q$-uniformly integrable}.$$
Since $U^Q_0\left(t, x{\cal E}(\widehat\theta\cdot S)\right)$ is a martingale, we deduce that
$$\left\{\Bigl[U^Q_0\left(\tau, x{\cal E}_{\tau}(\theta\cdot S)\right)\Bigr]^+, \ \ \tau\in{\cal T}_T\right\}\ \ \ \ \ \mbox{is $Q$-uniformly integrable}.$$ A combination of this fact and the admissibility of $\theta$ leads to the uniform integrability of $U^Q_0\left(t, x{\cal E}_t(\theta\cdot S)\right)$. As a consequence, this process is a $Q$-supermartingale, and the proof of the theorem is completed.\qed

\begin{remark} It is clear from Theorem \ref{forwardlog} that if we assume that ${\widehat D}$ and $\overline D$ are predicable with finite variation, then $\widehat D$ is constant. Thus, in this case there is only one forward utility which is the classical log-utility augmented with a Hellinger process of an optimal pricing density. For the complete description of the optimal portfolio for the log-utility using the predictable characteristics, we refer the reader to G\"oll and Kallsen (2003). It is worth mentioning that our equation (\ref{martingaleofzero}) is the same equation that Kardaras (2012) derived in the one-dimensional context . The extension of this equation, called the "market equation",  to the multidimensional framework is attempted by Kabanov (2013) .
\end{remark}

%

\section{Discrete-time Market Models}\label{sec:examples}
This section illustrates our results of Sections \ref{powercase} and \ref{subsection:logcase} on the discrete-time market model, which is the most frequently used market models in the economic and/or financial literature.

 We consider the market model where trading times are $t=0,1,..,T$, the information flow of the market model is given by $\mathbb{F}=(\mathcal{F}_{n})_{n=0,1,...,T}$, and the $d$-dimensional stock price process is denoted by $S=(S_i)_{i=0,1,...,T}$. For $x\in (0,+\infty)$, $p<1$, and $t=0,1,...,T$, we put
\begin{equation}\label{Utility}
U_p(t,x):=\left\{\begin{array}{lll}D(t)x^p,\hskip 3cm\mbox{if}\ p\not =0,\\
\\
{\widehat D}(t)\log(x)+\overline{D}(t),\hskip 1cm\mbox{if}\ \ p=0,\end{array}\right.\end{equation}
where $D=(D(t))_{t=0,1,...,T}$, $({\widehat D}(t))_{t=0,1,...,T}$, and $(\overline{D}(t))_{t=0,1,...,T}$ are processes satisfying
\begin{equation}\label{condition}
\sup_{0\leq j\leq T}E\Bigl[\vert D(j)\vert+\vert {\widehat D}(j)\vert+\vert \overline{D}(j)\vert\Bigr]<+\infty.\end{equation}
Similarly as in (\ref{setD}) for the continuous-time case, we define ${\cal D}_j^+$  by
\begin{equation}\label{Dj}
{\cal D}_j^+:=\left\{\theta\in \mathbb R^d\Big|\ 1+\theta^{tr}x>0,\ G_j(dx)-a.e\right\},\ \ G_j(dx):=P(\Delta S_j\in dx\ |\ {\cal F}_{j-1}),\end{equation}
for any $j=1,2,...,T$, where $\Delta S_j:=S_j-S_{j-1}$.
For any process $X=(X_i)_{i=0,1,...,T}$, we associate to it the set of admissible strategies for the $j^{th}$ period of time ($j=1,2,...,T$) $\Theta_j(X,U_p)$, which is given by
\begin{equation}\label{AdmissibleTheta}
\Theta_j(X, U_p):=\Bigl\{\theta\in L^0({\cal F}_{j-1})\cap{\cal D}_j^+\ \Big|\ E\left(\vert X_j \Delta S_j\vert(1+\theta^{tr} \Delta S_j)^{p-1}\ \Big| {\cal F}_{j-1}\right)<+\infty,\ \ \ P-a.s.\Bigr\}.\end{equation}
In this framework, Assumption \ref{crucialassumption} becomes
\begin{assumption}\label{discretetime} For any  $j=1,2,...,T$, any $\theta\in {\mathcal{D}}_j^+$, $P$-a.e., and every sequence $(\theta_n)_{n\geq 1}\subset int(\mathcal{D}_j^+)$ that converges $P-a.s.$ to $\theta$, we have
\begin{eqnarray}
\label{power-assumption-dis}
&&\hskip -0.5cm\lim_{n\rightarrow +\infty}E\Big(|D(j)K_p(\theta_n^{tr}\Delta S_j)|\big|\mathcal{F}_{j-1}\Big)=\left\{
 \begin{array}{ll}
 +\infty, & \hbox{ on $\Gamma_j$;} \\
 E\Big(|D(j)K_p(\theta^{tr}\Delta S_j)|\big |\mathcal{F}_{j-1}\Big), & \hbox{ on $\Gamma_j^c$.}
 \end{array}
\right.
\end{eqnarray}
where $K_p(y):=y(1+y)^{1/(q-1)}=y(1+y)^{p-1}$ and $\Gamma_j:=\{G_j(\mathbb R^d)>0\ \ \mbox{and} \ \ \theta\not\in int(\mathcal{D}_j^+)\}$
\end{assumption}
 Below, we state our parametrization algorithm for forward utilities having the form of (\ref{Utility}).

\begin{theorem}\label{Discreteparametrizationpower}Let $p\in (-\infty,0)\cup(0,1)$.
Suppose that $S$ is bounded, {\bf Assumption \ref{discretetime}} holds, and $D$ satisfies (\ref{condition}). Then, the following are equivalent.\\
(i) $U_p(t,x)$ ---defined in (\ref{Utility})--- is a forward utility with the optimal portfolio $\widehat\theta=(\widehat\theta_i)_{i=1,2,...,T}$.\\
(ii) The two processes $D$ and $\widehat\theta$ are given by
\begin{eqnarray}
\widehat\theta_j\in\Theta_j(D,U_p)\ \ \mbox{is a root of}\ \ \ E\left(D_j \Delta S_j(1+\theta^{tr}\Delta S_j)^{p-1}\ \Big|\ {\cal F}_{j-1}\right)=0,\label{DThetaHat2}\\
\mbox{and}\ \ \ \ D(j-1)=E\left(D(j)(1+\widehat\theta^{tr}_{j}\Delta S_{j})^{p}\ \Big|\mathcal{F}_{j-1}\right),\label{DThetaHat1}\end{eqnarray}
for all $j=1,2,...,T$. Here $\Theta_j(D,U_p)$ is defined in (\ref{AdmissibleTheta}).\end{theorem}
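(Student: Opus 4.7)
The plan is to reduce the multi--period forward--utility property to a sequence of one--period conditional concave optimizations and then apply first--order optimality conditions, with Assumption \ref{discretetime} ensuring interiority and differentiability under the conditional expectation. Two observations drive the argument. First, strict monotonicity and concavity of $x\mapsto D(t)x^p$ force $pD(t)>0$, so in both sign regimes of $p$ the map $\theta\mapsto D(j)(1+\theta^{tr}\Delta S_j)^p$ is concave on $\mathcal{D}_j^+$. Second, by Proposition \ref{optimalportfolio} the optimal portfolio rate is independent of the initial capital, so optimality at period $j$ with any positive $\mathcal{F}_{j-1}$--measurable wealth $W_{j-1}$ reduces, after dividing by $W_{j-1}^p$, to conditionally maximizing $\theta\mapsto E[D(j)(1+\theta^{tr}\Delta S_j)^p\mid\mathcal{F}_{j-1}]$ over $\Theta_j(D,U_p)$, with $D(j-1)$ as the optimal value attained at $\widehat\theta_j$.

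For (i)$\Rightarrow$(ii), the ``martingale at the optimum'' half of the forward property, written period--by--period in terms of the portfolio rate, is exactly (\ref{DThetaHat1}). To derive (\ref{DThetaHat2}), I would first show $\widehat\theta_j\in\mathrm{int}(\mathcal{D}_j^+)$ by adapting the argument of Lemma \ref{lem-interior-integrability} (used in the continuous case) to a single period: if $\widehat\theta_j$ lay on $\partial\mathcal{D}_j^+$, then Assumption \ref{discretetime} would force
\begin{equation*}
E\bigl(|D(j)K_p(\theta_n^{tr}\Delta S_j)|\,\big|\,\mathcal{F}_{j-1}\bigr)\to+\infty
\end{equation*}
along every interior sequence $\theta_n\to\widehat\theta_j$, which is incompatible with $\widehat\theta_j$ being a finite conditional concave maximizer. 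Once interiority is secured, Assumption \ref{discretetime} also supplies enough control along such sequences to differentiate under $E[\,\cdot\mid\mathcal{F}_{j-1}]$, yielding
\begin{equation*}
0=p\,E\bigl[D(j)(1+\widehat\theta_j^{tr}\Delta S_j)^{p-1}\Delta S_j\,\big|\,\mathcal{F}_{j-1}\bigr],
\end{equation*}
which is (\ref{DThetaHat2}) after dividing by $p\neq 0$; the finiteness of this gradient is exactly the definition of $\widehat\theta_j\in\Theta_j(D,U_p)$.

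For the converse (ii)$\Rightarrow$(i), concavity of $\theta\mapsto D(j)(1+\theta^{tr}\Delta S_j)^p$ combined with the first--order identity (\ref{DThetaHat2}) yields the tangent--plane inequality
\begin{equation*}
D(j)(1+\theta^{tr}\Delta S_j)^p\le D(j)(1+\widehat\theta_j^{tr}\Delta S_j)^p+pD(j)(1+\widehat\theta_j^{tr}\Delta S_j)^{p-1}(\theta-\widehat\theta_j)^{tr}\Delta S_j,
\end{equation*}
almost surely, for every $\theta\in\Theta_j(D,U_p)$. Taking $E[\,\cdot\mid\mathcal{F}_{j-1}]$ annihilates the gradient term by (\ref{DThetaHat2}), while (\ref{DThetaHat1}) identifies the leading term with $D(j-1)$, producing the one--period supermartingale inequality with equality at $\widehat\theta_j$. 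Multiplying by $W_{j-1}^p>0$ and iterating backwards from $T$ via the tower property recovers the defining supermartingale/martingale property of $U_p$ on the whole time grid.

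The main technical obstacle is the interior--optimum step in (i)$\Rightarrow$(ii): ruling out $\widehat\theta_j\in\partial\mathcal{D}_j^+$ and then passing $\nabla_\theta$ through $E[\,\cdot\mid\mathcal{F}_{j-1}]$. Both are precisely what Assumption \ref{discretetime} is designed to provide, mirroring the role of Assumption \ref{crucialassumption} in the proof of Theorem \ref{case-pre+var-Hellinger}; once they are in place, the rest is concavity bookkeeping, and no MHM--density machinery is needed because, on a discrete grid, the Hellinger contribution is absorbed directly into (\ref{DThetaHat1}).
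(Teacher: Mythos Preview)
Your proposal is correct and follows the same architecture as the paper's own proof: the one--period reduction, interiority of $\widehat\theta_j$ via Assumption \ref{discretetime} (mirroring Lemma \ref{lem-interior-integrability}--(iii)), the first--order condition at the interior maximizer for (\ref{DThetaHat2}), and the tangent--plane concavity inequality for $(ii)\Rightarrow(i)$ are exactly the paper's steps. The only difference is presentational: the paper first factors $D(t)=D(0)Z^D_t\exp(a^D_t)$ with $Z^D$ a positive martingale, passes to $Q:=Z^D_T\cdot P$, and rewrites the one--period problem as optimizing $\Psi_j(\lambda)=D(0)\int(1+\lambda^{tr}x)^p\,G_j^Q(dx)$ so that Lemma \ref{lem-interior-integrability} applies verbatim under $Q$, whereas you keep $D(j)$ inside the conditional expectation and adapt the interiority argument directly---equally valid and arguably more economical.
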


\begin{remark} Theorem \ref{Discreteparametrizationpower} completely parameterizes the forward utilities of (\ref{Utility}) in the discrete time setting. In fact, the unique parameter for these forward utilities is the terminal value of the process $D$, which is $D(T)$. Given this random variable, we calculate the optimal portfolio for the $n^{th}$-period of time, $\widehat\theta_n$ as a root of equation (\ref{DThetaHat2}). Afterwards,  we calculate $D_{n-1}$ from (\ref{DThetaHat1}). Then, we repeat this procedure over and over again until we completely determine the two processes $D$ and $\widehat\theta$.
\end{remark}

{\bf Proof of Theorem \ref{Discreteparametrizationpower}}. Remark that, due to (\ref{condition}), the process $D$ can be represented by

$$\begin{array}{llll}
D(t)=D(0)Z^D_t\exp(a^D_t),\ \ \ \ \ \ \ Z^D_t:=\displaystyle\prod_{i=1}^t {{D(j)}\over{E(D(j)|{\cal F}_{j-1})}},\\
\\
a^D_t:=\displaystyle\sum_{j=1}^t \log\left[E\left({{D(j)}\over{D(j-1)}}\Big|{\cal F}_{j-1}\right)\right],\ \ \ \ \ \ t=1,2,...,T;\ \  Z^D_0=1,\ \ a^D_0=0.
\end{array}$$
It is easy to verify that $Z^D$ is a positive martingale (since $pD(j)>0$) and $a^D$ is predictable. Thus, throughout the remaining part of the proof, we will consider the probability measure $Q:=Z^D_T\cdot P$. We will start by proving $(i)\Longrightarrow (ii)$. Thus, suppose that $(i)$ holds. Then there exists an admissible strategy $\widehat\theta$ such that for any other admissible strategy $\theta$, the processes $U_p\left(j,\displaystyle\prod_{k=1}^j(1+\widehat\theta_k^{tr}\Delta S_k)\right)$ and $U_p\left(j,\displaystyle\prod_{k=1}^j(1+\theta_k^{tr}\Delta S_k)\right)$ are martingale and supermartingale respectively. This implies that for any $j=1,2,...,T$,
\begin{equation}\label{inequlityfordiscrete}
D(0) E^Q\left((1+\theta_j^{tr}\Delta S_j)^p\Big|{\cal F}_{j-1}\right)\leq D(0)e^{-a^D_j+a^D_{j-1}}=D(0) E^Q\left((1+\widehat\theta_j^{tr}\Delta S_j)^p\Big|{\cal F}_{j-1}\right).\end{equation}
Then, the equality in the RHS term of (\ref{inequlityfordiscrete}) implies (\ref{DThetaHat1}). While the whole inequality (\ref{inequlityfordiscrete}) can be transformed into
$$
D(0)\int (1+\theta_j^{tr} x)^p G_j^Q(dx)\leq D(0)\int (1+\widehat\theta_j^{tr} x)^p G_j^Q(dx),$$
where $G_j^Q(dx)$ is give by $ G_j^Q(dx):=Q\left(\Delta S_j\in dx\Big|\ {\cal F}_{j-1}\right)$. Due to Lemma \ref{lem-interior-integrability} and {\bf Assumption \ref{discretetime}}, we conclude that
\begin{equation}\label{dis-power-psi}
\Psi_j(\lambda):=D(0)\int (1+\lambda^{tr} x)^p G_j^Q(dx),\ \ \ \lambda\in {\cal D}_j^+,\end{equation} is differentiable on int$({\cal D}_j^+)$, and its minimum $\widehat\theta_j$ belongs to int$({\cal D}_j^+)$ and is a root for
$$
0=\nabla \Psi_j(\lambda)=pD(0)\int (1+\lambda^{tr} x)^{p-1}x G_j^Q(dx).$$
This is equivalent to (\ref{DThetaHat2}), and assertion (ii) follows.\\
\noindent To prove the reverse (i.e. $(ii)\Longrightarrow (i)$), we suppose that assertion (ii) holds. Then by multiplying both sides of (\ref{DThetaHat1})
by $x^p\prod_{k=1}^{j-1}(1+\widehat\theta^{tr}_j\Delta S_k)^p$, we obtain
$$
 D(j-1)x^p\prod_{k=1}^{j-1}(1+\widehat\theta^{tr}_k\Delta S_k)=E\left(D(j)x^p\prod_{k=1}^{j}(1+\widehat\theta^{tr}_j\Delta S_k)^p\ \Big|\mathcal{F}_{j-1}\right).$$
 This proves that for any $x\in (0,+\infty)$ the process $U_p\left(j, x\displaystyle\prod_{k=1}^{j}(1+\widehat\theta^{tr}_k\Delta S_k)\right),\ \  j=0,1,...,T,$ is a martingale. Since $pD(j)>0$ and $p<1$ for any $j=0,1,...,T$, then for any admissible portfolio rate $\theta$, we derive
 $$
 D(j)(1+\theta^{tr}_j\Delta S_j)^p-D(j)(1+\widehat\theta^{tr}_j\Delta S_j)^p\leq pD(j)\left(\theta_j-\widehat\theta_j\right)^{tr}\Delta S_j(1+\widehat\theta^{tr}_j\Delta S_j)^{p-1}.$$
 Then, by taking conditional expectation in both sides of the above inequality and using (\ref{DThetaHat2}) and (\ref{DThetaHat1}), we obtain
 $$
 E\left(x^p D(j)(1+\theta^{tr}_j\Delta S_j)^p\Big|{\cal F}_{j-1}\right)\leq E\left(x^p D(j)(1+\widehat\theta^{tr}_j\Delta S_j)^p\Big|{\cal F}_{j-1}\right)=x^p D(j-1).$$
 Then by multiplying both sides of this inequality with $\displaystyle\prod_{k=1}^{j-1}(1+\theta^{tr}_k\Delta S_k)^p$, we conclude that
 $$
U_p\left(j, x\displaystyle\prod_{k=1}^{j}(1+\theta^{tr}_j\Delta S_k)\right)=x^p D(j)\prod_{k=1}^{j}(1+\theta^{tr}_k\Delta S_k)^p,\ \ \ j=0,1,...,T$$
is a supermartingale for any $x>0$ and any admissible $\theta$. This ends the proof of the theorem.\qed
\vskip .5cm
One of the easiest and popular case of discrete-time market model is the binomial model. Let $\xi_j$ be a $\mathcal{F}_j$-measurable random variable, which takes only two values, $\xi^u_{j}$ and $\xi^d_{j}$ satisfying $0<\xi_{j}^d<1<\xi_{j}^u$ for any $j=1,2,...,T$. Given the price of the stock at time $j-1$ (i.e. $S_{j-1}$), the price at time $j$ will either go up  to $S_{j-1}\xi^u_{j}
$ or go down to $S_{j-1}\xi_{j}^d
$. Therefore, we get
$$S_{j}=S_{j-1}\xi_{j}=S_0\prod_{k=1}^{j}\xi_k,\ \ \ \ \ \ S_0>0.$$
We denote by $(A_j)_{j=1,...,T}$ the sequence of events given by
\begin{equation}\label{event-dis-A}
A_{j}:=\{\xi_{j}=\xi^u_{j}\}\in\mathcal{F}_{j}.\end{equation}
For this model, we have $\#(\Omega)=2^N<+\infty.$  Thus, any random variable is integrable, and
$$
\Theta_j(D,U_p)=L^0(\mathcal{F}_{j-1})\cap \mathcal{D}_j^+,\quad j=1,2,...,T.$$
Furthermore, in this case Assumption \ref{discretetime} is always fulfilled due to
\begin{equation}\label{dis-powerBN-D1}
\mathcal{D}_{j}^+=\Big]1/{(1-\xi^u_j)S_{j-1}},\ \ 1/{(1-\xi^d_j)S_{j-1}}\Big[=int(\mathcal{D}_j^+),\ \ \ \ \ \forall\ \ j=1,2,...,T.
\end{equation}
 The description of the power-type forward utilities in this simple framework ---generalizes the results of Musiela and Zariphoupoulou (2009a) to the power case--- has a more explicit form and is given in the following.

\begin{corollary}\label{disBN-power-paraThem}
 The following two assertions are
equivalent.\\
(i) $U_{p}(t,x)$, defined in (\ref{Utility}), is a forward utility with the optimal
portfolio rate $\widehat\theta=(\widehat\theta_{j})_{j=1,2,...,T}$.\\
(ii) The process $D$ is a supermartingale having the multiplication Doob-Meyer decomposition, $D=D(0) Z^D\exp(a^D)$  ($Z^D$ is a positive martingale and $a^D$ is predictable) such that the following properties hold:\\
(ii.1) By putting $Q:=\left(Z^D_T/Z^D_0\right)\cdot P$, then for $j\in\{1,2,...,T\}$, $\widehat\theta_j$ is given by
\begin{equation}\label{dis-powerBN-thetahat}
\widehat\theta_j=\frac{\gamma_j-1}{(\xi^u_j-1-\gamma_j\xi^d_j+\gamma_j)S_{j-1}}\in\mathcal{D}_j^+,\ \gamma_j:=\left(\frac{(\xi^u_j-1)Q(A_{j}|\mathcal{F}_{j-1})}{(1-\xi^d_j)Q(A^c_{j}|\mathcal{F}_{j-1})}\right)^{1-q}
\end{equation}
(ii.2) The predictable process $a^{D}$ is given by
\begin{eqnarray*}
a^{D}_{j}=-\sum_{k=1}^j\log\Big(\frac{(\gamma_k^{p-1}Q(A_{k}|\mathcal{F}_{k-1})+Q(A^c_{k}|\mathcal{F}_{k-1}))
(\xi^u_k-\xi^d_k)^{p-1}}{(\xi^u_k-1-\gamma_k\xi^d_k+\gamma_k)^{p-1}}\Big),\ \ \ \ \ j=1,2,...,T.
\end{eqnarray*}
\end{corollary}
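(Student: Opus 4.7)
The strategy is to reduce Corollary \ref{disBN-power-paraThem} to Theorem \ref{Discreteparametrizationpower} by exploiting the two-atom structure of the conditional law of $\Delta S_j$ given $\mathcal{F}_{j-1}$. First I would observe that in the binomial setup Assumption \ref{discretetime} is automatic: by (\ref{dis-powerBN-D1}) the set $\mathcal{D}_j^+$ is open, so $\mathcal{D}_j^+ = \mathrm{int}(\mathcal{D}_j^+)$; moreover $\#(\Omega) < \infty$ so every $\mathcal{F}_j$-measurable random variable is integrable and $\Theta_j(D,U_p) = L^0(\mathcal{F}_{j-1})\cap\mathcal{D}_j^+$. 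Hence all hypotheses of Theorem \ref{Discreteparametrizationpower} are satisfied and that theorem characterizes $(i)$ by the pair of equations (\ref{DThetaHat2})--(\ref{DThetaHat1}).

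Next, writing $D = D(0)Z^D\exp(a^D)$ with $Z^D_t := \prod_{k=1}^t D(k)/E(D(k)|\mathcal{F}_{k-1})$ (a positive martingale since $pD(k)>0$) and $a^D$ predictable, and setting $Q := (Z^D_T/Z^D_0)\cdot P$, equations (\ref{DThetaHat2}) and (\ref{DThetaHat1}) translate, via the Bayes rule, into
\[
E^Q\!\bigl(\Delta S_j(1+\theta^{tr}\Delta S_j)^{p-1}\,\big|\,\mathcal{F}_{j-1}\bigr)=0,\qquad e^{a^D_{j-1}-a^D_j} = E^Q\!\bigl((1+\widehat\theta^{tr}_j\Delta S_j)^p\,\big|\,\mathcal{F}_{j-1}\bigr).
\]
In the binomial framework $\Delta S_j = S_{j-1}(\xi^u_j-1)$ on $A_j$ and $\Delta S_j = S_{j-1}(\xi^d_j-1)$ on $A_j^c$. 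Writing $p_u := Q(A_j|\mathcal{F}_{j-1})$ and $p_d := 1-p_u$, the first-order equation above becomes the scalar identity
\[
p_u(\xi^u_j-1)(1+\widehat\theta_j S_{j-1}(\xi^u_j-1))^{p-1} + p_d(\xi^d_j-1)(1+\widehat\theta_j S_{j-1}(\xi^d_j-1))^{p-1} = 0,
\]
which I would solve by isolating the ratio $\gamma_j := (1+\widehat\theta_j S_{j-1}(\xi^u_j-1))/(1+\widehat\theta_j S_{j-1}(\xi^d_j-1))$: raising both sides to the $1/(p-1)$ power and using $q-1 = 1/(p-1)$ yields directly the formula for $\gamma_j$ in (\ref{dis-powerBN-thetahat}), and then inverting the definition of $\gamma_j$ gives the closed form for $\widehat\theta_j$.

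For part (ii.2) I would plug $\widehat\theta_j$ into the $Q$-conditional expectation of $(1+\widehat\theta_j\Delta S_j)^p$. Using the relations $1+\widehat\theta_j S_{j-1}(\xi^d_j-1) = (\xi^u_j-\xi^d_j)/(\xi^u_j-1-\gamma_j\xi^d_j+\gamma_j)$ and $1+\widehat\theta_j S_{j-1}(\xi^u_j-1) = \gamma_j\cdot(\xi^u_j-\xi^d_j)/(\xi^u_j-1-\gamma_j\xi^d_j+\gamma_j)$, the expectation becomes $(p_u\gamma_j^p + p_d)\bigl((\xi^u_j-\xi^d_j)/(\xi^u_j-1-\gamma_j\xi^d_j+\gamma_j)\bigr)^p$, whence the stated formula for $a^D_j$ after taking the logarithm and summing would follow provided one can replace $p_u\gamma_j^p+p_d$ by $p_u\gamma_j^{p-1}+p_d$ at the cost of reducing the exterior exponent from $p$ to $p-1$. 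This is the main obstacle I anticipate: it amounts to the algebraic identity $(p_u\gamma_j^p+p_d)(\xi^u_j-\xi^d_j) = (p_u\gamma_j^{p-1}+p_d)(\xi^u_j-1-\gamma_j\xi^d_j+\gamma_j)$, which after expansion reduces precisely to $p_u\gamma_j^{p-1}(\xi^u_j-1) + p_d(\xi^d_j-1) = 0$, i.e.\ the first-order condition characterizing $\gamma_j$. Once this identity is flagged the whole computation closes up, and the reverse implication $(ii)\Rightarrow(i)$ follows by direct substitution of the explicit $\widehat\theta_j$ and $a^D_j$ back into the characterization (\ref{DThetaHat2})--(\ref{DThetaHat1}) of Theorem \ref{Discreteparametrizationpower}.
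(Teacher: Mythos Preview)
Your proposal is correct and follows essentially the same route as the paper: verify that the binomial hypotheses make Assumption \ref{discretetime} and (\ref{condition}) automatic, invoke Theorem \ref{Discreteparametrizationpower}, pass to $Q$ via the multiplicative Doob--Meyer decomposition of $D$, and then solve the scalar first-order condition explicitly. The paper's proof is considerably terser --- it simply writes $\Psi_j$ in the two-atom form, says ``solve $\Psi_j'(\lambda)=0$'', and asserts that plugging back into (\ref{DThetaHat1}) gives $a^D$ --- whereas you spell out the algebra, in particular the identity $(p_u\gamma_j^p+p_d)(\xi^u_j-\xi^d_j)=(p_u\gamma_j^{p-1}+p_d)(\xi^u_j-1-\gamma_j\xi^d_j+\gamma_j)$ that reconciles the exponent-$p$ expression one gets directly with the exponent-$(p-1)$ form stated in (ii.2); your observation that this identity is exactly the first-order condition in disguise is correct and is precisely the missing algebraic step the paper glosses over.
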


{\bf Proof.}
This corollary can be obtained as an application of Theorem \ref{Discreteparametrizationpower}. Thus, we will avoid to repeat the same proof again by giving some remarks emphasizing the nice features of this case that simplify tremendously the proof. Since $\mathcal{D}_{j}^+$ is open and $\#(\Omega)<+\infty$, the assumptions (\ref{condition}) and (\ref{power-assumption-dis}) are automatically fulfilled. The function $\Psi_j$ given by (\ref{dis-power-psi}) becomes
$$\Psi_j(\lambda)= Q(A_{j}|\mathcal{F}_{j-1}){(1+(\xi^u_{j}-1)\lambda S_{j-1})^{p}}+ Q(A^c_{j}|\mathcal{F}_{j-1}){(1+(\xi^d_{j}-1)\lambda S_{j-1})^{p}},$$
which is differentiable on $\mathcal{D}_j^+$.
Thus, $\widehat\theta_j$ is the solution of the equation, $\Psi'(\lambda)=0$, which leads to (\ref{dis-powerBN-thetahat}). Finally, $a^D_j$ is derived by plugging (\ref{dis-powerBN-thetahat}) into (\ref{DThetaHat1}) and applying the decomposition of $D$.
This ends the proof of the corollary.
\qed
\vskip .5cm
 We conclude this subsection by deriving the results for the logarithm case.

\begin{theorem}\label{Discreteparametrizationlog}
Suppose that $S$ is bounded, ${\widehat D}$ and $\overline{D}$ two processes satisfy (\ref{condition}), and Assumption \ref{discretetime} for $M_t\equiv E\left({\widehat D}_T\ \big|\ {\cal F}_t\right)/E{\widehat D}(T)$ holds. Then, the following are equivalent.\\
(i) The functional $U_0(t,x):={\widehat D}(t)\log(x)+\overline{D}(t)$ (see (\ref{Utility})), is a forward utility with the optimal portfolio $\widehat\theta=(\widehat\theta_i)_{i=1,2,...,T}$.\\
(ii) ${\widehat D}$ is a positive martingale, the process $\widehat\theta$ satisfies
\begin{equation}\label{DThetaHat2log}
\widehat\theta_j\in\Theta_j(\widehat D, U_0)\ \ \mbox{and $\widehat\theta_j$ is a root for}\ \ \ E\left({{{\widehat D}(j)}\over{1+\theta^{tr}\Delta S_j}} \Delta S_j\ \Big|\ {\cal F}_{j-1}\right)=0,\ \ \ \ \ j=1,2,...,T,\end{equation}
and the process $\overline{D}$ is a supermartingale with predictable part given by
\begin{equation}\label{DThetaHat1log}
-\sum_{k=1}^j E\left({\widehat D}(k)\log(1+\widehat\theta^{tr}_{k}\Delta S_{k})\ \Big|\mathcal{F}_{k-1}\right).\end{equation}
 Here $\Theta_j(\widehat D,U_0)$ is given by (\ref{AdmissibleTheta}).\end{theorem}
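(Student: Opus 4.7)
\textbf{Proof proposal for Theorem \ref{Discreteparametrizationlog}.} The plan is to mimic the two-step argument used for Theorem \ref{Discreteparametrizationpower}, with the concavity of $\log$ in place of the concavity of $x^p$. The key simplification in discrete time is that the forward property collapses into step-by-step conditional (super)martingale relations in which the multiplicative increment $1+\theta_j^{tr}\Delta S_j$ can be isolated, so that all analysis reduces to optimizing a one-step concave functional over $\mathcal{D}_j^+$.

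For $(i)\Longrightarrow(ii)$, I would write the martingale property of $U_0(j,xX_j(\widehat\theta))$ with $X_j(\theta):=\prod_{k=1}^j(1+\theta_k^{tr}\Delta S_k)$ as
\[
E\!\left[\log(x)\widehat D(j)+\widehat D(j)\log X_j(\widehat\theta)+\overline D(j)\,\big|\,\mathcal F_{j-1}\right]=\log(x)\widehat D(j-1)+\widehat D(j-1)\log X_{j-1}(\widehat\theta)+\overline D(j-1),
\]
and note that since this identity must hold for every $x>0$, comparing two values of $x$ forces $E[\widehat D(j)|\mathcal F_{j-1}]=\widehat D(j-1)$; positivity of $\widehat D$ follows from the strict monotonicity of $U_0$ in $x$, giving the martingale claim. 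Subtracting the $\log(x)\widehat D(j)$ piece leaves the recursion $E[\widehat D(j)\log(1+\widehat\theta_j^{tr}\Delta S_j)+\overline D(j)|\mathcal F_{j-1}]=\overline D(j-1)$, which summed from $1$ to $j$ yields precisely (\ref{DThetaHat1log}). The supermartingale property at arbitrary admissible $\theta$ then amounts to saying that $\widehat\theta_j$ maximizes $\theta\mapsto E[\widehat D(j)\log(1+\theta^{tr}\Delta S_j)|\mathcal F_{j-1}]$ over $\mathcal D_j^+$; passing to $Q$ with density $\widehat D(T)/\widehat D(0)$ and invoking Lemma \ref{lem-interior-integrability} together with Assumption \ref{discretetime} for the martingale $M_t=E(\widehat D_T|\mathcal F_t)/E\widehat D(T)$ places $\widehat\theta_j\in\mathrm{int}(\mathcal D_j^+)$ and yields the first-order condition (\ref{DThetaHat2log}).

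For $(ii)\Longrightarrow(i)$, I would check the two properties directly. Decomposing $U_0(j,xX_j(\widehat\theta))$ as $\log(x)\widehat D(j)+\widehat D(j)\log X_{j-1}(\widehat\theta)+\widehat D(j)\log(1+\widehat\theta_j^{tr}\Delta S_j)+\overline D(j)$ and taking the conditional expectation, the martingale property of $\widehat D$ handles the first two summands, and rewriting (\ref{DThetaHat1log}) as the Doob-Meyer decomposition of $\overline D$ collapses the last two into $\overline D(j-1)$, giving the martingale property along $\widehat\theta$. For arbitrary admissible $\theta$ I would apply the tangent inequality for concave $\log$,
\[
\log(1+\theta_j^{tr}\Delta S_j)\le\log(1+\widehat\theta_j^{tr}\Delta S_j)+\frac{(\theta_j-\widehat\theta_j)^{tr}\Delta S_j}{1+\widehat\theta_j^{tr}\Delta S_j},
\]
multiply by $\widehat D(j)>0$ and condition on $\mathcal F_{j-1}$; the correction term vanishes by (\ref{DThetaHat2log}) combined with $\mathcal F_{j-1}$-measurability of $\theta_j-\widehat\theta_j$, so that $E[\widehat D(j)\log(1+\theta_j^{tr}\Delta S_j)|\mathcal F_{j-1}]\le E[\widehat D(j)\log(1+\widehat\theta_j^{tr}\Delta S_j)|\mathcal F_{j-1}]$, and the supermartingale property for $U_0(j,xX_j(\theta))$ follows as above.

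The main obstacle is the interior step in $(i)\Longrightarrow(ii)$: because $1/(1+\theta^{tr}\Delta S_j)$ blows up on $\partial\mathcal D_j^+$, one must rule out boundary minimizers and justify differentiability of $\lambda\mapsto E^Q[\log(1+\lambda^{tr}\Delta S_j)|\mathcal F_{j-1}]$ at the optimum. This is exactly the reason Assumption \ref{discretetime} is stated with the specific $M_t=E(\widehat D_T|\mathcal F_t)/E\widehat D(T)$, since after the change of measure to $Q$ it provides the hypothesis Lemma \ref{lem-interior-integrability} needs to push $\widehat\theta_j$ into $\mathrm{int}(\mathcal D_j^+)$ and to legitimize differentiating under the conditional expectation; the integrability bound (\ref{condition}) then secures the required uniform integrability along any sequence $\theta_n\to\widehat\theta_j$.
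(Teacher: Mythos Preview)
Your proposal is correct and follows essentially the same route as the paper's own proof: in both directions you isolate the one-step increments, vary $x$ to extract the martingale property of $\widehat D$, recover (\ref{DThetaHat1log}) from the recursion, pass to $Q=\widehat D(T)/\widehat D(0)\cdot P$ and invoke Lemma~\ref{lem-interior-integrability} with Assumption~\ref{discretetime} to force $\widehat\theta_j\in\mathrm{int}(\mathcal D_j^+)$ and obtain (\ref{DThetaHat2log}), and for the converse you use exactly the tangent-line inequality for $\log$ combined with (\ref{DThetaHat2log}) to kill the correction term. The only cosmetic difference is that the paper compresses the tangent argument into a single inequality $E\bigl[\widehat D(j)\{\log(1+\theta_j^{tr}\Delta S_j)-\log(1+\widehat\theta_j^{tr}\Delta S_j)\}\,\big|\,\mathcal F_{j-1}\bigr]\le 0$ without writing the linearization explicitly.
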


{\bf Proof.} Suppose that assertion (i) holds. Then, for any $x\in (0,+\infty)$, the process
$$U_0\left(j, x\prod_{k=1}^j(1+\widehat\theta_k^{tr}\Delta S_k)\right)\ \ \ \ \mbox{is a martingale}.$$
Then, we deduce that both processes
$$
\overline{D}(j)+{\widehat D}(j)\log\left(\prod_{k=1}^j(1+\widehat\theta_k^{tr}\Delta S_k)\right),\ \ \mbox{and}\ \ {\widehat D}(j),$$ are martingales. This proves that ${\widehat D}$ is a positive martingale. Since for any admissible $\theta$,
$$U_0\left(j, x\prod_{k=1}^j(1+\theta_k^{tr}\Delta S_k)\right),$$
is a supermartingale. Then  we derive
\begin{equation}\label{inequalityStar}
\begin{array}{lll}
E\left({\widehat D}(j)\log(1+\theta_j^{tr}\Delta S_j)\Big|{\cal F}_{j-1}\right)\leq E\left(\overline{D}(j-1)-\overline{D}(j)\Big|{\cal F}_{j-1}\right)\\
\\
\hskip 5.5cm =E\left({\widehat D}(j)\log(1+\widehat\theta_j^{tr}\Delta S_j)\Big|{\cal F}_{j-1}\right).\end{array}\end{equation}
As a result, the equality in the RHS term of (\ref{inequalityStar}) implies that
$$
A^{\overline{D}}_j:=\sum_{k=1}^j E\left(\overline{D}(k)-\overline{D}(k-1)\Big|{\cal F}_{k-1}\right)=-\sum_{k=1}^j E\left({\widehat D}(k)\log(1+\widehat\theta_k^{tr}\Delta S_k)\Big|{\cal F}_{k-1}\right).$$ This proves (\ref{DThetaHat1log}). If we put $Q:={\widehat D}(T)\left({\widehat D}(0)\right)^{-1}\cdot P$, then (\ref{inequalityStar}) becomes
$$
\int \log(1+\theta_j^{tr} x)G_j^Q(dx)\leq \int \log(1+\widehat\theta_j^{tr} x)G_j^Q(dx),$$ where $G^Q_j(dx):=Q(\Delta S_j\in dx|{\cal F}_{j-1})$. This proves that $\widehat\theta_j$ maximizes the function
\begin{equation}\label{dis-log-phi}
{\cal Y}_j(\lambda):= \int \log(1+\lambda^{tr} x)G_j^Q(dx), \end{equation} on the set ${\cal D}_j^+$. Due to Lemma \ref{lem-interior-integrability} , we conclude that $\widehat\theta_j\in \mbox{int}\left({\cal D}_j^+\right)$, and $\widehat\theta_j$ is a root for
$$\begin{array}{lll}
0={\widehat D}(j-1)\nabla{\cal Y}_j(\lambda)=\displaystyle {\widehat D}(j-1)\int \left(1+\lambda^{tr} x\right)^{-1}xG_j^Q(dx)\\
\\
\hskip 3.5cm =E\left({\widehat D}(j)(1+\lambda^{tr} \Delta S_j)^{-1}\Delta S_j\Big|{\cal F}_{j-1}\right).\end{array}$$
This proves (\ref{DThetaHat2log}), and the proof of assertion (ii) is completed.\\
To prove that (ii) implies (i), we assume that assertion (ii) holds. Then, due to (\ref{DThetaHat2log}), and the fact that ${\widehat D}$ is a martingale, we calculate
$$\begin{array}{llll}
E\left(U_0(j, x\prod_{k=1}^j (1+\widehat\theta^{tr}_{k}\Delta S_{k}))\Big|{\cal F}_{j-1}\right)= {\widehat D}(j-1)\log(x)+E\left(\overline{D}(j)\Big|{\cal F}_{j-1}\right)+\\
\\
\hskip 1.5cm {\widehat D}(j-1)\displaystyle \sum_{k=1}^{j-1}\log(1+\widehat\theta^{tr}_{k}\Delta S_{k})+E\left({\widehat D}(j)\log(1+\widehat\theta^{tr}_{j}\Delta S_{j})\Big|{\cal F}_{j-1}\right),\\
\\
\hskip 6cm =\displaystyle U_0\left(j-1, x\prod_{k=1}^{j-1} (1+\widehat\theta^{tr}_{k}\Delta S_{k})\right).
\end{array}$$
The last equality follows easily from (\ref{DThetaHat1log}). This proves that $\displaystyle U_0\left(j, x\prod_{k=1}^j (1+\widehat\theta^{tr}_{k}\Delta S_{k})\right)$ is a martingale for any $x\in (0,+\infty)$. Thanks to (\ref{DThetaHat2log}) and the concavity of the log function, we obtain for any admissible $\theta$
$$
E\left({\widehat D}(j)\left[\log(1+\theta^{tr}_{j}\Delta S_{j})-\log(1+\widehat\theta^{tr}_{j}\Delta S_{j})\right]\Big|{\cal F}_{j-1}\right)\leq 0.$$
Then, by combining this inequality  with
$$\begin{array}{llll}
U_0\left(j, x\displaystyle\prod_{k=1}^j (1+\theta^{tr}_{k}\Delta S_{k})\right)= U_0\left(j, x\displaystyle\prod_{k=1}^j (1+\widehat\theta^{tr}_{k}\Delta S_{k})\right)+
{\widehat D}(j)\displaystyle \sum_{k=1}^{j}\log\left({{1+\theta^{tr}_{k}\Delta S_{k}}\over{1+\widehat\theta^{tr}_{k}\Delta S_{k}}}\right),
\end{array}$$
we conclude that the process $U_0\left(j, x\displaystyle\prod_{k=1}^j (1+\theta^{tr}_{k}\Delta S_{k})\right)$ is a supermartingale. This completes the proof of the theorem.
\qed\\

For the binomial model,  the sets $\Theta_{j}(\widehat D, U_0)$ and $\mathcal{D}_j^+$ are similar to those of the power case. The characterization of the logarithm forward utilities in binomial model is stated as follows.
\begin{corollary}\label{disBN-log-paraThem}
The following two assertions are
equivalent.\\
(i) $U_{0}(t,x)$ defined in (\ref{Utility}), is a forward utility with optimal
portfolio $\widehat\theta=(\widehat\theta_{j})_{j=1,2,...,T}$.\\
(ii) The following properties hold:\\
(ii.1) $D_{0}$ is a positive martingale and
$\widehat\theta_j$ is given by
\begin{equation}\label{dis-logBN-thetahat}
\widehat\theta_j=\frac{(\xi^u_j-1)Q_j-(1-\xi^d_j)(1-Q_j)}{(\xi^u_j-1)(1-\xi^d_j)S_{j-1}}\in\mathcal{D}_j^+,
\end{equation} where $Q_j:=Q(A_{j}|\mathcal{F}_{j-1})$, $Q:=\frac{D_{0}(T)}{D_{0}(0)}\cdot P$ and  $A_j$ is given by (\ref{event-dis-A}).\\
(ii.2)  $\overline{D}$ is a supermartingale with  predictable part  given by
\begin{equation}\label{dis-BN-log-pre}
-\sum_{k=1}^j\Big[\log\Big(\frac{\xi^u_k-\xi^d_j}{1-\xi^d_k}Q_j\Big)Q_j+\log\Big(\frac{\xi^u_k-\xi^d_j}{\xi^u_k-1}(1-Q_j)\Big)(1-Q_j)\Big].
\end{equation}
\end{corollary}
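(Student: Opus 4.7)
\textbf{Proof proposal for Corollary \ref{disBN-log-paraThem}.}
The plan is to derive the statement as a direct specialization of Theorem \ref{Discreteparametrizationlog} to the binomial framework, where the conditional law of $\Delta S_j$ given $\mathcal{F}_{j-1}$ is a two-atom distribution supported on $\{S_{j-1}(\xi^u_j-1),\,S_{j-1}(\xi^d_j-1)\}$. First, I would verify that the hypotheses of Theorem \ref{Discreteparametrizationlog} hold automatically here: $S$ is bounded (since $\#(\Omega)<+\infty$), integrability condition (\ref{condition}) is trivial for finite $\Omega$, and Assumption \ref{discretetime} holds because (\ref{dis-powerBN-D1}) gives $\mathcal{D}_j^+=\mathrm{int}(\mathcal{D}_j^+)$ so that $\Gamma_j=\emptyset$ and the conditional expectation becomes a finite sum that is trivially continuous in $\theta$. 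Hence Theorem \ref{Discreteparametrizationlog} applies, and it only remains to solve equation (\ref{DThetaHat2log}) explicitly and then plug the solution into (\ref{DThetaHat1log}).

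Next, I would change measure to $Q:=(\widehat D(T)/\widehat D(0))\cdot P$, so that (\ref{DThetaHat2log}) becomes $E^{Q}\!\left(\Delta S_j/(1+\theta\Delta S_j)\,\big|\,\mathcal{F}_{j-1}\right)=0$. Writing $a_j:=S_{j-1}(\xi^u_j-1)>0$ and $b_j:=S_{j-1}(1-\xi^d_j)>0$, and using $Q_j=Q(A_j|\mathcal{F}_{j-1})$, this one-dimensional equation reads
\begin{equation*}
\frac{a_j\,Q_j}{1+\theta a_j}-\frac{b_j(1-Q_j)}{1-\theta b_j}=0,
\end{equation*}
which, after clearing denominators, is linear in $\theta$ and gives the closed-form $\widehat\theta_j=[a_jQ_j-b_j(1-Q_j)]/(a_jb_j)$. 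Substituting back $a_j$ and $b_j$ yields exactly (\ref{dis-logBN-thetahat}). Membership $\widehat\theta_j\in\mathcal{D}_j^+$ follows by direct inspection of the endpoints of the interval in (\ref{dis-powerBN-D1}), noting that $0<Q_j<1$.

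Finally, I would compute the predictable part (\ref{DThetaHat1log}) explicitly. A short algebraic simplification of $1+\widehat\theta_j\Delta S_j$ on the two atoms gives
\begin{equation*}
1+\widehat\theta_j S_{j-1}(\xi^u_j-1)=\frac{Q_j(\xi^u_j-\xi^d_j)}{1-\xi^d_j},\qquad 1+\widehat\theta_j S_{j-1}(\xi^d_j-1)=\frac{(1-Q_j)(\xi^u_j-\xi^d_j)}{\xi^u_j-1}.
\end{equation*}
Inserting these into $E(\widehat D(k)\log(1+\widehat\theta_k^{tr}\Delta S_k)\mid\mathcal{F}_{k-1})=\widehat D(k-1)\,E^Q(\log(1+\widehat\theta_k^{tr}\Delta S_k)\mid\mathcal{F}_{k-1})$ (using the martingale property of $\widehat D$) and summing over $k$ yields formula (\ref{dis-BN-log-pre}). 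The converse direction is similarly immediate: given (\ref{dis-logBN-thetahat}) and (\ref{dis-BN-log-pre}), reversing the above algebra shows that (\ref{DThetaHat2log}) and (\ref{DThetaHat1log}) are satisfied, so Theorem \ref{Discreteparametrizationlog} delivers the forward utility property.

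There is no real obstacle here; the proof is essentially a bookkeeping exercise. The only mild technicality is keeping track of the sign conventions $\xi^d_j<1<\xi^u_j$ when simplifying the two expressions for $1+\widehat\theta_j\Delta S_j$ and checking that both are strictly positive, which is precisely what guarantees $\widehat\theta_j\in\mathrm{int}(\mathcal{D}_j^+)$ and hence that $\log(1+\widehat\theta_j\Delta S_j)$ is well-defined on both atoms.
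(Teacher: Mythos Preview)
Your proposal is correct and follows essentially the same route as the paper: both derive the corollary by specializing Theorem \ref{Discreteparametrizationlog} to the two-atom binomial setting, writing out the first-order condition (\ref{DThetaHat2log}) explicitly and solving the resulting linear equation in $\theta$, then substituting the closed-form $\widehat\theta_j$ back into (\ref{DThetaHat1log}). Your write-up simply fills in more of the algebra than the paper's two-line proof, which only records that ${\cal Y}_j$ reduces to a two-term sum in this context.
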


{\bf Proof.}
The proof of this corollary follows from Theorem \ref{Discreteparametrizationlog}, and the fact that the function ${\cal Y}_j(\theta)$ defined in (\ref{dis-log-phi}) takes the form of

 \begin{equation}\label{log-dis-phi}
 {\cal Y}_j(\theta)=\log \left((1+(\xi^u_{j}-1)\theta S_{j-1})\right)Q_j+\log \left((1+(\xi^d_{j}-1)\theta S_{j-1})\right)
(1-Q_j),
\end{equation}
in the binomial context. \qed
\begin{remark}\label{rem-muti-binomial}
An extension of the binomial discrete model is the multi-dimensional discrete model, where the $d$-dimensional stock price process branches into $n$ ($n> 2$) possible values at any time. For such model, Assumption \ref{discretetime} ---for the case of $\widehat D$ and $p=0$ instead of $D$---  and (\ref{condition}) are automatically satisfied since the set $\mathcal{D}^+$ is open and $\#(\Omega)<+\infty$. The characterization of HARA forward utilities for this model and other illustrative examples can be found in Ma (2013).
\end{remark}


\section{Conclusion and Related Open problems}
This paper describes completely and explicitly the HARA-type forward utilities when $S$ is locally bounded and under some mild assumptions (see {\bf Assumption \ref{crucialassumption}}) on the parameters of these random field utilities and those of the market model. There is no doubt that our approach is the most explicit that deals with the most general market models. However, there are some points where our results can be improved and/or extended. Furthermore, our work leads to a number of related open problems. Below, we outline some of these problems and possible extensions.\\

\noindent(1) Our results can be extended to the case of general $S$ (not necessarily locally bounded) on the one hand. On the other hand, {\bf Assumption \ref{crucialassumption}} can be ignored, and this leaded to the new concept of minimal Hellinger deflator. This last part constitutes our current work ---that is in progress--- in Choulli and Ma (2013), where the explicit forms for the forward utilities and their optimal portfolios are lost and the characterization is achieved by duality only. Also, we can conjecture that the non-arbitrage assumption of ${\cal Z}_{loc}^e(P)\not=\emptyset$ is redundant when $U$ is a forward utility. This can be proved by extending the results of Choulli, Deng and Ma (2013) to the case of random field utilities.\\

\noindent(2) Can we define a {\it Forward "Regularis\'ee"} for any random field utility?  This forward regularis\'ee will be the smallest forward utility that is bigger or equal to the random field utilities.\\

\noindent(3) The parametrization of forward utilities in discrete time setting is reduced to one parameter which is the terminal value of the process $D$. In other words, the forward utility and its optimal portfolio as well, in  that context, are calculated using backward iterations. This leads to the question whether we can characterize these forward utilities, in the continuous-time framework, using backward stochastic differential equations (BSDEs). This sounds a very promising and interesting alternative to the approach of PDEs. For the exponential case, we learned from Christoph Frei that, this question was addressed recently by Anthropelos (2013).\\

\noindent(4) An other problem is how these forward utilities and their optimal portfolios are altered in defaultable markets, or more generally by any random exit time. We believe that this question will lead to the stability of these forward utilities under uncertainty models. As a consequence, we might be able to explain the interplay between the uncertainty models and models with random horizons.\\

\appendix

\begin{center}
    \Large \textbf{Appendix}
\end{center}
This appendix contains two sections.

\section{Some Useful Intermediatory Results}

\noindent For the  following representation theorem, we refer to
Jacod (1979) (Theorem 3.75, page 103) and to Jacod and Shiryaev (2003) (Lemma 4.24,
page 185).

\begin{theorem}\label{representation}
Let $N\in {\cal M}_{0,loc}$. Then, there exist a predictable and
$S^c$-integrable process $\phi$, $N'\in {\cal M}_{0,loc}$ with
$[N',S]=0$ and functionals $f\in {\widetilde{{\cal P}}}$ and $g\in
{\widetilde{{\cal O}}}$ such that\\
(i)\hskip 0.25cm $
 \Bigl (\displaystyle\sum_{s=0}^t f(s, \Delta S_s )^2
I_{\{\Delta S_s\not = 0\}}\Bigr )^{{1\over{2}}}\in{{\cal A}}^+_{loc},\
 \Bigl (\sum_{s=0}^t g(s, \Delta S_s )^2
I_{\{\Delta S_s\not = 0\}}\Bigr )^{{1\over{2}}}\in{{\cal A}}^+_{loc}, $\\
(ii)\hskip 0.25cm $
M^P_{\mu}(g\ |\ {\widetilde {{\cal P}}})=0,$\\
(iii)\hskip 0.2cm The process $N$ is given by \begin{equation}
\label{Ndecomposition}
 N=\phi\cdot
S^c+W\star(\mu-\nu)+g\star\mu+{N'},\quad W=f+\frac{{\widehat
f}}{1-a}I_{\{a<1\}}.
\end{equation} Here $\widehat f_t=\displaystyle\int f_t(x)\nu(\{t\},d x)$ and $f$ has a version
such that $\{a=1\}\subset \{\widehat f=0\}$.\\ Moreover
 \begin{equation}
 \label{jumps}
 \Delta N_t= \Bigl(f_t(\Delta S_t)+g_t(\Delta S_t)\Bigr)I_{\{\Delta
 S_t\not=0\}}-{{\widehat
f_t}\over{1-a_t}}I_{\{\Delta
 S_t=0\}}+\Delta N_t'.
 \end{equation}
\end{theorem}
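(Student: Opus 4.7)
The statement is a Galtchouk--Kunita--Watanabe type decomposition of the local martingale $N$ against the semimartingale $S$, combined with the standard jump-decomposition apparatus associated with the integer-valued random measure $\mu$ of jumps of $S$. My approach splits naturally into three conceptual steps. As a first step, I would decompose $N = N^c + N^d$ into its continuous local martingale part and its purely discontinuous local martingale part (canonical since $\mathbb{F}$ satisfies the usual conditions), and apply the classical Galtchouk--Kunita--Watanabe projection of $N^c$ onto $S^c$ to obtain a predictable $S^c$-integrable process $\phi$ and a continuous local martingale $\widetilde N^c$ with $\langle \widetilde N^c, S^c\rangle = 0$, such that $N^c = \phi\cdot S^c + \widetilde N^c$.

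The second step is to analyze the purely discontinuous part $N^d$ through the $\widetilde{{\cal O}}$-measurable jump functional $U$ characterised by $U_t(\Delta S_t) = \Delta N^d_t$ on $\{\Delta S_t\neq 0\}$. Setting $f := M^P_{\mu}(U\,|\,\widetilde{{\cal P}})$ and $g := U - f$ produces the two required functionals, with $M^P_{\mu}(g\,|\,\widetilde{{\cal P}})=0$ holding by construction (this is (ii)); the square-integrability condition (i) is inherited from the local square-summability of the jumps of $N^d$, preserved by the conditional projection. The residual part of $\Delta N^d$ supported on $\{\Delta S=0\}$ is combined with $\widetilde N^c$ into the orthogonal piece $N'$; the identity $[N',S] = 0$ then follows from $\widetilde N^c\perp S^c$ and the absence of common jumps.

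The third and most technical step is the construction of $W := f + \widehat f/(1-a)\, I_{\{a<1\}}$, using a version of $f$ for which $\{a=1\}\subset\{\widehat f=0\}$. This specific form is dictated by the requirement that $W\star(\mu-\nu)$ be a genuine local martingale. A direct jump computation gives
\[
\Delta (W\star(\mu-\nu))_t \;=\; W_t(\Delta S_t)I_{\{\Delta S_t\neq 0\}} - \int W_t(x)\,\nu(\{t\},dx) \;=\; W_t(\Delta S_t)I_{\{\Delta S_t\neq 0\}} - \frac{\widehat f_t}{1-a_t}I_{\{a_t<1\}},
\]
and combining this with $\Delta(g\star\mu)_t = g_t(\Delta S_t)I_{\{\Delta S_t\neq 0\}}$ and $\Delta N'_t$ recovers the jump formula (\ref{jumps}). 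Since both sides of (\ref{Ndecomposition}) are local martingales starting at $0$ with the same continuous part and the same jumps, they must coincide.

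The main obstacle is the handling of the accessible jump times of $S$, namely the set $\{a_t > 0\}$. At such times the bare integral $f\star(\mu-\nu)$ is not, in general, a local martingale, and it is precisely the correction term $\widehat f/(1-a)\, I_{\{a<1\}}$ that restores the martingale property while simultaneously producing the compensating jump on $\{\Delta S_t=0\}$ visible in (\ref{jumps}). Verifying that this jump book-keeping is internally consistent---the compatibility condition $\{a=1\}\subset\{\widehat f=0\}$ needed for $W$ to be finite, and the computation showing that the compensator of $W\star(\mu-\nu)$ absorbs exactly the right mass at each accessible jump time of $S$---is the delicate portion of the argument and is the reason this decomposition is nontrivially sharper than the general representation of purely discontinuous local martingales by integrals against $\mu-\nu$.
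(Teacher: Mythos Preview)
Your outline is sound and follows the standard route to this representation theorem: split $N=N^c+N^d$, project $N^c$ onto $S^c$ via Galtchouk--Kunita--Watanabe, then handle $N^d$ through the jump functional $U$ with $f=M^P_\mu(U\mid\widetilde{\cal P})$ and $g=U-f$, and finally build $W$ with the correction term $\widehat f/(1-a)$ to restore the local martingale property at accessible jump times of $S$. Your jump computation is correct and your identification of the delicate point (the bookkeeping at times with $a_t>0$) is exactly the right emphasis.

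However, you should be aware that the paper does not actually prove this theorem. Immediately before the statement, the authors write that ``for the following representation theorem, we refer to Jacod (1979) (Theorem 3.75, page 103) and to Jacod and Shiryaev (2003) (Lemma 4.24, page 185).'' The result is quoted, not established, and is used as a black box throughout the paper to supply the Jacod parameters $(\beta,f,g,N')$ of a local martingale. Your sketch is essentially the argument one finds in those references, so there is no discrepancy in approach; there is simply no in-paper proof to compare against.
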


The quadruplet $ (\beta, f, g, N') $ is called throughout the paper by Jacod's components/parameters of $N$ (under $P$).\\

For any probability measure $R$, which is equivalent to $P$, we denote by $(b^R,c,\nu^R)$ (where $\nu^R(dt,dx)=F_t^RdA_t$) the predictable characteristics of $S$ with respect to $R$. Remark that since $R\sim P$, we have $F^R\sim F$, $P\otimes A-$a.e. As a result, the sets ${\cal D}^+$  and ${\cal D}_1$ ---defined in (\ref{setD}) and (\ref{power-d1}) respectively--- do not depend on equivalent probabilities.
\begin{lemma}\label{DintD}
Suppose that $S$ is locally bounded. Let $R$ be a probability measure equivalent to $P$. Then, the interior of $\mathcal{D}^+$ satisfies
\begin{equation}\label{power-d1}
0\in int(\mathcal{D}^+)=\mathcal{D}_1:=\{\lambda\in\mathcal{D}^+:\ \exists\ \ \delta>0, 1+\lambda^{tr}x\geq\delta,\ \ F-a.e.\}.\end{equation}
\end{lemma}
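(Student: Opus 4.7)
The plan is to first reduce the statement to a pointwise claim in $(\omega,t)$ using local boundedness of $S$. Since $S$ is locally bounded, one finds a localizing sequence $(T_n)$ with $|S^{T_n}|\leq K_n$; then $|\Delta S|\leq 2K_n$ on $\Rbrack 0,T_n\Lbrack$, and hence $F_t(\omega,\{|x|>2K_n\})=0$ on this stochastic interval. So on each $\Rbrack 0,T_n\Lbrack$ the kernel $F_t(\omega,\cdot)$ has support contained in the closed ball $\{|x|\leq K\}$ with $K=K(\omega,t)$ (predictable and locally bounded), and it suffices to verify the two claims pointwise in $(\omega,t)$ with $F$ replaced by such a compactly supported kernel. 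Note also that since $R\sim P$ gives $F^R\sim F$, the definitions of $\mathcal{D}^+$ and $\mathcal{D}_1$ are insensitive to the choice of equivalent measure, so $R$ plays no active role.

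Next I would prove the inclusion $\mathcal{D}_1\subseteq \mathrm{int}(\mathcal{D}^+)$, which will simultaneously yield $0\in\mathrm{int}(\mathcal{D}^+)$ since $1+0^{tr}x=1\geq 1$ trivially places $0$ in $\mathcal{D}_1$. Given $\lambda_0\in\mathcal{D}_1$ with $1+\lambda_0^{tr}x\geq \delta$ $F$-a.e., any $\lambda$ with $|\lambda-\lambda_0|<\delta/(2K)$ satisfies, by Cauchy--Schwarz and the essential bound $|x|\leq K$,
\[
1+\lambda^{tr}x=1+\lambda_0^{tr}x+(\lambda-\lambda_0)^{tr}x\geq \delta-\tfrac{\delta}{2K}\cdot K=\tfrac{\delta}{2}>0,\qquad F\text{-a.e.},
\]
so $B(\lambda_0,\delta/(2K))\subset \mathcal{D}^+$, proving $\lambda_0\in\mathrm{int}(\mathcal{D}^+)$. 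This step truly uses the local boundedness of the jump support.

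For the reverse inclusion $\mathrm{int}(\mathcal{D}^+)\subseteq\mathcal{D}_1$, fix $\lambda_0\in\mathrm{int}(\mathcal{D}^+)$ and choose $r>0$ with $B(\lambda_0,r)\subset\mathcal{D}^+$. For every $0<r'<r$ and every $\eta$ in a countable dense subset $D$ of the closed unit ball, the point $\lambda_0+r'\eta\in\mathcal{D}^+$ provides an $F$-null set $N_\eta$ outside of which $1+\lambda_0^{tr}x>-r'\eta^{tr}x$. Setting $N=\bigcup_{\eta\in D}N_\eta$ (still $F$-null) and taking the supremum over $\eta\in D$ on $N^c$ yields
\[
1+\lambda_0^{tr}x\geq r'|x|,\qquad F\text{-a.e.}
\]
The main technical point here is precisely this countable-dense passage: it is the only place where one has to be careful with null sets, and it is the step I expect to be most delicate to write cleanly. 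To conclude, split on $|x|$ at the threshold $\varepsilon:=1/(2|\lambda_0|+1)$: if $|x|\geq\varepsilon$ then $1+\lambda_0^{tr}x\geq r'\varepsilon$, while if $|x|<\varepsilon$ then $1+\lambda_0^{tr}x\geq 1-|\lambda_0|\varepsilon\geq 1/2$. Hence $1+\lambda_0^{tr}x\geq \min(r'\varepsilon,1/2)>0$ $F$-a.e., which exhibits the uniform constant $\delta$ required to place $\lambda_0$ in $\mathcal{D}_1$, completing the proof.
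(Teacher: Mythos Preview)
Your proof is correct, and the inclusion $\mathcal{D}_1\subseteq\mathrm{int}(\mathcal{D}^+)$ is handled exactly as in the paper. The difference lies in the reverse inclusion $\mathrm{int}(\mathcal{D}^+)\subseteq\mathcal{D}_1$. You take a countable dense set of directions $\eta$, manage the union of null sets, deduce $1+\lambda_0^{tr}x\geq r'|x|$, and then split on the size of $|x|$. The paper instead uses a single-point scaling trick: for $\lambda_0\neq 0$ and $B(\lambda_0,\varepsilon)\subset\mathcal{D}^+$, set $\lambda:=\lambda_0\bigl(1+\tfrac{\varepsilon}{2|\lambda_0|}\bigr)\in B(\lambda_0,\varepsilon)$ and observe the algebraic identity
\[
1+\lambda_0^{tr}x=\frac{\varepsilon}{\varepsilon+2|\lambda_0|}+\frac{2|\lambda_0|}{\varepsilon+2|\lambda_0|}\bigl(1+\lambda^{tr}x\bigr)\ \geq\ \frac{\varepsilon}{\varepsilon+2|\lambda_0|},\qquad F\text{-a.e.},
\]
since $\lambda\in\mathcal{D}^+$. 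This gives the uniform $\delta$ in one line with only one $F$-null set to discard, and---unlike your argument and the other inclusion---does not use the boundedness of the jump support at all. Your route is perfectly valid and perhaps more ``geometric'', but the paper's convex-combination trick is shorter and sidesteps the null-set bookkeeping you flagged as delicate.
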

\begin{proof} By stopping, there is no loss of generality in assuming that $S$ is bounded by $K$. For any $\lambda_0\in int(\mathcal{D}^+)$, there exists $\varepsilon>0$ such that for any $B(\lambda_0,\epsilon)\subset{\cal D}^+$. Remark that
 $\lambda:=\lambda_0(1+{{\epsilon}\over{2\vert\lambda_0\vert}})\in B(\lambda_0,\epsilon)$, hence we have $1+\lambda^{tr} x>0,\ \ F^R-a.e.$ Therefore, we write
 $$
 1+\lambda_0^{tr} x={{\epsilon}\over{\epsilon+2\vert\lambda_0\vert}}+{{2\vert\lambda_0\vert(1+\lambda^{tr} x)}\over{\epsilon+2\vert\lambda_0\vert}}\geq {{\epsilon}\over{\epsilon+2\vert\lambda_0\vert}}>0,\ \ \ F^R-a.e$$
 This proves that $\lambda_0\in {\cal D}_1$ and hence we get $int({\cal D}^+)\subset {\cal D}_1$. In the remaining part fo the proof we will prove the reverse inclusion. Let $\lambda_0\in {\cal D}_1$ with uniform bound from below $\delta$. Then, it is easy to see that $B(\lambda_0,\delta/K)\subset{\cal D}^+$. This proves that $\lambda_0\in int({\cal D}^+)$, and the proof of the lemma follows immediately from noticing that $0\in {\cal D}_1$.
\end{proof}

\begin{lemma}\label{lem-interior-integrability}
Suppose $S$ is locally bounded and let $R$ be a probability measure equivalent to $P$. Then, the following assertions hold, $P\otimes A$-a.e.\\
(i) For any $\lambda\in int(\mathcal{D}^+)$,
\begin{equation}\label{interior-int}
\int|x|\big|(1+\lambda^{tr}x)^{1/(q-1)}-1\big|F^R(dx)<+\infty.
\end{equation}
(ii) The function $\Phi^R_p(\lambda)$ defined in (\ref{Phip}) is convex, proper, lower semi-continuous, differentiable on $int(\mathcal{D}^+)$, and
$$\nabla\Phi^R_p(\lambda_0)=b^R+\frac{c\lambda_0}{q-1}+\int \left[x(1+\lambda_0^{tr}x)^{1/(q-1)}-x\right]F^R(dx),\ \ \ \ \ \forall\ \ \  \lambda_0\in int(\mathcal{D}^+).$$
(iii) If $\displaystyle\min_{\lambda\in{\cal D}}\Phi_p^R(\lambda)=\Phi_p^R(\lambda_0)$ for $\lambda_0\in\mathbb R^d$, and {\bf Assumption \ref{crucialassumption}} with $M_t\equiv E\left({{dR}\over{dP}}\big|\ {\cal F}_t\right)$ holds, then $\lambda_0\in int({\cal D}^+)$.
\end{lemma}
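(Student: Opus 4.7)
The plan is to prove (i) and (ii) by standard convex-analytic arguments using the lower bound provided by Lemma \ref{DintD}, and to handle (iii) by a contradiction argument that plugs a carefully chosen scaled sequence into the convexity/optimality inequality in order to invoke Assumption \ref{crucialassumption}. Throughout, I would localize so that $S$ is bounded, whence the Lévy kernel $F$ has compact support ($|x|\leq K$, $F$-a.e.).

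For (i), I fix $\lambda\in int(\mathcal{D}^+)$ and use Lemma \ref{DintD} to select $\delta>0$ with $1+\lambda^{tr}x\geq\delta$, $F$-a.e. On $\{|x|\leq 1\}$, the mean value theorem combined with the bound $|(p-1)(1+\lambda^{tr}x)^{p-2}|\leq|p-1|\delta^{p-2}$ yields $|x|\,|(1+\lambda^{tr}x)^{p-1}-1|\leq C|x|^{2}$, which is $F^{R}$-integrable. On $\{|x|>1\}$, the bound $|(1+\lambda^{tr}x)^{p-1}|\leq\delta^{p-1}$ makes the integrand dominated by $(\delta^{p-1}+1)K$, and $F^{R}(\{|x|>1\})<\infty$. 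For (ii), convexity of $\Phi^R_p$ follows from $f_r''(y)=(1+y)^{r-2}>0$, which makes $f_p(\lambda^{tr}x)$ convex in $\lambda$; the linear and quadratic parts are convex. Properness is clear since $\Phi^R_p(0)=0$ and $f_p\geq 0$ forces $\Phi^R_p>-\infty$. Lower semicontinuity follows from Fatou's lemma (using $f_p\geq 0$) together with continuity of $f_p$ on $(-1,\infty)$ and of the linear/quadratic parts. Differentiability on $int(\mathcal{D}^+)$ follows from dominated convergence, with the uniform domination produced exactly as in (i) applied to a small ball around $\lambda_0\in int(\mathcal{D}^+)$.

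For (iii), which is the crux of the lemma, I argue by contradiction: suppose $P\otimes A(\{\lambda_0\notin int(\mathcal{D}^+)\})>0$. On $\{F^{R}(\mathbb{R}^d)=0\}$, we have $\mathcal{D}^+=\mathbb{R}^d$, so any violation must occur on $\Gamma:=\{F^{R}(\mathbb{R}^d)>0\}\cap\{\lambda_0\notin int(\mathcal{D}^+)\}$ with $P\otimes A(\Gamma)>0$. Define the predictable process $\lambda_n:=(1-1/n)\lambda_0$; since
\[
1+\lambda_n^{tr}x=\tfrac{1}{n}+\bigl(1-\tfrac{1}{n}\bigr)(1+\lambda_0^{tr}x)\geq \tfrac{1}{n},\ F\text{-a.e.},
\]
Lemma \ref{DintD} gives $\lambda_n\in int(\mathcal{D}^+)$. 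Convexity of $\Phi^R_p$ and minimality of $\lambda_0$ produce
\[
0\geq \Phi^R_p(\lambda_0)-\Phi^R_p(\lambda_n)\geq \nabla\Phi^R_p(\lambda_n)^{tr}(\lambda_0-\lambda_n)=\tfrac{1}{n}\nabla\Phi^R_p(\lambda_n)^{tr}\lambda_0,
\]
so $\nabla\Phi^R_p(\lambda_n)^{tr}\lambda_n=(1-1/n)\nabla\Phi^R_p(\lambda_n)^{tr}\lambda_0\leq 0$. Using the identity $y[(1+y)^{p-1}-1]=-K_p(y)$ together with the gradient formula from (ii), I compute
\[
(p-1)\nabla\Phi^R_p(\lambda_n)^{tr}\lambda_n=\lambda_n^{tr}b^{R}+(p-1)\lambda_n^{tr}c\lambda_n-\int K_p(\lambda_n^{tr}x)\,F^{R}(dx).
\]

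Since $M$ is the density process of $R$ with respect to $P$, Girsanov yields $F^{R}=F^{M}$, and Assumption \ref{crucialassumption} forces the last integral to diverge to $+\infty$ on $\Gamma$ as $n\to\infty$, while the first two terms stay bounded (as $\lambda_n\to\lambda_0$ with $b^{R},c$ finite). Because $p-1<0$, this forces $\nabla\Phi^R_p(\lambda_n)^{tr}\lambda_n\to+\infty$ on $\Gamma$, contradicting $\nabla\Phi^R_p(\lambda_n)^{tr}\lambda_n\leq 0$. Hence $P\otimes A(\Gamma)=0$, proving (iii). The main obstacle is assertion (iii): one must choose the approximating sequence so that it is simultaneously predictable, lies in $int(\mathcal{D}^+)$, converges to $\lambda_0$, and interacts correctly with the minimality inequality; the scaling $\lambda_n=(1-1/n)\lambda_0$ is the key trick that makes the convexity inequality collapse into a usable inner product against the gradient. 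Measurability of a predictable minimizer $\lambda_0$ should be handled by a standard measurable selection argument applied $(\omega,t)$-wise.
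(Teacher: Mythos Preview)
Your proof is correct and follows essentially the same route as the paper's: part (i) via Lemma \ref{DintD} and a mean-value/Taylor bound together with $\int |x|^{2}F^{R}(dx)<\infty$, part (ii) via Fatou and dominated convergence, and part (iii) via the identical scaling $\lambda_{n}=(1-1/n)\lambda_{0}\in int(\mathcal{D}^{+})$, the convexity inequality $\lambda_{n}^{tr}\nabla\Phi_{p}^{R}(\lambda_{n})\leq 0$, and the contradiction from Assumption \ref{crucialassumption}. The only cosmetic differences are your unnecessary split $\{|x|\leq 1\}\cup\{|x|>1\}$ in (i) (the paper's Taylor bound works on the full support at once) and your explicit identification $F^{R}=F^{M}$ via Girsanov, which the paper leaves implicit.
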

{\bf Proof.} The proof of this lemma will be achieved in three parts namely parts a), b) and c) where we will prove assertions (i), (ii) and (ii) respectively.\\
{\bf a)} For any $\lambda\in int(\mathcal{D}^+)$,
due to Lemma \ref{DintD}, there exists $\delta\in(0,1)$ such that $1+\lambda^{tr}x\geq\delta>0$, $F-a.e$. Then, an application of Taylor's expansion to $(1+\lambda^{tr}x)^{1/(q-1)}-1=(1+\lambda^{tr}x)^{p-1}-1$ leads to the existence of $ r\in(0,1)$ such that
$$
\vert(1+\lambda^{tr}x)^{p-1}-1\vert=(1-p)\vert\lambda^{tr}x\vert(1+r\lambda^{tr}x)^{p-2}\leq \delta^{p-2}(1-p)\vert\lambda^{tr} x\vert.
$$
By combining this  $\int |x|^2F^R(dx)<+\infty,\quad P\otimes A-a.s.$ (that follows from the fact that $S$ is locally bounded), we derive
$$
\int|x|\big|(1+\lambda^{tr}x)^{1/(q-1)}-1\big|F^R(dx)\leq \delta^{p-2}|\lambda|(1-p)\int |x|^2F^R(dx)<+\infty,\ \ \ \ P\otimes A-a.e.
$$ This proves assertion (i) of the lemma.\\
\noindent {\bf b)} It is obvious that $\Phi_p^R$ is convex, proper and semi-lower continuous due to Fatou's Lemma and $f_p\geq 0$. Then, the proof of assertion (ii) of the lemma will be completed once we prove the differentiability of $\Phi_p^R$ on $int({\cal D}^+)$. Let $\lambda_0\in int(\mathcal{D}^+)$. Then, for any $y\in\mathbb R^d$, thanks to assertion (i) proved in part a) and Lemma \ref{DintD}, there exists $\varepsilon_0>0$ such that for any $0\leq \varepsilon\leq\varepsilon_0$, $\lambda_0+\varepsilon y\in dom(\Phi^R_p)$. \\
An application of Taylor's expansion of the function $g_p(\lambda^{tr}x):=\frac{(1+\lambda^{tr}x)^p-1-p\lambda^{tr}x}{q(q-1)}$ implies the existence of $r\in(0,1) $ such that
$$k_{\varepsilon}(x):=\frac{g_p(\lambda_0^{tr}x+\varepsilon y^{tr}x)-g_p(\lambda_0^{tr}x)}{\varepsilon}=y^Tx\left((1+\lambda_0^Tx+r\varepsilon y^Tx)^{1/(q-1)}-1\right).$$
Meanwhile, notice that $(|k_{\varepsilon}(x)|)_{\varepsilon}$ is bounded from above by $$k(x):=|y||x|\max\Big(\big|(1+\lambda_0^{tr}x)^{1/(q-1)}-1\big|,\big|(1+\lambda_0^{tr}x+\varepsilon_0 y^{tr}x)^{1/(q-1)}-1\big|\Big).$$
Thanks to Lemma \ref{lem-interior-integrability}--(i), $k(x)$ is $F$-integrable due to $\lambda_0,\lambda_0+\varepsilon_0 y \in dom(\Phi^R_p)$.
It allows us to apply Dominated Convergence Theorem to $(|k_{\varepsilon}(x)|)_{\varepsilon}$, which leads to
\begin{equation}\label{differnetial}\lim_{\varepsilon\rightarrow 0}\frac{\Phi^R_p(\lambda_0+\varepsilon y)-\Phi^R_p(\lambda_0)}{\varepsilon}=y^{tr} \Phi^*(\lambda_0),\end{equation}
where $\Phi^*(\lambda_0)$ is given by
$$
\Phi^*(\lambda_0):=b^R+\frac{c\lambda_0}{q-1}+\int \left[x(1+\lambda_0^{tr}x)^{1/(q-1)}-x\right]F^R(dx).$$
It is clear from (\ref{differnetial}) that $y^{tr}\Phi^*(\lambda_0)$ is the directional derivative of $\Phi^R_p$ at $\lambda_0$, which is linear in $y$. Thus, due to Theorem 25.2 in Rockafellar (1970), $\Phi^R_p$ is differentiable at $\lambda_0\in int(\mathcal{D}^+)$.\\
{\bf c)} To prove assertion (iii),  we start by noticing that $\lambda_0$ belongs ${\cal D}^+$ ---since ${\cal D}^+\subset dom(\Phi_p^R)$---, and $\lambda_n:={{n-1}\over{n}}\lambda_0\in int({\cal D}^+)={\cal D}_1$. Thus, a direct application of assertion (ii) to each $\lambda_n$, we deduce that $\Phi_p^R$ is differentiable at $\lambda_n$, and due to the convexity of $\Phi_p^R$, we get
$$
\lambda_n^{tr} \nabla\Phi^R_p(\lambda_n)\leq (n-1)\left[\Phi_p^R(\lambda_0)-\Phi^R_p(\lambda_n)\right]\leq 0.$$
This implies that
$$
\int \lambda_n^{tr} x{{(1+\lambda_n^{tr} x)^{p-1}-1}\over{p-1}}F^M(dx)\leq -\lambda_n^{tr} b+(1-p)\lambda_n^{tr} c\lambda_n.$$ Thus a combination of this with assumption \ref{crucialassumption} imply that $\lambda_0\in int({\cal D}^+)$ (otherwise, by using Fatou's lemma in the above inequality, we will obtain $+\infty\leq \lambda_0^{tr}b+(1-p)\lambda_0^{tr}c\lambda_0$ which is impossible). This completes the proof of the lemma.
\qed
%

\begin{proposition}\label{theta-Doobdecomposition}
Suppose that $S$ is locally bounded with the following decomposition
$$
S=S_0+S^c+z\star(\mu-\nu)+b\cdot A.$$
Let $\theta$ be an $S$-integrable process, and $\alpha\in (0,+\infty)$. Then the following assertions hold.\\
(i) The process
\begin{equation}\label{Xtheta}
X^{\theta}:=\theta\cdot S-\sum \theta^{tr}\Delta S I_{\{\vert \theta^{tr} \Delta S\vert>\alpha\}},\end{equation}
is a locally bounded semi-martingale.\\
(ii) If we denote
$$\xi^{\theta}:=\theta^{tr} b-\int (\theta^T z) I_{\{\vert \theta^{tr} z\vert>\alpha\}}F(dz),$$
then $\vert \xi^{\theta}\vert\cdot A\in {\cal A}^+_{loc}$.\\
(iii) The process
$
X^{\theta}-\xi^{\theta}\cdot A,$
is a local martingale.
\end{proposition}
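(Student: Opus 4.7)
The plan hinges on the observation that $X^{\theta}$ has jumps uniformly bounded by $\alpha$, from which all three assertions will flow.

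For assertion (i), since every c\`adl\`ag path has at most finitely many jumps of absolute value exceeding $\alpha$ on each compact interval, $V:=\sum\theta^{tr}\Delta S\,I_{\{|\theta^{tr}\Delta S|>\alpha\}}$ is a well-defined c\`adl\`ag pure-jump process of finite variation on compacts. Hence $X^{\theta}=\theta\cdot S-V$ is a semimartingale whose jumps $\Delta X^{\theta}=\theta^{tr}\Delta S\,I_{\{|\theta^{tr}\Delta S|\leq\alpha\}}$ satisfy $|\Delta X^{\theta}|\leq\alpha$; stopping at $T_{n}:=\inf\{t:|X^{\theta}_{t}|\geq n\}\wedge n$ then yields $|X^{\theta}_{\cdot\wedge T_{n}}|\leq n+\alpha$, which establishes local boundedness.

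For (iii) I would exploit that a semimartingale with bounded jumps is \emph{special}, and hence admits a unique canonical decomposition $X^{\theta}=M^{\theta}+A^{\theta}$ with $M^{\theta}$ a local martingale vanishing at $0$ and $A^{\theta}$ predictable of locally integrable variation. The task reduces to identifying $A^{\theta}=\xi^{\theta}\cdot A$. Setting $K(\omega,s,z):=I_{\{|\theta_{s}(\omega)^{tr}z|\leq\alpha\}}$, so that $V=(\theta^{tr}z)(1-K)\star\mu$, and using the canonical representation (\ref{modelS}) of the locally bounded $S$, a direct rearrangement gives
\begin{equation*}
X^{\theta}\;=\;\theta\cdot S^{c}\;+\;(\theta^{tr}z)K\star(\mu-\nu)\;+\;\xi^{\theta}\cdot A,
\end{equation*}
with the drift collapsing through $\theta\cdot B-(\theta^{tr}z)(1-K)\star\nu=\xi^{\theta}\cdot A$. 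Since the first two terms form a local martingale and the third is predictable of finite variation, the uniqueness of the canonical decomposition of the special semimartingale $X^{\theta}$ forces $A^{\theta}=\xi^{\theta}\cdot A$, which is (iii). Assertion (ii) is then immediate, for $A^{\theta}=\xi^{\theta}\cdot A$ is the predictable drift of a special semimartingale, whence $|\xi^{\theta}|\cdot A=|A^{\theta}|\in{\cal A}^{+}_{loc}$.

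The main obstacle is verifying that $(\theta^{tr}z)K\in{\cal G}^{1}_{loc}(\mu)$, so that $(\theta^{tr}z)K\star(\mu-\nu)$ is a bona fide local martingale. The bound $|(\theta^{tr}z)K|\leq\alpha$ is not enough on its own: one must also confirm $((\theta^{tr}z)K)^{2}\star\nu\in{\cal A}^{+}_{loc}$. I would split on $\{|\theta^{tr}z|\leq 1\}$ and $\{1<|\theta^{tr}z|\leq\alpha\}$: on the first region $\bigl((\theta^{tr}z)^{2}\wedge 1\bigr)\star\nu\in{\cal A}^{+}_{loc}$ follows from $\theta\in L(S)$ together with the local boundedness of $S$; on the second one has $((\theta^{tr}z)K)^{2}\leq\alpha^{2}I_{\{|\theta^{tr}z|>1\}}$, and $I_{\{|\theta^{tr}z|>1\}}\star\mu$ is nondecreasing with locally finitely many jumps, so it belongs to ${\cal A}^{+}_{loc}$ and its compensator inherits the property. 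Once this technical point is in place, the identification of the canonical decomposition goes through and the entire proposition is established.
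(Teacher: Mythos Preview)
Your overall strategy---using that $X^{\theta}$ has jumps bounded by $\alpha$, hence is special, and then identifying its canonical drift---is sound, and your verification that $(\theta^{tr}z)K\in\mathcal{G}^{1}_{loc}(\mu)$ is correct and is indeed the key technical point. However, the step you label ``a direct rearrangement'' hides a genuine difficulty: for an unbounded $\theta\in L(S)$ you cannot split $\theta\cdot S$ term by term along the canonical representation of $S$, because none of $(\theta^{tr}z)\star(\mu-\nu)$, $(\theta^{tr}b)\cdot A$, or $(\theta^{tr}z)(1-K)\star\nu$ need be individually well-defined. Integrability in $L(S)$ does not guarantee integrability against each summand of a semimartingale decomposition of $S$; so writing ``$\theta\cdot B-(\theta^{tr}z)(1-K)\star\nu=\xi^{\theta}\cdot A$'' presupposes exactly the finiteness that assertion (ii) is meant to establish.

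The paper closes this gap by truncating on $\{|\theta|\leq n\}$: for bounded $\theta$ all the rearrangements you perform are legitimate, and one computes the drift of $I_{\{|\theta|\leq n\}}\cdot X^{\theta}$ to be $I_{\{|\theta|\leq n\}}\xi^{\theta}\cdot A$. Since $X^{\theta}$ is special its drift $\widetilde X^{\theta}$ exists with locally integrable variation, and $I_{\{|\theta|\leq n\}}\cdot\widetilde X^{\theta}=I_{\{|\theta|\leq n\}}\xi^{\theta}\cdot A$ for every $n$; letting $n\to\infty$ then identifies $\widetilde X^{\theta}=\xi^{\theta}\cdot A$ and yields $|\xi^{\theta}|\cdot A=\mathrm{Var}(\widetilde X^{\theta})\in\mathcal{A}^{+}_{loc}$. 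Your argument becomes complete once you insert precisely this truncation in place of the ``direct rearrangement''; without it the identification of the drift as $\xi^{\theta}\cdot A$ is unjustified.
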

{\bf Proof.} The proof of assertion (i) is classic, and can be found in Dellacherie-Meyer(1980) or Jacod and Shiryaev (2003). Now, we will focus on proving simultaneously the remaining assertions.\\
Since $S$ is locally bounded, then it is clear that $\theta I_{\{\vert \theta\vert\leq n\}}\cdot S$  and $I_{\{\vert \theta\vert\leq n\}}\cdot X^{\theta}$ are locally bounded semimartingales. Therefore, $\sum \theta^{tr}\Delta S I_{\{\vert \theta^{tr} \Delta S\vert>\alpha,\ \vert \theta\vert\leq n\}}$ is a locally bounded process with finite variation, and its compensator is given by
$$
V^{\theta,n}:=(\theta^{tr} z) I_{\{\vert \theta^{tr} z\vert>\alpha,\ \vert \theta\vert\leq n\}}\star\nu.
$$
It is obvious that the two processes
$$
 \theta^T I_{\{\vert \theta\vert\leq n\}}\cdot S-\theta^{tr} b I_{\{\vert \theta\vert\leq n\}}\cdot A,\ \ \mbox{and}\ \
 \sum \theta^{tr}\Delta S I_{\{\vert \theta^{tr}\Delta S\vert>\alpha,\ \vert \theta\vert\leq n\}}-V^{\theta,n}$$
are local martingales. Since $\widetilde X^{\theta}$ ---the compensator of $X^{\theta}$--- exists and is a locally integrable process, then we derive
$$
Var\left({\widetilde X}^{\theta}\right)=\lim_{n} Var\left(I_{\{ \vert\theta\vert\leq n\}}\cdot {\widetilde X}^{\theta}\right)=\vert \xi^{\theta}\vert\cdot A.$$ this proves both assertions (ii) and (iii). Here, for any process with finite variation, $V$, we denote its variation by $Var(V)$. \qed

\begin{proposition}\label{theta-supermatingale}
Let $p\in (-\infty,0)\cup(0,1)$, $\widetilde{Z}$ be a martingale density, and $\widehat{\theta}\in \Theta(1,U_p)$ such that $\widehat{Z}:=\widetilde{Z}{\cal E} (\widehat{\theta}\cdot S)$ is a true martingale. If we denote $\widehat{Q}:=\widehat{Z}_{T}\cdot P$ and consider $\theta\in\Theta(1, U_p)$ satisfying
\begin{equation}\label{p<0-integrability}
\sup_{\tau\in{\cal T}_{T}}E^{\widehat{Q}}\left({\cal E}_{\tau}(\theta\cdot S)^{p}{\cal E}_{\tau}(\widehat{\theta}\cdot S)^{-p}\right)<+\infty,
\end{equation}
then $\displaystyle\mathrm{sign}(p){\cal E}(\theta\cdot S)^p{\cal E}(\widehat{\theta}\cdot S)^{-p}$ is a $\widehat{Q}$-supermartingale.
\end{proposition}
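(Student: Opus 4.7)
The natural reduction is to the positive ratio process
$$R_t := \frac{{\cal E}_t(\theta\cdot S)}{{\cal E}_t(\widehat{\theta}\cdot S)},$$
which is well defined since both Dol\'eans exponentials are positive by admissibility, so the target process becomes $\mathrm{sign}(p)R^p$. The first step is to show that $R$ is a positive $\widehat{Q}$-local martingale. Since $\widetilde{Z}$ is a martingale density for $S$, both $\widetilde{Z}{\cal E}(\theta\cdot S)$ and $\widetilde{Z}{\cal E}(\widehat{\theta}\cdot S)=\widehat{Z}$ are $P$-local martingales, the latter being by hypothesis a true $P$-martingale. From the identity $R\widehat{Z}=\widetilde{Z}{\cal E}(\theta\cdot S)$, Bayes' formula for the change of measure $\widehat{Q}=(\widehat{Z}_T/\widehat{Z}_0)\cdot P$ immediately yields $R$ as a positive $\widehat{Q}$-local martingale.

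Next I would apply It\^o's formula to $R^p$ with $\phi(y)=y^p$, $y>0$. After peeling off the $\widehat{Q}$-local martingale part, the remainder reads
$$\frac{p(p-1)}{2}R_{-}^{p-2}\cdot [R]^c+\sum\Bigl\{R^p-R_{-}^p-pR_{-}^{p-1}\Delta R\Bigr\}.$$
For $p\in(0,1)$, concavity of $\phi$ makes both terms non-positive, so $R^p$ is a $\widehat{Q}$-local supermartingale. For $p<0$, convexity of $\phi$ on $(0,\infty)$ makes both terms non-negative, so $R^p$ is a $\widehat{Q}$-local submartingale, equivalently $-R^p=\mathrm{sign}(p)R^p$ is a $\widehat{Q}$-local supermartingale. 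In either case, $\mathrm{sign}(p)R^p$ is a $\widehat{Q}$-local supermartingale.

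The last step is to upgrade this local supermartingale to a true one. When $p\in(0,1)$, $R^p\geq 0$ is a non-negative $\widehat{Q}$-local supermartingale, hence automatically a true $\widehat{Q}$-supermartingale by the standard Fatou argument; hypothesis (\ref{p<0-integrability}) is not even needed in this case. When $p<0$, $\mathrm{sign}(p)R^p=-R^p\leq 0$ is a non-positive $\widehat{Q}$-local supermartingale, and (\ref{p<0-integrability}) reads $\sup_{\tau\in{\cal T}_T}E^{\widehat{Q}}|R^p_\tau|<+\infty$. Picking a localizing sequence $\tau_n\uparrow T$ stationary for which each $(R^p)^{\tau_n}$ is a true $\widehat{Q}$-submartingale and writing its Doob--Meyer decomposition $(R^p)^{\tau_n}=M^n+A^n$, one finds $E^{\widehat{Q}}A^n_T=E^{\widehat{Q}}R^p_{\tau_n}-R^p_0$, uniformly bounded in $n$ by the hypothesis. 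A monotone passage to the limit then produces an integrable predictable increasing compensator $A$ on $[0,T]$ and promotes $R^p$ to a genuine $\widehat{Q}$-submartingale, so $-R^p$ is a true $\widehat{Q}$-supermartingale.

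The principal obstacle is this last step for $p<0$: since $R^p$ is unbounded above and only $L^1(\widehat{Q})$-bounded uniformly over stopping times, dominated convergence is unavailable and one must extract the Doob--Meyer compensators uniformly in the localization. The uniform bound supplied by hypothesis (\ref{p<0-integrability}) is precisely what makes this extraction possible, and hence what permits the upgrade from a local to a true supermartingale.
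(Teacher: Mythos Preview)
Your argument for $p\in(0,1)$ is fine. The gap is in the last step for $p<0$. You correctly produce an integrable predictable increasing process $A$ as the limit of the compensators $A^n$, but this does \emph{not} ``promote $R^p$ to a genuine $\widehat Q$-submartingale''. Writing $M:=R^p-A$, all you know is that $M^{\tau_n}=M^n$ is a $\widehat Q$-martingale for each $n$, i.e.\ $M$ is a $\widehat Q$-local martingale. Since $M^-=(R^p-A)^-\le A_T\in L^1(\widehat Q)$, the family $(M_\tau^-)_\tau$ is uniformly integrable, so $M$ is a $\widehat Q$-\emph{super}martingale---the wrong inequality. In general a non-negative local submartingale that is merely $L^1$-bounded over all stopping times need not be a true submartingale: the inverse three-dimensional Bessel process is a non-negative local martingale (hence a local submartingale) with $\sup_\tau E[X_\tau]\le X_0$, yet it is a strict supermartingale. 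Your hypothesis~(\ref{p<0-integrability}) gives only $L^1$-boundedness of $R^p$, not uniform integrability, so the passage to the limit fails.

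The paper circumvents this with a ``half-power'' trick. One first shows (by conditional Jensen, using convexity of $x\mapsto x^{p/2}$ for $p<0$ and the $\widehat Q$-supermartingale property of $R$) that $R^{p/2}$ is a non-negative $\widehat Q$-local submartingale. Now the hypothesis~(\ref{p<0-integrability}) reads $\sup_\tau E^{\widehat Q}\bigl[(R_\tau^{p/2})^2\bigr]<\infty$, which by de la Vall\'ee--Poussin makes $(R_\tau^{p/2})_\tau$ uniformly integrable under $\widehat Q$; hence $R^{p/2}$ is a true $\widehat Q$-submartingale. A second application of conditional Jensen (with the convex map $y\mapsto y^2$) then gives that $R^p=(R^{p/2})^2$ is a $\widehat Q$-submartingale, i.e.\ $-R^p$ is a $\widehat Q$-supermartingale. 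The point is that the integrability assumption is exactly an $L^2$-bound for $R^{p/2}$, and it is this extra moment---not just $L^1$-boundedness---that closes the argument.
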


{\bf Proof.} Notice that the case of $p\in (0,1)$, the proposition is trivial and (\ref{p<0-integrability}) is always true.\\ In the remaining part of the proof we assume that $p<0$ and we consider $(T_{n})_{n\geq 1}$ a sequence of stopping times that increases stationarily to $T$ such that $\widetilde{Z}^{T_{n}}$ is a true martingale. Therefore, since $\widetilde Z{\cal E}(\theta\cdot S)$ is a supermartingale, by putting $\widetilde{Q}_{n}:=\widetilde{Z}_{T_{n}}\cdot P$ and using Jensen's inequality we derive
$$\begin{array}{llll}
\displaystyle E^{\widehat{Q}}\left[\left({\cal E}_{t\wedge T_{n}}(\theta\cdot S)/{\cal E}_{t\wedge T_{n}}(\widehat{\theta}\cdot S)\right)^{\displaystyle{{p}\over{2}}}\big|\mathcal{F}_{s}\right]\geq \left(E^{\widehat{Q}}\left[{\cal E}_{t\wedge T_{n}}(\theta\cdot S)/{\cal E}_{t\wedge T_{n}}(\widehat{\theta}\cdot S)\big |\mathcal{F}_{s}\right]\right)^{\displaystyle{{p}\over{2}}}\\
\hskip 6.6cm =\displaystyle \left(\frac{E^{\widetilde{Q}_{n}}\left({\cal E}_{t\wedge T_{n}}(\theta\cdot S)|\mathcal{F}_{s}\right)}{{\cal E}_{s\wedge T_{n}}(\widehat{\theta}\cdot S)}\right)^{\displaystyle{{p}\over{2}}}\geq \left( \frac{{\cal E}_{s\wedge T_{n}}(\theta\cdot S)}{{\cal E}_{s\wedge T_{n}}(\widehat{\theta}\cdot S)}\right)^{\displaystyle{{p}\over{2}}},\end{array}$$
for $0\leq s<t\leq T$. This proves that $\left({\cal E}(\theta\cdot S)/{\cal E}(\widehat{\theta}\cdot S)\right)^{p/2}$ is a nonnegative $\widehat{Q}$-local submartingale. Then, due to (\ref{p<0-integrability}) and de la Vall\'ee Poussin's argument, we deduce that this process is a true $\widehat{Q}$-submartingale. Again, an application of Jensen's inequality leads to,
$$
E^{\widehat{Q}}\left[\left({\cal E}_{t}(\theta\cdot S)/{\cal E}_{t}(\widehat{\theta}\cdot S)\right)^{p}\big|\mathcal{F}_{s}\right]\geq \left(E^{\widehat{Q}}\left[\left({\cal E}_{t}(\theta\cdot S)/{\cal E}_{t}(\widehat{\theta}\cdot S)\right)^{p/2}\big|\mathcal{F}_{s}\right]\right)^{2}\geq \left({\cal E}_{s}(\theta\cdot S)/{\cal E}_{s}(\widehat{\theta}\cdot S)\right)^{p}.
$$
Hence, $-\displaystyle\left({\cal E}(\theta\cdot S)/{\cal E}(\widehat{\theta}\cdot S)\right)^{p}$ is a $\widehat Q$-supermartingale and this ends the proof. \qed\\

In the following, we discuss some useful properties of the minimal Hellinger martingale densities (MHM-densities hereafter) under local change of probabilities.

\section{MHM-densities under Local Change of Measure}


In this last part of the Appendix, we will extent some properties of the minimal Hellinger martingale density to the case where one is facing a local change of probability. This extension is claimed by our parametrization approach of Sections \ref{powercase} and \ref{subsection:logcase}. Through out the rest of the paper, we consider a real number, $p\in{{\it \hbox{I\kern-.2em\hbox{R}}}}$, $p\not=1$, $q={{p}\over{p-1}}$, and a positive local martingale, $Z$, given
by \begin{equation}\label{defLocalDensity}Z:={\cal
E}(N),\,\,N:=\beta\cdot
S^{c}+W\star(\mu-\nu)+g\star\mu+\overline{N},\,\,W_{t}(x):=f_{t}(x)-1+\frac{\widehat{f_{t}}-a_t}{1-a_{t}}I_{\{a_{t}<1\}}.\end{equation}
Here $\left(\beta,f,g,\overline{N}\right)$ are the Jacod's
components of $N$. Throughout this section, we will frequently use the set of martingale density with respect to the density $Z$ defined by
\begin{equation}\label{setLocalMartingaleDensity}
\mathcal{Z}^{e}_{q,loc}(S,Z):=\Bigl\{\overline{Z}\ \ \ \mbox{process}\ \ \Big|\ \ \ \overline{Z}>0,\ \ \ \ \ \ \ \ \ \ \ \overline{Z}Z\in{\cal
Z}^{e}_{q,loc}(S)\Bigr\}, \end{equation} where ${\cal Z}^{e}_{q,loc}(S)$ is given
by (\ref{setofdernsities})-(\ref{sigmadensities}). Then, the minimal Hellinger martingale density of order $q$ with respect to $Z$ is given by the following.

Under $Z$ (the local change of measure), the process $S$ posses the following predictable characteristics and continuous local martingale part $S^{c,Z}$, $b^{Z}$, $a^{Z}$, $\nu^{Z}$ and $F^{Z}$ given by
\begin{equation}\label{charateristicDenisty}\begin{array}{llll}
S^{c,Z}:=S^{c}-c\beta\cdot A,\ b^{Z}:=b+c\beta+\displaystyle\int f(x)h(x)F(d x),\ a^{Z}_{t}:=\nu^{Z}(\{t\},\hbox{I\kern-.18em\hbox{R}}^{d}),\\
\\
\nu^{Z}(d t,d x):=F^{Z}_t(d x)d A_{t},\ \ F^{Z}_{t}(d x):=(1+f_{t}(x))F_{t}(d x).\end{array}\end{equation}

\begin{proposition}\label{lem-represofMHE-changemeasure}
Let $p\in (-\infty,1)$ and $q$ its conjugate number (i.e $ q:=p/(p-1)$). Consider a positive local martingale, $Z$, and $\widetilde Z\in{\cal
Z}^e_{q,loc}(S,Z)$. Suppose that there exists $\widetilde\beta\in L(S)$ such that $P\otimes A-a.e.$ $\widetilde\beta\in {\cal D}^+$ , root of
\begin{equation}\label{martingaleunderZ}
0\displaystyle=b^{Z}+(p-1)c\lambda+\int\Bigl[(1+\lambda^{tr}x)^{p-1}-1\Bigr]xF^{Z}(d x),\end{equation}
and
\begin{equation}\label{IntegrabilityofWunderZ}
\begin{array}{llll}
W\in {\cal G}^1_{loc}(\mu, Z),\ \ \ \ \ \displaystyle W_t(x):=\displaystyle{{(1+\widetilde\beta^{tr}_t x)^{p-1}-1}\over{1-a+\int u_t(x)\nu^Z(\{t\},dx)}},\\
\\
\mbox{and}\ \ \ \widetilde Z:={\cal E}(\widetilde N),\ \ \ \ \widetilde N:=(p-1)\widetilde\beta\cdot S^{c,Z}+W\star\left(\mu-\nu^Z\right).\end{array}\end{equation}
Then, the following hold.
\begin{equation}\label{ZtildeZ}\begin{array}{lll}
\widetilde Z^{q-1}={\cal E}\left(\widetilde H^{Z}\cdot
S+q(q-1)h^{(q)}(\widetilde Z, Z)\right)={\cal E}\left(\widetilde\beta\cdot
S\right){\cal E}\left(q(q-1)h^{(q)}(\widetilde Z, Z)\right).
\end{array}\end{equation}
Here, $\widetilde H^{Z}:=(\gamma^{Z})^{1-q}\widetilde\beta$ and $\gamma^{Z}:=1-a^{Z}_{t}+\int(1+\widetilde{\beta}^{tr}y)^{p-1}\nu^{Z}(\{t\},dy)$ satisfies
\begin{equation}\label{HZtilde}
\gamma^{Z}_{t}=\left(1+q(q-1)\Delta h^{(q)}(\widetilde Z, Z)\right)^{1-p},\end{equation}
and on $\{\Delta A=0\}$ we have
\begin{equation}\label{HellingerAc}
q(q-1)d h^{(q)}(\widetilde Z, Z)={{p}\over{2}}\widetilde{\beta}^{tr} c\widetilde{\beta}dA+\int\left[ (1+\widetilde\beta^{tr} x )^p-1-q(1+\widetilde{\beta}_t^{tr}x)^{p-1}-1)\right]F^Z(dx)d A.\end{equation}
\end{proposition}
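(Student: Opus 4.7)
The plan is to apply the Dol\'eans exponential power identity to $\widetilde{Z}^{q-1} = \mathcal{E}(\widetilde{N})^{q-1}$. Recall that for any semimartingale $N$ with $1+\Delta N>0$ and any $r\in \mathbb{R}$,
\[
\mathcal{E}(N)^{r} = \mathcal{E}\!\left(r N + \tfrac{r(r-1)}{2}\langle N^{c}\rangle + \sum_{s\leq \cdot}\bigl[(1+\Delta N_{s})^{r}-1-r\Delta N_{s}\bigr]\right).
\]
Applied with $r=q-1$ and $N=\widetilde{N}$, this reduces the proof of \eqref{ZtildeZ} to identifying the semimartingale inside the right-hand Dol\'eans exponential with $\widetilde{H}^{Z}\cdot S + q(q-1) h^{(q)}(\widetilde{Z}, Z)$. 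The first ingredient I would establish is the explicit form of the jumps of $\widetilde{N}$: using the explicit definition of $W$ together with the identity $\int W_{t}(x)\nu^{Z}(\{t\},dx) = (\gamma^{Z}_{t}-1)/\gamma^{Z}_{t}$, which follows immediately from the definition of $\gamma^{Z}$, one obtains $1+\Delta\widetilde{N}_{t} = (1+\widetilde{\beta}^{tr}_{t}\Delta S_{t})^{p-1}/\gamma^{Z}_{t}$ on $\{\Delta S_{t}\neq 0\}$ and $1/\gamma^{Z}_{t}$ on $\{\Delta S_{t}=0\}$. Since $(q-1)(p-1)=1$, this gives the clean formula $(1+\Delta\widetilde{N}_{t})^{q-1} = (1+\widetilde{\beta}^{tr}_{t}\Delta S_{t})/(\gamma^{Z}_{t})^{q-1}$ in either case.

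Next, I would split the exponent produced by the identity into three pieces: the continuous $Z$-local martingale part $\widetilde{\beta}\cdot S^{c,Z}$ (using $(q-1)(p-1)=1$), the purely discontinuous $Z$-local martingale part obtained by compensating the jump sum against $\nu^{Z}$, and the $Z$-predictable finite variation remainder. A careful computation based on the jump formula above shows that the purely discontinuous martingale component equals $\widetilde{H}^{Z, tr}x\star(\mu-\nu^{Z})$, so that together with the continuous martingale piece the total local martingale component is $\widetilde{H}^{Z}\cdot(S-b^{Z}\cdot A)$. The predictable finite variation remainder collects $\tfrac{(q-1)(q-2)}{2}\langle \widetilde{N}^{c}\rangle$ together with the $Z$-compensator of $\sum[(1+\Delta\widetilde{N})^{q-1}-1-(q-1)\Delta\widetilde{N}]$; substituting the formula for $(1+\Delta\widetilde{N})^{q-1}$ and invoking the martingale equation \eqref{martingaleunderZ} to cancel the $b^{Z}$-type terms, this remainder simplifies to $q(q-1) h^{(q)}(\widetilde{Z},Z) + \widetilde{H}^{Z, tr} b^{Z}\cdot A$. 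Adding it back to the local martingale component produces exactly $\widetilde{H}^{Z}\cdot S + q(q-1) h^{(q)}(\widetilde{Z},Z)$, proving the first equality in \eqref{ZtildeZ}.

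From the same computation, evaluating the $Z$-compensator at a predictable jump time $t$ with $\Delta A_{t}>0$ (where $c\Delta A=0$) yields $q(q-1)\Delta h^{(q)}_{t}(\widetilde{Z},Z) = (\gamma^{Z}_{t})^{1-q}-1$, which rearranges into \eqref{HZtilde} using $1/(1-q)=1-p$. The second equality of \eqref{ZtildeZ} is then a direct application of Yor's formula $\mathcal{E}(A)\mathcal{E}(B) = \mathcal{E}(A+B+[A,B])$ with $A=\widetilde{\beta}\cdot S$ and $B=q(q-1) h^{(q)}(\widetilde{Z},Z)$: the bracket $[A,B]$ reduces to the pure jump process $\sum q(q-1)\widetilde{\beta}^{tr}\Delta S\,\Delta h^{(q)}_{\cdot} = \sum ((\gamma^{Z})^{1-q}-1)\widetilde{\beta}^{tr}\Delta S$ thanks to \eqref{HZtilde}, which equals the difference $\widetilde{H}^{Z}\cdot S - \widetilde{\beta}\cdot S$ since $\widetilde{H}^{Z}-\widetilde{\beta}$ vanishes outside predictable jump times. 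Finally, \eqref{HellingerAc} is obtained by restricting the $Z$-compensator formula of $h^{(q)}(\widetilde{Z},Z)$ to $\{\Delta A=0\}$, where $\gamma^{Z}=1$ so $(1+\Delta\widetilde{N})^{q-1}=1+\widetilde{\beta}^{tr}x$ and the continuous density becomes a combination of $\widetilde{\beta}^{tr}c\widetilde{\beta}$ and $\int f_{q}((1+\widetilde{\beta}^{tr}x)^{p-1}-1) F^{Z}(dx)$, which collapses to the stated expression by the algebraic definition of $f_{q}$. The main obstacle is the bookkeeping required to pass consistently between compensators under $P$ and $Z$ (since $\nu^{Z}=(1+f)\nu$) and to verify that the $b^{Z}$-drift terms generated by the Dol\'eans identity cancel exactly against the $F^{Z}$-integrals produced by the jump compensator, a cancellation that relies crucially on \eqref{martingaleunderZ}.
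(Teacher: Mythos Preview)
Your argument is correct and complete in outline, but it proceeds along a genuinely different route from the paper's own proof. The paper does not carry out the Dol\'eans--Dade power identity directly; instead it first localizes so that $Z$ becomes a true martingale, passes to the equivalent probability $Q:=(Z_T/Z_0)\cdot P$, observes that $(b^Z,c,F^Z,\nu^Z)$ then become the bona fide predictable characteristics of $S$ under $Q$, and recognizes $\widetilde Z$ as the minimal Hellinger martingale density of order $q$ under $Q$. At that point the first equality in \eqref{ZtildeZ} is simply quoted from Corollary~4.7 of Choulli, Stricker and Li (2007). The paper then derives \eqref{HZtilde} and \eqref{HellingerAc} by writing out $q(q-1)h^{(q)}(\widetilde Z,Q)$ explicitly from the definition of the Hellinger process and reading off the jump and absolutely continuous parts, and obtains the second equality in \eqref{ZtildeZ} from Yor's formula together with \eqref{HZtilde}, exactly as you do.

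What your approach buys is self-containment: you reprove from scratch, via the power identity $\mathcal E(N)^{r}=\mathcal E\bigl(rN+\tfrac{r(r-1)}{2}\langle N^c\rangle+\sum[(1+\Delta N)^r-1-r\Delta N]\bigr)$ and a careful martingale/finite-variation split under the ``local measure'' $Z$, precisely the structural identity that the paper imports from an earlier article. What the paper's approach buys is brevity and a clear conceptual reduction: once one sees that everything is stable under localization, the statement becomes an instance of an already-established result about MHM densities under a genuine equivalent measure, and only \eqref{HZtilde}--\eqref{HellingerAc} require fresh computation. Your remark at the end about the bookkeeping between $P$- and $Z$-compensators (the factor $(1+f)$ in $\nu^Z=(1+f)\nu$, and the corresponding factor $(1+\Delta N)$ in Definition~\ref{hellinger-changemeasure}(ii)) is exactly the point the paper sidesteps by localizing to an honest change of measure.
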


{\bf Proof.} Since both the assumptions and the results of the proposition are stable under localization. In other words, if there exists a sequence of stopping time $(T_n)_n$ that increases to infinity such that the proposition is valid on $\Rbrack 0,T_n\Lbrack$ for each $n\geq 1$, then it will be valid globally. Therefore, there is no loss of generality in assuming that $Z$ is a martingale, and put $Q:=Z_T/Z_0\cdot P\sim P$. Then, it is easy to see that $(b^Z,c,F^Z)$ coincide with the predictable characteristics of $S$ under $Q$, $\nu^Z$ coincides with the $Q$-compensator of $\mu$, and the equations (\ref{martingaleunderZ})--(\ref{IntegrabilityofWunderZ}) translate into the fact that $\widetilde Z$ is the minimal Hellinger martingale density under $Q$ exists. Thus, a direct application of Corollary 4.7 of Choulli et al. (2007) (see also Theorem 3.1 of Choulli and Stricker(2009) for complete and other characterizations), leads to the first equality in (\ref{ZtildeZ}). The second equality in (\ref{ZtildeZ}) follows from Yor's formula (i.e. ${\cal E}(X){\cal E}(Y)={\cal E}(X+Y+[X,Y])$) and (\ref{HZtilde}). Hence, in the remaining part of this proof, we will focus on proving (\ref{HZtilde}) and (\ref{HellingerAc}). To this end, we start by calculating the Hellinger process for $\widetilde Z$ as follows.
\begin{equation}\label{HellingerforZtilde}
\begin{array}{llll}
 q(q-1)h^{(q)}(\widetilde Z, Z)= q(q-1) h^{(q)}(\widetilde Z, Q)= q(q-1){{(p-1)^2}\over{2}}\widetilde\beta^{tr}c{\widetilde\beta}\cdot A+\\
 \\
\hskip 2cm +
 \displaystyle \int \left[(1+\widetilde\beta^{tr} x)^{q(p-1)}(\widetilde\gamma_Q)^{-q}-1+q-q(1+\widetilde\beta^{tr} x)^{p-1}/\widetilde\gamma_Q\right]F(dx)\cdot A+\\
 \\
\hskip 2cm +
 \displaystyle \sum (1-a^Q)\left[(\widetilde\gamma_Q)^{-q}-1-q((\widetilde\gamma_Q)^{-1}-1)\right].\end{array}\end{equation}
 Therefore, since on the set $\{\Delta A=0\}$ we have $\widetilde\gamma_Q=1$, we deduce that (\ref{HellingerAc}) follows immediately from the above equation. By taking the jumps in both sides of (\ref{HellingerforZtilde}), and using the fact that $\Delta A c=0$, $b_t\Delta A_t=\int x F_t^Q(dx)\Delta A_t=\int x\nu^Q(\{t\},dx)$ and $
 \int x(1+\widetilde\beta^{tr} x)^{q(p-1)}\nu^Q(\{t\},dx)=0$ (that is derived from (\ref{martingaleunderZ})), we get 
 $$\begin{array}{llll}
  q(q-1)\Delta h^{(q)}(\widetilde Z, Z)=\widehat u^{Q}(1-a^Q+\widehat u^{Q})^{-q}-a^Q-q\left(\widehat u^{Q}/(1-a^Q+\widehat u^{Q})-a^Q\right)+\\
  \\
  \hskip 4cm +(1-a^Q)\left[(1-a^Q+\widehat u^{Q})^{-q}-1-q((1-a^Q+\widehat u^{Q})^{-1}-1)\right],
 \end{array} $$
 where $\widehat u^{Q}=\int (1+\widetilde\beta^{tr} x)^{p-1}\nu^Q(\{t\},dx)=\widetilde\gamma_Q-1+a^Q$. Thus, after simplifying the above obtained equation, we get (\ref{HZtilde}), and the proof of the proposition  is completed.\qed





\vskip 1cm 
\centerline{\bf\Large References}
\vskip 0.35cm
Anthropelos, M. (2013): Forward Exponential Performances: Pricing and Optimal Risk Sharing. Preprint. http://arxiv.org/abs/1109.3908.\\

Berrier, F., and L. C. G. Rogers. Tehranchi, M.R. (2009): A Characterization of Forward Utility Functions. mimeo, Statistical Laboratory, University of Cambridge.\\

Clark, S. A. (1996): The random utility model with an infinite choice space, Economic Theory, \textbf{7}(1), 179--189.\\

Cohen, M. A., Random utility systems -- the infinite case, Journal of Mathematical Psychology, \textbf{2}2(1), 1--23, 1980.\\

Choulli, T. Deng, J. and Ma, J. (2013): How Non-Arbitrage, Viability and Num\'eraire Portfolio are Related. Preprint of University of Alberta (25 pages).\\

Choulli, T. and Ma, J. (2013): An Optimal Deflator and Forward Utilities. Work in progress (25 pages).\\

Choulli, T., Ma, J. and Morlais, M. (2011): Three essays on exponential hedging with variable exit times. Musiela Festisch, Springer.\\

Choulli, T. and Stricker, Ch. (2009): Comparing the minimal Hellinger martingale measure of order $q$ and the $q$-optimal martingale measure. Stochastic Processes and their Applications 19, 1368--1385.\\

Choulli, T. and Stricker, C. (2005):
 Minimal entropy-{H}ellinger martingale measure in incomplete markets. Math. Finance \textbf{15}(3), 465--490.\\

Choulli, T. and Stricker, C. (2006):
 More on minimal entropy-{H}ellinger martingale measures.
 Math. Finance \textbf{16}(1), 1--19.\\

Choulli, T., Stricker, C., and Li, J. (2007):
 Minimal {H}ellinger martingale measures of order {$q$}.
 Finance Stoch. \textbf{11}(3), 399--427.\\

Cvitanic, J. and Karatzas, I. (1992): Convex duality in constrained portfolio optimization, The Annals of Applied Probability : 767-818.\\

Cvitanic, J., Schachermayer, W., and Wang, H. (2001): Utility maximization in incomplete markets with random endowment." Finance and Stochastics 5,2, 259--272.\\

Dellacherie, C. and Meyer, P-A. (1980): Th\'eorie des martingales. Chapter V to
VIII. Hermann.\\

Evans, J., Henderson, V., and Hobson, D. ( 2008): Optimal timing for an
indivisible asset sale. Math. Finance \textbf{18}, 545--567.\\

Fisher, I. (1931): The impatience theory of interest. AER.\\

G\"oll, T., and Kallsen, J. (2003): A complete explicit solution to the log-optimal portfolio problem. The Annals of Applied Probability 13,2, 774--799.\\

Hakansson, N. H. (1969): Optimal investment and consumption strategies
under risk, an uncertain lifetime and insurance. Int. Econ. Rev.
\textbf{10}, 443--466.\\

Harrison, J. M., and Pliska, S.R.(1981) "Martingales and stochastic integrals in the theory of continuous trading." Stochastic processes and their applications 11,3, 215-260.\\

He, S., Wang, J.G. and Yan, J.A. ((1992): Semimartingale theory and stochastic calculus, Science Press and CRC Press, Beijing.\\

Henderson, V.: Valuing the option to invest in an incomplete market.
Math. Finance Econ. \textbf{1}(2), 103--128, 2007.\\

Henderson, V. and Hobson, D. (2007): Horizon-unbiased utility functions.
Stoch. Process. Appl. \textbf{117}(11), 1621--1641.\\

Jacod, J. (1979): Calcul stochastique et probl\`emes de martingales. no.
714 in Lecture
  Notes in Mathematics. Springer, Berlin .\\

Jacod, J. and Shiryaev, A.: Limit Theorems for Stochastic Processes,
2ed edn. Springer, 2002.\\

Kabanov, Y. M. (2013): On local martingale deflators and market portfolios: The Karadaras Theorem, private communication.\\

Kabanov, Y. M., Liptser, R. S., Shiryaev, A. N. (1984): On the proximity in variation of probability measures, Dokl. Akad.
Nauk SSSR, 278, 2, 1984. English translation: Soviet Math. Dokl., \textbf{30}(2).\\

Kabanov, Y. M. (1985): An estimate of closeness in variation of probability measures,
Probab, Theory and Its Appl., \textbf{30}(2).\\

Kabanov, Y. M., Liptser, R. S., Shiryaev, A. N.(1986): On the variation distance for probability measures defined on a filtered space, Probability theory and related fields, \textbf{71}(1), 19--35.\\

Karatzas, I. and Wang, H. (2000): Utility maximization with discretionary
stopping. SIAM J. Control Optim. \textbf{39}(1), 306--329.\\

Karatzas, I., and Zitkovic, G. (2003): Optimal consumption from investment and random endowment in incomplete semimartingale markets." The Annals of Probability 31,4, 1821--1858.\\

Kardaras, C.(2012): {Market viability via absence of arbitrage of the first kind.}  Finance and Stochastics, 1-17 (2012).\\

Kramkov, D., Schachermayer, W. (1999): 
\newblock The asymptotic elasticity of utility functions and optimal
investment in incomplete markets, The Annals of Applied Probability, {\bf 9}(3), 904--950.\\

Ma, J. (2013): Minimal Hellinger Deflators and HARA Forward Utilities with Applications: Hedging with Variable Horizon, PhD thesis, University of Alberta.\\

McFadden, D. and Richter, M. K. (1990): Stochastic rationality and revealed stochastic preference, Preferences, uncertainty, and optimality, essays in honor of Leo Hurwicz, 161--186.\\

Merton, R.C. (1971): 
\newblock Optimum consumption and portfolio rules in a continuous-time model, Journal of Economic Theory, Elsevier, {\bf 3}(4), 373-413.\\

Merton, R.C. (1973): Theory of rational option pricing, The Bell Journal of Economics and Management Science, 141-183.\\

 Musiela, M. and Zariphopoulou, T. (2007): Investment and valuation under backward and forward dynamic exponential utilities in a stochastic factor
 model. In: Fu, M., Jarrow, R., Yen, J., Elliot, R. (eds.) Advances in
Mathematical Finance, pp. 303--334. Birkhauser, Boston.\\

 Musiela, M. and Zariphopoulou, T. (2009a): Backward and forward utilities and the associated indifference
pricing systems: The case study of the binomial model. In: Carmona,
R. (ed.) Indifference Pricing, pp. 3--43. Princeton University Press.\\

Musiela, M. and Zariphopoulou, T. (2009b): Portfolio choice under dynamic
investment performance criteria. Quant. Finance \textbf{9}(2),
161--170.\\

Musiela, M. and Zariphopoulou, T. (2010): Portfolio choice under space-Time
monotone performance criteria. SIAM J. Financial Math. \textbf{1},
326--365.\\

%
%


Rockafellar, R.T. (1970): Convex Analysis. Princeton University Press,
Princeton.\\

Suppes, P., Krantz, D. H., Luce, R. D.,  Tversky, A. ((1989) Foundations of Measurement, vol. II. New York: Academic Press.\\

Yaari, M.E. ((1965): Uncertain lifetime, life insurance and the theory of
the consumer. Rev. Econ. Stud. \textbf{32}, 137--158.\\

Zariphopoulou, T. and Zitkovic, G. (2010): Maturity-independent risk measures." SIAM Journal on Financial Mathematics, 1,1, 266--288.\\

Zitkovic, G. (2009): A dual characterization of self-generation and log-affine forward performances. Ann. Appl.
 Prob.
 \textbf{19}(6), 2176--2270.\\


\end{document}